\newcommand{\cleqn}{\setcounter{equation}{0}}
\newcommand{\clth}{\setcounter{theorem}{0}}
\newcommand {\sectionnew}[1]{\section{#1}\cleqn\clth}
\newtheorem{theorem}{Theorem}[section]
\newtheorem{lemma}[theorem]{Lemma}
\newtheorem{proposition}[theorem]{Proposition}
\newtheorem{corollary}[theorem]{Corollary}
\newtheorem{definition}[theorem]{Definition}
\newcommand{\C}{\mathcal{C}}
\renewcommand{\P}{\mathcal{P}}
\newcommand{\W}{\mathcal{W}}
\renewcommand{\L}{\mathcal{L}}
\newcommand{\Z}{\mathbb{Z}}
\newcommand{\N}{\mathbb{N}}
\newcommand{\B}{\mathcal{B}}
\renewcommand{\t}{\mathbf{t}}
    \newcommand{\Rmnum}[1]{\expandafter\@slowromancap\romannumeral #1@}
\def\res{\mathop{\rm Res}\nolimits}
\def\({\left(}
\def\){\right)}
\def\[{\left[}
\def\]{\right]}
\def\d{\partial}
\def\ga{\alpha}
\begin{document}

\title[Extended  bigraded Toda Hierarchy]{Tau function and  Hirota bilinear equations for the Extended  bigraded Toda Hierarchy}
\author{Chuanzhong Li\dag\ddag
, Jingsong He $^*$ \dag \S, Ke Wu\ddag, Yi Cheng \dag }
\dedicatory {  \dag Department of Mathematics, USTC, Hefei, 230026 Anhui, P.\ R.\ China\\
 \S Department of Mathematics, NBU, Ningbo, 315211 Zhejiang, P.\ R.\ China\\
 \ddag Department of Mathematics, CNU, 100037 Beijing, P.\ R.\ China }

\thanks{$^*$ Corresponding author:\ hejingsong@nbu.edu.cn,jshe@ustc.edu.cn,}
\texttt{}

\date{}

\begin{abstract}
In this paper we generalize the Sato theory to the extended bigraded
Toda hierarchy (EBTH). We revise the definition of the Lax
equations,
 give the Sato equations, wave operators, Hirota bilinear identities (HBI) and show the existence
of $tau$ function $\tau(t)$. Meanwhile we prove the validity of  its
Fay-like identities and  Hirota bilinear equations (HBEs) in terms
of
 vertex operators whose coefficients  take values in the algebra of differential
operators. In contrast with HBEs of the usual integrable system, the
current HBEs are equations of product of operators involving
$e^{\partial_x}$ and $\tau(t)$.

\end{abstract}


\maketitle
\noindent Mathematics Subject Classifications(2000):  37K10, 37K20,37K40,35Q58.\\
Keywords:  Extended bigraded Toda hierarchy, Hirota
bilinear identities, Hirota
bilinear equations,  Vertex operators, Sato equations, Fay-like identities, Tau function.\\
\allowdisplaybreaks
 \setcounter{section}{0}

\sectionnew{Introduction}
The Toda lattice equations is a set of nonlinear evolutionary differential-difference equations
introduced by Toda (\cite{Toda}, \cite{Toda book}) describing an infinite system of masses on a line that interact through
an exponential force which is used to explain nonergodic character in the well-known Fermi-Pasta-Ulam
paradox. It was soon realized that this lattice equations is a completely integrable system,
i.e. admits infinite conserved quantities and exact analytic solutions. It has important applications in many different fields such
 as classical and quantum fields  theory. For our best knowledge, there are at least three important extensions of
 Toda lattice equation. The first one is the Toda hierarchy \cite{UT}, which is in fact a two-dimensional
extended hierarchy through infinite-dimensional matrix inspired by the Sato theory\cite{Sato Fay identity}. Recently, considering application
 to 2D topological fields  theory  and  the theory of Gromov-Witten invariants (\cite{Z}, \cite{D witten},
 \cite{witten},\cite{dubrovin}) of Toda lattice hierarchy, one replaced the discrete variables with continuous
one.  After continuous `` interpolation'' \cite{CDZ} to the whole Toda lattice hierarchy, it was found the
flow of spatial translations was missing. In order to get a complete family of flows \cite{DZ}, the interpolated
Toda lattice hierarchy was extended into the so-called extended Toda hierarchy(ETH) \cite{CDZ}, which is the second
extension  of the Toda lattice equations. It was  firstly conjectured and then shown (\cite{Z}, \cite{DZ}, \cite{Ge}) that
the extended Toda hierarchy is the hierarchy describing the Gromov-Witten invariants of $CP^1$ by matrix models
 \cite{matrix model} which describe in the large $N$ limit of the $CP^1$ topological sigma model.
The HBEs of the ETH are given by Milanov 's work\cite{M}. The third
extension of Toda lattice equations is the extended bigraded Toda hierarchy(EBTH),  which are discovered independently two
times from different concerns. The dispersionless version of extended bigraded Toda hierarchy was firstly
introduced by S. Aoyama, Y. Kodama in \cite{Kodama CMP}. In the dispersionless limit, the EBTH can be obtained from the
dispersionless KP hierarchy.  More recently, the extended bigraded Toda hierarchy was re-introduced by Gudio
Carlet \cite{C} who hoped that EBTH might also be relevant for some applications in 2D topological fields
theory and in the theory of Gromov-Witten invariants. Specifically, Carlet \cite{C} generalized the Toda
lattice hierarchy by considering $N+M$ dependent variables and used them to provide a Lax pair definition of
the extended bigraded Toda hierarchy. On the base of \cite{C}, Todor. E. Milanov and Hsian-Hua Tseng \cite{TH}
described conjecturally one kind of Hirota bilinear equations (HBEs) which was proved to  govern the
Gromov-Witten theory of orbiford $c_{km}$. This naturally inspires us to consider the Sato theory of
EBTH because the  HBEs are the core knowledge of the integrable systems.

Sato and Sato proved one kind of algebraic identity in theorem of \cite{Sato Fay identity} about tau function of KP
hierarchy  which is now called the Fay-identity\cite{Fay book}. There are some important extensions, such as differential
Fay-identity and its implication \cite{Adler and P van Moerbeke}. Furthermore,
Takasaki and Takebe derived the differential Fay identity from the Hirota bilinear identity(HBI) of KP hierarchy
and showed that the differential Fay identity is equivalent to KP hierarchy in appendix of \cite{Takasaki Takebe}.
In \cite{Takasaki fay}, it shows that differential (or difference, for the Toda hierarchy)
Fay identities are  generating functional expression of the full set of auxiliary linear equations and  equivalent to the integrable hierarchies themselves.
Lee-Peng Teo derived the Fay-like identities of tau function
for the Toda lattice hierarchy from the HBI and  prove that the Fay-like identities are equivalent to the
hierarchy \cite{Fay-like}.

 So the purpose of this paper is to  establish Sato formulation of EBTH including its Lax equations, Sato
equations, wave operators, HBIs, tau-functions, Fay-like identities and HBEs. An important feature of the current HBEs
is that they are not differential  equations of functions as the case of usual integrable systems. Actually, HBEs of
the EBTH are equations of product of operators involving $e^{\partial_x}$ and $\tau(t)$. Here $\tau$ function is
regarded  as a zero order operator. In other words, the coefficients of vertex operators with a form of
$e^{\partial_x+ \sum_{\alpha,n}\theta_{\alpha,n}\partial_{t_{\alpha,n}}+ \sum_{\beta,m}\theta_{\beta,m}\partial_{t_{\beta,m}}}$ in the HBEs,
are not scalar-valued  but take values in the algebra of differential operators, i.e. $\{\d_x,x\}$.
This  subtle point of HBEs can be found in Milanov's work \cite{M}. Our work is an highly nontrivial extension of
the results in \cite{M} which is about ETH. We would like to stress that  the current HBEs in section 5 are
different from the one in \cite{TH}.

The paper is organized as follows. In Section 2, we redefine  the Lax equations using  the
roots and the logarithms of the Lax operator $\L$ and give  Zakharov-Shabat equation and Sato equations for the
EBTH. By using the wave operators and their symbols, some bilinear identities are given in Section 3. In Section 4
we define the tau-function of EBTH and  prove its existence, moreover we give some Fay-like identities from HBIs
under some special cases. In Section 5  we give the HBEs of EBTH in the form of tau function and vertex
operators, meanwhile we prove its validity with the help of HBI. In Section 6  we give the HBEs of bigraded Toda
hierarchy (BTH) as a corollary. Section 7 is devoted to conclusions and discussions.

\sectionnew{ The EBTH }
We describe the Lax form of the EBTH following \cite{C}. Introduce
firstly the lax operator
\begin{equation}\L=\Lambda^{N}+u_{N-1}\Lambda^{N-1}+\dots + u_{-M}
\Lambda^{-M}
\end{equation}
which can be expressed in the following two different ways
  \begin{eqnarray}
  \label{dressing}\L=\P_L\Lambda^N\P_L^{-1} = \P_R
  \Lambda^{-M}\P_R^{-1}.
  \end{eqnarray}
  Here, $N,M \geq1$ are two fixed positive integers and $u_{-M}$ is a
non-vanishing function. The variables $u_j$ are functions of the
spatial variable $x$ and the
  shift operator $\Lambda$ acts on a function $a(x)$ by $\Lambda a(x) = a(x + \epsilon
  )$,
i.e. $\Lambda$ is equivalent to  $e^{\epsilon\partial_{x}}$ where
the spacing unit $``\epsilon"$ is called string coupling constant.
The operators $\P_L$ and $ \P_R$ have the following forms
\begin{eqnarray}
&& \P_L=1+w_1\Lambda^{-1}+w_2\Lambda^{-2}+\ldots,
\label{dressP}\\
&& \P_R=\tilde{w_0}+\tilde{w_1}\Lambda+\tilde{w_2}\Lambda^2+ \ldots,
\label{dressQ} \end{eqnarray} where $\tilde{w_0}$ is  not zero.
The inverse operators of $P_L$ and $P_R$ are given by
\begin{eqnarray}
\P_L^{-1} &=& 1+\Lambda^{-1}w_1'+\Lambda^{-2}w_2'+\ldots, \label{dressPinv}\\
\P_R^{-1}&=&\tilde w'_0 + \Lambda\tilde w'_1 +  \Lambda^2\tilde w'_2
+\ldots. \label{dressQinv}
\end{eqnarray}
Note that the operator $\Lambda^i$ are fixed at the left side of coefficients in inverse operators.
The uniqueness is up to multiplying $\P_L$ and $\P_R$ from the right
 by operators  in the form  $1+
a_1\Lambda^{-1}+a_2\Lambda^{-2}+...$ and $\tilde{a}_0 +
\tilde{a}_1\Lambda +\tilde{a}_2\Lambda^2+\ldots$ respectively whose
coefficients are independent of $x$. From the first identity of
eq.\eqref{dressing}, we can easily get the relation of $u_i$ and
$w_j$ as following
\begin{eqnarray}
  u_{N-1}&=&w_1(x)-w_1(x+N\epsilon), \\
  u_{N-2}&=&w_2(x)-w_2(x+N\epsilon)-(w_1(x)-w_1(x+N\epsilon))w_1(x+(N-1)\epsilon), \\\notag
   u_{N-3}&=&w_3(x)-w_3(x+N\epsilon)- [w_2(x)-w_2(x+N\epsilon)
  -(w_1(x)-w_1(x+N\epsilon))w_1(x+(N-1)\epsilon)]\\
  &&w_1(x+(N-2)\epsilon)-(w_1(x)-w_1(x+N\epsilon))w_2(x+(N-1)\epsilon), \\
  \notag \cdots &\cdots &\cdots
\end{eqnarray}\\
Moreover, by using the second identity of eq.\eqref{dressing} and
the non-vanishing character of $\tilde{w_0}$, we can also easily get
the relation of $u_i$ and $\tilde w_j$ formally as following
\begin{eqnarray*}
  u_{-M}&=&\frac{\tilde w_0(x)}{\tilde w_0(x-M\epsilon)}, \\
  u_{-M+1}&=&\frac{\tilde w_1(x)-\frac{\tilde w_0(x)}{\tilde w_0(x-M\epsilon)}\tilde w_1(x-M\epsilon)}{\tilde w_0(x-(M-1)\epsilon)}, \\\notag
   u_{-M+2}&=&\frac{\tilde w_2(x)-\frac{\tilde w_0(x)}{\tilde w_0(x-M\epsilon)}\tilde w_2(x-M\epsilon)-\frac{\tilde w_1(x)-\frac{\tilde w_0(x)}
   {\tilde w_0(x-M\epsilon)}\tilde w_1(x-M\epsilon)}{\tilde w_0(x-(M-1)\epsilon)}\tilde w_1(x-(M-1)\epsilon)}{\tilde w_0(x-(M-2)\epsilon)},\\
  &&
\\
  \notag \cdots &\cdots &\cdots\\\notag
 u_{N-1}&=&\frac{\tilde w_{M+N-1}-u_{-M}\tilde w_{M+N-1}(x-M\epsilon)
   -\dots-u_{N-2}\tilde w_1(x+(N-2)\epsilon)}{\tilde w_0(x+(N-1)\epsilon)},\\\notag
 u_N=  1&=&\frac{\tilde w_{M+N}-u_{-M}\tilde w_{M+N}(x-M\epsilon)
   -\dots-u_{N-1}\tilde w_1(x+(N-1)\epsilon)}{\tilde
   w_0(x+N\epsilon)}.
\end{eqnarray*}\\
To write out  explicitly the Lax equations  of EBTH,
  fractional powers $\L^{\frac1N}$ and
$\L^{\frac1M}$ was defined by
\begin{equation}
  \notag
  \L^{\frac1N} = \Lambda+ \sum_{k\leq 0} a_k \Lambda^k , \qquad \L^{\frac1M} = \sum_{k \geq -1} b_k
  \Lambda^k,
\end{equation}
with the relations
\begin{equation}
  \notag
  (\L^{\frac1N} )^N = (\L^{\frac1M} )^M = \L.
\end{equation}
It was  stressed that $\L^{\frac1N}$ and $\L^{\frac1M}$ are two
different operators even if $N=M(N, M\geq 2)$ in \cite{C} due to two different dressing operators. They can also be
expressed as following
\begin{equation}
  \notag
  \L^{\frac1N} = \P _{L}\Lambda\P_{L}^{-1}, \qquad \L^{\frac1M} = \P_{R}\Lambda^{-1} \P_{R}^{ -1}.
\end{equation}
Moreover, as \cite{C} it is necessary  to define the  following two
logarithms
\begin{subequations}  \notag
 \begin{align}
  &\log_+ \L = \P_{L} N \epsilon \partial \P_{L}^{-1} = N \epsilon \partial - N \epsilon \P_{Lx} \P_{L}^{-1}
  = N \epsilon \partial + 2 N \sum_{k > 0} W_{-k}(x) \Lambda^{-k}, \\
  &\log_- \L = - \P_{R}M \epsilon\partial\P_{R}^{-1} = - M \epsilon \partial + M \epsilon \P_{Rx} \P_{R}^{-1}
   = -M \epsilon \partial + 2 M \sum_{k \geq 0} W_k(x) \Lambda^k,
\end{align}
\end{subequations}

where $\partial = \frac{d}{dx}$ and $\P_{Lx}, \P_{Rx}$ are differentiating $\P_{L}, \P_{R}$ respectively with respect to $x$. Now define

\begin{equation}
   \notag
  \log\L = \frac1{2N} \log_+\L + \frac1{2M} \log_- \L = \sum_{k \in \Z} W_k
  \Lambda^k.
\end{equation}

Given any difference operator $A= \sum_k A_k \Lambda^k$, the
positive and negative projections are given by $A_+ = \sum_{k\geq0}
A_k \Lambda^k$ and $A_- = \sum_{k<0} A_k \Lambda^k$. Similar to \cite{C}, we give the following definition.
\begin{definition} \label{deflax}
The Lax equations of extended bigraded Toda hierarchy is given by
\begin{equation}
  \label{edef}
\frac{\partial \L}{\partial t_{\alpha, n}} = [ A_{\alpha,n} ,\L ]
\end{equation}
for $\alpha = N,N-1,N-2, \dots, -M$ and $n \geq 0$. Here operators
$A_{\alpha ,n}$ are defined by
\begin{subequations}
\label{Adef}
\begin{align}
  &A_{\alpha,n} = \frac{\Gamma (2- \frac{\alpha-1}{N} )}{  \epsilon \Gamma(n+2 -\frac{\alpha-1}{N} ) } ( \L^{n+1-\frac{\alpha-1}N })_+ \quad \text{for} \quad \alpha = N,N-1, \dots, 1,\\
  &A_{\alpha,n} = \frac{-\Gamma (2+\frac{\alpha}{M} )}{  \epsilon \Gamma(n+2 +\frac{\alpha}{M} ) } ( \L^{n+1+\frac{\alpha}M })_- \quad \text{for} \quad \alpha = 0,-1, \dots, -M+1, \\
  &A_{-M,n} = \frac{2}{\epsilon n!} [ \L^n (\log \L - \frac12 ( \frac1M + \frac1N) \C_n ) ]_+,
\end{align}
\end{subequations}
and the constants $\C_n$ are defined by
\begin{equation}
  \label{b24}
  \C_n = \sum_{k=1}^n \frac1k ,  \C_0=0 .
\end{equation}
\end{definition}
The only difference of this definition from \cite{C} is that we add
the hierarchy when $\alpha=1$ to the hierarchies in the definition
of \cite{C}. That hierarchy is in fact the Toda hierarchy which is
also the hierarchy when $\alpha=0$. We do this because it is necessary
to introduce such an additional group of equations for proving the
existence of tau function.

 Particularly for $N=M=1$ this hierarchy
coincides with the extended Toda hierarchy introduced in \cite{CDZ}.
 If we consider $\L^{\frac1N}$ and
$\L^{\frac1M}$ are two completely independent operators, the EBTH
will imply well-known 2-dimensional Toda hierarchy. We can consider
the  EBTH as a kind of extended constrained  2-dimensional Toda
hierarchy with constraint $(\L^{\frac1N} )^N = (\L^{\frac1M} )^M$.
\\
For the convenience to lead to the Sato equation, we   define the
following operators which are also similar to \cite{C}:

\begin{equation}
  B_{\alpha , n} :=
\begin{cases}
  \frac{\Gamma ( 2- \frac{\alpha-1}{N} )}{\epsilon\Gamma (n+2 - \frac{\alpha-1}{N} )} \L^{n+1-\frac{\alpha-1}{N}}, &\alpha=N\dots 1,\\
  \frac{\Gamma ( 2 + \frac{\alpha}{M} )}{\epsilon\Gamma (n+2 + \frac{\alpha}{M} )}  \L^{n+1+\frac{\alpha}{M}}, &\alpha = 0\dots -M+1,\\
 \frac{2}{\epsilon n!} [ \L^n( \log \L - \frac{1}{2}( \frac{1}{M} + \frac{1}{N} ) c_n) ], & \alpha = -M.
  \end{cases}
\end{equation}
Then  the following lemma can be got  \cite{C}.
\begin{lemma} \label{lemD} The following equations hold
\begin{align}\label{dd}
&\partial_{\alpha,p}\L^n=[ A_{\alpha,p}, \L^n],\\
  \label{d6}
   & (\L^{\frac1N})_{t_{\alpha,p}} = [ - (B_{\alpha,p})_-, \L^{\frac1N} ], \\
  \label{d6i} &(\L^{\frac1M})_{t_{\alpha,p}} = [ (B_{\alpha,p})_+, \L^{\frac1M} ], \\
   \label{d6ii}&(\log_+ \L)_{t_{\alpha,p}} = [ -(B_{\alpha,p})_-, \log_+\L ], \\
   \label{d6iii}&(\log_- \L)_{t_{\alpha,p}} = [(B_{\alpha,p})_+ ,\log_- \L
   ],
\end{align}
and combine the last two equations into
\begin{equation}
  \label{d7}
   (\log \L)_{t_{\alpha,p}} = [ -(B_{\alpha,p})_-, \frac{1}{2N} \log_+ \L ] +
[(B_{\alpha,p})_+ ,\frac{1}{2M} \log_- \L ] .
\end{equation}
\end{lemma}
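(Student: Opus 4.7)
The plan is to reduce all identities to a pair of auxiliary Sato-type equations for the dressing operators $\P_L$ and $\P_R$, and then obtain each statement as an instance of the dressing formalism. Identity \eqref{dd} is the easiest: by the Leibniz rule applied to $\L^n=\L\cdot\L^{n-1}$ together with the Lax equation, one telescopes
\begin{equation*}
\partial_{\alpha,p}\L^n = \sum_{k=0}^{n-1}\L^k[A_{\alpha,p},\L]\L^{n-1-k} = [A_{\alpha,p},\L^n]
\end{equation*}
by induction on $n$.

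The key preliminary step is to establish the Sato-type equations
\begin{equation*}
(\P_L)_{t_{\alpha,p}}\P_L^{-1} = -(B_{\alpha,p})_-, \qquad (\P_R)_{t_{\alpha,p}}\P_R^{-1} = (B_{\alpha,p})_+.
\end{equation*}
Differentiating $\L=\P_L\Lambda^N\P_L^{-1}$ yields $\L_{t_{\alpha,p}}=[(\P_L)_{t_{\alpha,p}}\P_L^{-1},\L]$, and comparing with the Lax equation forces the difference $(\P_L)_{t_{\alpha,p}}\P_L^{-1}-A_{\alpha,p}$ to commute with $\L$. Since $\P_L=1+O(\Lambda^{-1})$ implies $(\P_L)_{t_{\alpha,p}}\P_L^{-1}$ is purely negative, and since $B_{\alpha,p}$ --- a fractional or logarithmic power of $\L$ --- lies in the centralizer of $\L$ with $(B_{\alpha,p})_+$ as its non-negative part, one matches non-negative parts: for $\alpha\in\{N,\ldots,1,-M\}$ one has $A_{\alpha,p}=(B_{\alpha,p})_+$ and the required central correction is $B_{\alpha,p}$, while for $\alpha\in\{0,\ldots,-M+1\}$ one has $A_{\alpha,p}=-(B_{\alpha,p})_-$ and the correction is zero. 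Either way the result is $(\P_L)_{t_{\alpha,p}}\P_L^{-1}=-(B_{\alpha,p})_-$. The argument for $\P_R$ is the mirror image, based on $\L=\P_R\Lambda^{-M}\P_R^{-1}$ and the fact that $(\P_R)_{t_{\alpha,p}}\P_R^{-1}$ is purely non-negative.

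Once the Sato equations are in place, the dressing formulas $\L^{1/N}=\P_L\Lambda\P_L^{-1}$, $\L^{1/M}=\P_R\Lambda^{-1}\P_R^{-1}$, $\log_+\L=\P_L(N\epsilon\partial)\P_L^{-1}$, $\log_-\L=-\P_R(M\epsilon\partial)\P_R^{-1}$ each exhibit the operator on the left as a conjugate of a $t$-independent core by $\P_L$ or $\P_R$. Differentiating any such conjugation produces a commutator with the logarithmic derivative of the dressing operator; for example
\begin{equation*}
(\L^{1/N})_{t_{\alpha,p}} = [(\P_L)_{t_{\alpha,p}}\P_L^{-1},\L^{1/N}] = [-(B_{\alpha,p})_-,\L^{1/N}],
\end{equation*}
which is \eqref{d6}. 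The identical template gives \eqref{d6i}, \eqref{d6ii}, and \eqref{d6iii}, and \eqref{d7} follows by multiplying \eqref{d6ii} by $\tfrac{1}{2N}$ and \eqref{d6iii} by $\tfrac{1}{2M}$ and adding, using $\log\L=\tfrac{1}{2N}\log_+\L+\tfrac{1}{2M}\log_-\L$.

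The main obstacle is the uniqueness step in the Sato derivation. The constraints ``commutes with $\L$'' and ``purely negative'' do not a priori pin down the central correction uniquely, since the centralizer of $\L$ contains purely negative elements such as negative powers of $\L$. The ambiguity is resolved by invoking the normalization of $\P_L$ (unique up to right multiplication by an $x$-independent formal series in $\Lambda^{-1}$, fixed by an initial condition), which is the standard dressing argument carried out for this bigraded setting in \cite{C}. The case $\alpha=-M$ is slightly more delicate because $B_{-M,n}$ contains $\log\L$, but the needed fact $[\log\L,\L]=0$ follows immediately from the dressing presentations of $\log_\pm\L$ and $[\epsilon\partial,e^{c\epsilon\partial}]=0$.
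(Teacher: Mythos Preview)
The paper's own proof is simply ``See \cite{C}'', so there is no detailed argument here to compare against; what matters is whether your sketch stands on its own. Your route --- derive Sato-type equations for $\P_L,\P_R$ first, then read off \eqref{d6}--\eqref{d6iii} by differentiating the dressing presentations --- is logically \emph{inverted} relative to the paper, which uses Lemma~\ref{lemD} (via Zakharov--Shabat) to prove the Sato equations in Theorem~\ref{t1}. That inversion is not fatal, but it means you cannot appeal to Theorem~\ref{t1} or its proof; you must close the argument independently.

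The genuine gap is the one you flag: from $[\,(\P_L)_{t_{\alpha,p}}\P_L^{-1}-A_{\alpha,p},\,\L\,]=0$ and the fact that $(\P_L)_{t_{\alpha,p}}\P_L^{-1}$ is purely negative you cannot conclude $(\P_L)_{t_{\alpha,p}}\P_L^{-1}=-(B_{\alpha,p})_-$, because the centralizer of $\L$ contains nonzero purely negative elements (e.g.\ $\L^{-1/N}$). Your proposed fix, ``invoke the normalization of $\P_L$'', does not resolve this: the normalization freedom is right multiplication by an $x$-independent series $Q(t,\Lambda)$, and choosing $Q$ so that the Sato equation holds for \emph{all} $(\alpha,p)$ simultaneously is exactly the content of Theorem~\ref{t1}, whose proof in this paper rests on the lemma you are trying to prove. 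That is a circularity.

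The clean repair is to notice that you do not actually need the full Sato equation. Set $Y=(\P_L)_{t_{\alpha,p}}\P_L^{-1}$. You have shown $Y+(B_{\alpha,p})_-$ lies in the centralizer of $\L$. In the formal $\epsilon$-setting (coefficients are formal power series in $\epsilon$), any element commuting with $\L=\P_L\Lambda^N\P_L^{-1}$ is of the form $\P_L\,c(\Lambda)\,\P_L^{-1}$ with $c$ having $x$-independent coefficients (because $d(x+N\epsilon)=d(x)$ forces $d'=0$ order by order in $\epsilon$); hence it also commutes with $\L^{1/N}=\P_L\Lambda\P_L^{-1}$ and with $\log_+\L=\P_L(N\epsilon\partial)\P_L^{-1}$. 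Therefore
\[
(\L^{1/N})_{t_{\alpha,p}}=[Y,\L^{1/N}]=[-(B_{\alpha,p})_-,\L^{1/N}],\qquad
(\log_+\L)_{t_{\alpha,p}}=[Y,\log_+\L]=[-(B_{\alpha,p})_-,\log_+\L],
\]
and the $\P_R$ side is handled symmetrically. With this adjustment your argument goes through and is a legitimate alternative to deferring to \cite{C}; the key insight you were missing is that centralizing $\L$ already forces centralizing $\L^{1/N}$ and $\log_\pm\L$, so the ambiguity in $Y$ is invisible at the level of the commutators appearing in \eqref{d6}--\eqref{d6iii}.
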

\begin{proof}
See \cite{C}.
 \end{proof}

From the lemma above, noticing that $[log_{+}\L,\L^{k/N}]=0$ and
$[log_{-}\L,\L^{k/M}]=0$, we can easily
get\begin{equation}\label{logL'}
 (log \L)_{t_{\alpha,p}}=[A_{\alpha,n},log \L]=
\begin{cases}
 [(B_{\alpha,n})_{+},log\L],  \ \ \ &when\  \alpha > 0,\\
 [(-B_{\alpha,n})_{-},log\L],  \ \ \ &when\  \alpha\leq 0.
  \end{cases}
  \end{equation}
 Using the lemma above, Carlet proved the following
proposition.

\begin{proposition} \label{propZS}
If $\L$ satisfies the Lax equations \eqref{edef}, then the
following Zakharov-Shabat equations hold \cite{C}
\begin{equation}
\label{zs}  (A_{\alpha,m})_{t_{\beta,n}} - (A_{\beta,
n})_{t_{\alpha,m}} + [ A_{\alpha,m} , A_{\beta,n} ] =0
\end{equation}
for $-M \leq \alpha, \beta \leq N$ ,  $m, n \geq 0$.
\end{proposition}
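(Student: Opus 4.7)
The plan is a direct verification by computing both time derivatives in the Zakharov--Shabat expression and matching them against the bracket $[A_{\alpha,m},A_{\beta,n}]$, using Lemma \ref{lemD} throughout. The underlying algebraic mechanism is a standard projection identity: for any two commuting difference operators $P,Q$ one has $0=([P,Q])_+ = [P_+,Q_+] + [P_+,Q_-]_+ + [P_-,Q_+]_+$, so that
\[
[P_+,Q_+] = -[P_+,Q_-]_+ - [P_-,Q_+]_+ .
\]
Since each $A_{\alpha,m}$ is (up to a scalar) either $(B_{\alpha,m})_+$ or $-(B_{\alpha,m})_-$, and each $B_{\alpha,m}$ is a power of $\L$ (or contains $\log\L$), two such $B$-operators always commute, and the identity above is the only nontrivial ingredient needed.

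First I would treat the generic case in which neither $\alpha$ nor $\beta$ equals $-M$. Here $A_{\alpha,m} = \pm c_{\alpha,m}(\L^{p})_{\pm}$ and $A_{\beta,n} = \pm c_{\beta,n}(\L^{q})_{\pm}$ for appropriate fractional exponents $p,q$. Equation \eqref{dd} gives $\partial_{t_{\beta,n}}\L^{p} = [A_{\beta,n},\L^{p}]$, and projecting onto $(\cdot)_{\pm}$ produces $(A_{\alpha,m})_{t_{\beta,n}}$; symmetrically for the other derivative. Substituting these into the Zakharov--Shabat combination and splitting $\L^p = (\L^p)_+ + (\L^p)_-$, $\L^q = (\L^q)_+ + (\L^q)_-$, a short rearrangement collapses everything to the commuting-projection identity and the expression vanishes. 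The subcases $(+,+)$, $(+,-)$, $(-,+)$, $(-,-)$ each work in the same way but with different sign bookkeeping.

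Next I would handle the mixed cases involving $\alpha=-M$ (and by symmetry $\beta=-M$), where $A_{-M,n}$ contains $\log\L$. The crucial observation is \eqref{logL'}, which shows that $\partial_{t_{\beta,n}}\log\L = [A_{\beta,n},\log\L]$ as well, so powers of $\L$ and $\log\L$ can be treated uniformly: the commutator $[\log\L,\L^{k/N}]$ and $[\log\L,\L^{k/M}]$ vanish, and therefore $B_{-M,n}$ commutes with every other $B_{\beta,m}$. The $\frac12(\tfrac1M+\tfrac1N)\C_n$ constants in the definition of $B_{-M,n}$ drop out of every commutator, and the computation again reduces to the generic projection identity applied to the pair $(P,Q) = (B_{-M,n}, B_{\beta,m})$. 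The case $\alpha=\beta=-M$ is analogous, using \eqref{d6ii}--\eqref{d7} to compute the derivatives of $\log_\pm\L$ separately.

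The main obstacle will be the mixed-sign case, where one operator is a $(+)$-projection and the other a $(-)$-projection: one cannot simply apply a single form of the identity, and the trick is to write $(B)_+ = B - (B)_-$ so that the $B$-piece, being a power of $\L$, commutes with everything and contributes zero, leaving only cross projection terms whose sum vanishes by the commuting-projection identity. Once this bookkeeping is in place, the $\log\L$ cases follow without new ideas because \eqref{logL'} reduces them to the same algebra.
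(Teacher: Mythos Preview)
Your overall strategy---compute the time derivatives of the $A$'s via Lemma \ref{lemD} and reduce the Zakharov--Shabat combination to projection identities---is the standard route (the paper itself gives no proof here, deferring entirely to Carlet \cite{C}). However, the assertion on which you hang everything, that ``two such $B$-operators always commute,'' and in particular that $[\log\L,\L^{k/N}]$ and $[\log\L,\L^{k/M}]$ vanish, is not justified. What is available is only $[\log_+\L,\L^{k/N}]=0$ and $[\log_-\L,\L^{k/M}]=0$ (both sides dressed by the \emph{same} $\P_L$ or $\P_R$). The cross commutators $[\log_-\L,\L^{k/N}]$, $[\log_+\L,\L^{k/M}]$, and more generally $[\L^{p/N},\L^{q/M}]$, involve operators dressed by \emph{different} wave operators, and there is no reason for them to vanish; $\L^{1/N}$ and $\L^{1/M}$ are two genuinely different roots of $\L$, not functional calculus in a single operator. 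So in the mixed-sign subcase and in the $\alpha=-M$ cases your appeal to the commuting-projection identity $([P,Q])_+=0$ does not go through as written.

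The repair is to exploit Lemma \ref{lemD} asymmetrically rather than to force commutation. Equations \eqref{d6}, \eqref{d6i} and \eqref{logL'} give, for every flow index, \emph{two} derivative formulas, one with a $-(B)_-$ bracket and one with a $(B)_+$ bracket. If in each of the two time derivatives you pick the version whose bracket already coincides with the $A$ of the \emph{other} index, the Zakharov--Shabat expression collapses with no commutation hypothesis at all. For instance, with $\alpha\in\{1,\dots,N\}$ and $\beta\in\{-M+1,\dots,0\}$, using \eqref{d6} for $\partial_{\beta,n}B_{\alpha,m}$ and \eqref{d6i} for $\partial_{\alpha,m}B_{\beta,n}$ yields, with $P=B_{\alpha,m}$, $Q=B_{\beta,n}$, simply $([P_+,Q_-])_++([P_+,Q_-])_--[P_+,Q_-]=0$. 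The $\alpha=-M$ cases work identically via \eqref{logL'}. Genuine commutation $[B_{\alpha,m},B_{\beta,n}]=0$ is then needed only when both indices lie in $\{1,\dots,N\}$ or both in $\{-M+1,\dots,0\}$, where it is immediate since both operators are conjugates by the same dressing $\P_L$ or $\P_R$.
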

 Using the Zakharov-Shabat eqs.(\ref{zs}) we can prove  the following
corollary.
\begin{corollary}The following  relation holds
\begin{equation}
[\d_{t_{\beta,n}}, \d_{t_{\alpha,m}}]\L=0
\end{equation}
for $-M \leq \alpha, \beta \leq N$ ,  $m, n \geq 0$.
\end{corollary}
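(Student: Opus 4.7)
The plan is to perform a direct calculation expanding the commutator of the two time derivatives using the Lax equations \eqref{edef}, then use the Jacobi identity to rewrite the resulting double commutators, and finally recognize the Zakharov-Shabat combination as the quantity that vanishes by Proposition~\ref{propZS}.

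More concretely, I would start from
\[
[\d_{t_{\beta,n}},\d_{t_{\alpha,m}}]\L \;=\; \d_{t_{\beta,n}}\bigl([A_{\alpha,m},\L]\bigr) \;-\; \d_{t_{\alpha,m}}\bigl([A_{\beta,n},\L]\bigr),
\]
differentiate each commutator by Leibniz, and substitute $\d_{t_{\beta,n}}\L = [A_{\beta,n},\L]$ and $\d_{t_{\alpha,m}}\L = [A_{\alpha,m},\L]$ to get
\[
[\d_{t_{\beta,n}},\d_{t_{\alpha,m}}]\L \;=\; \bigl[(A_{\alpha,m})_{t_{\beta,n}} - (A_{\beta,n})_{t_{\alpha,m}},\,\L\bigr] + \bigl[A_{\alpha,m},[A_{\beta,n},\L]\bigr] - \bigl[A_{\beta,n},[A_{\alpha,m},\L]\bigr].
\]
Apply the Jacobi identity to the last two terms to fold them into $\bigl[[A_{\alpha,m},A_{\beta,n}],\L\bigr]$. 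Collecting everything inside a single outer commutator with $\L$ produces
\[
[\d_{t_{\beta,n}},\d_{t_{\alpha,m}}]\L \;=\; \bigl[(A_{\alpha,m})_{t_{\beta,n}} - (A_{\beta,n})_{t_{\alpha,m}} + [A_{\alpha,m},A_{\beta,n}],\,\L\bigr].
\]
The inner expression vanishes by the Zakharov-Shabat equation \eqref{zs}, so the left hand side is zero, which is exactly the claim.

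This is essentially a routine verification, so there is no real obstacle: the only subtlety is that all the relevant operators $A_{\alpha,m}$ (including the logarithmic cases $\alpha=-M$) already appear in Proposition~\ref{propZS}, and the derivation $\d_{t_{\beta,n}}$ commutes with formation of commutators of difference operators, so the Leibniz step is justified in the same operator algebra used throughout Section~2. Therefore only two or three lines of calculation are needed.
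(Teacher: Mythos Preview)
Your argument is correct and is exactly the standard computation the paper has in mind: it states only that the corollary follows ``using the Zakharov-Shabat eqs.~\eqref{zs}'' without spelling out any details, and your Leibniz-plus-Jacobi expansion reducing $[\d_{t_{\beta,n}},\d_{t_{\alpha,m}}]\L$ to $[\,(A_{\alpha,m})_{t_{\beta,n}} - (A_{\beta,n})_{t_{\alpha,m}} + [A_{\alpha,m},A_{\beta,n}],\,\L\,]=0$ is precisely that deduction.
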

After the corollary above, we can prove the following lemma using the method in \cite{UT}.
\begin{lemma} \label{zs corollary}The following two equations hold
\begin{align}\label{compatible zero curvature}
 &\partial_{\beta,n}(B_{\alpha,m})_{-} - \partial_{\alpha,m}(B_{\beta,
n})_{-} - [ (B_{\alpha,m})_{-} , (B_{\beta,n})_{-} ] =
0,\\
 &-\partial_{\beta,n}(B_{\alpha,m})_{+} + \partial_{\alpha,m}(B_{\beta,
n})_{+} - [ (B_{\alpha,m})_{+} , (B_{\beta,n})_{+} ] =0
\end{align}
here, $-M\leq\alpha,\beta\leq N$,  $m,n \geq 0$.
\end{lemma}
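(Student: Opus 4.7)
Following the method of Ueno-Takasaki \cite{UT}, my approach rests on two algebraic facts: a Lax-type formula
\[
\d_{t_{\beta,n}} B_{\alpha,m} = [A_{\beta,n}, B_{\alpha,m}]
\]
for the unprojected operators $B_{\alpha,m}$, and their pairwise commutativity $[B_{\alpha,m},B_{\beta,n}] = 0$. Each of the two lemma equations will follow by projecting this Lax formula onto the appropriate half of the algebra of difference operators and simplifying using the projected commutativity.

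The Lax-type formula I derive from Lemma \ref{lemD}: for $\alpha \in \{-M+1,\dots,N\}$, $B_{\alpha,m}$ is a constant times a rational power of $\L$ obtained from $\L^{1/N}$ or $\L^{1/M}$, so iterating \eqref{d6} or \eqref{d6i} yields its Lax action by $-(B_{\beta,n})_-$ or $(B_{\beta,n})_+$; since the difference between $A_{\beta,n}$ and these projections equals $\pm B_{\beta,n}$, which commutes with every rational power of $\L$, both expressions coincide with $[A_{\beta,n}, B_{\alpha,m}]$. For $\alpha = -M$, the same statement follows from \eqref{logL'} after multiplying by $\L^n$ and using the Leibniz rule for commutators.

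The commutativity $[B_{\alpha,m}, B_{\beta,n}] = 0$ holds because each $B_{\alpha,m}$ lies in the centralizer of $\L$ in the algebra of formal pseudo-difference operators, and this centralizer is commutative. Projecting the commutativity onto the $-$ and $+$ halves, and using that $[X_+,Y_+]$ is supported in nonnegative powers of $\Lambda$ while $[X_-,Y_-]$ is supported in strictly negative powers, gives the identities
\begin{align*}
[(B_{\alpha,m})_+,(B_{\beta,n})_-]_- + [(B_{\alpha,m})_-,(B_{\beta,n})_+]_- + [(B_{\alpha,m})_-,(B_{\beta,n})_-] &= 0, \\
[(B_{\alpha,m})_+,(B_{\beta,n})_-]_+ + [(B_{\alpha,m})_-,(B_{\beta,n})_+]_+ + [(B_{\alpha,m})_+,(B_{\beta,n})_+] &= 0.
\end{align*}

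With these ingredients I project $\d_{t_{\beta,n}} B_{\alpha,m} = [A_{\beta,n}, B_{\alpha,m}]$ onto its $-$ part for the first equation of the lemma and its $+$ part for the second, substituting $A_{\beta,n}$ by $(B_{\beta,n})_+$ when $\beta\in\{1,\dots,N\}\cup\{-M\}$ and by $-(B_{\beta,n})_-$ when $\beta\in\{0,-1,\dots,-M+1\}$, and expanding via $B_{\alpha,m} = (B_{\alpha,m})_+ + (B_{\alpha,m})_-$. In every subcase the antisymmetric difference $\d_{t_{\beta,n}}(B_{\alpha,m})_\mp - \d_{t_{\alpha,m}}(B_{\beta,n})_\mp$ becomes a linear combination of precisely the terms appearing in one of the two projected commutativity identities above, and collapses to $[(B_{\alpha,m})_\mp,(B_{\beta,n})_\mp]$. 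The main obstacle is justifying $[B_{\alpha,m}, B_{\beta,n}]=0$ in the mixed regime where the two operators come from different roots $\L^{1/N}$ and $\L^{1/M}$ (or involve $\log\L$), which requires invoking commutativity of the centralizer of $\L$ in the pseudo-difference operator algebra and verifying that $\log\L = \tfrac{1}{2N}\log_+\L + \tfrac{1}{2M}\log_-\L$ indeed lies in that centralizer; once this algebraic fact is granted, the remainder is a bookkeeping case analysis.
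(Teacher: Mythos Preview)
Your strategy is in the right Ueno--Takasaki spirit, but the route through the commutativity $[B_{\alpha,m},B_{\beta,n}]=0$ has a genuine problem in exactly the mixed regime you flag. When $\alpha\in\{1,\dots,N\}$ and $\beta\in\{-M+1,\dots,0\}$, the operator $B_{\alpha,m}$ is a power of $\L^{1/N}$ (a formal series bounded above in $\Lambda$, with an infinite tail in negative powers) while $B_{\beta,n}$ is a power of $\L^{1/M}$ (bounded below, infinite tail in positive powers); their product, and hence $[B_{\alpha,m},B_{\beta,n}]$, is not a well-defined formal series at all. The same occurs with $B_{-M,n}$, which is two-sided infinite. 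So ``commutativity of the centralizer of $\L$'' cannot be invoked: there is no single pseudo-difference algebra containing both $\L^{1/N}$ and $\L^{1/M}$ in which to form that centralizer. Your projected identity already contains the ill-defined term $[(B_{\alpha,m})_-,(B_{\beta,n})_+]$ in these cases.

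The paper avoids this by starting instead from the Zakharov--Shabat equations \eqref{zs}, which involve only the $A$'s, and then writing (in the delicate case $\alpha=-M$, $-M+1\leq\beta\leq 0$)
\[
0=[\d_{-M,n}-(B_{-M,n})_+,\,\d_{\beta,m}+(B_{\beta,m})_-]
=[\d_{-M,n}+(B_{-M,n})_-,\,\d_{\beta,m}+(B_{\beta,m})_-]+[\d_{\beta,m}+(B_{\beta,m})_-,\,B_{-M,n}].
\]
The last bracket vanishes by the Lax formula $\d_{\beta,m}B_{-M,n}=[-(B_{\beta,m})_-,B_{-M,n}]$ obtained from Lemma~\ref{lemD} and \eqref{logL'}; crucially, $(B_{\beta,m})_-$ is a \emph{finite} Laurent polynomial here, so every product appearing is well-defined. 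Your argument can be repaired in the same spirit: drop the conversion to $[A_{\beta,n},B_{\alpha,m}]$ and work directly with whichever of the two forms $\d_{\beta,n}B_{\alpha,m}=[-(B_{\beta,n})_-,B_{\alpha,m}]$ or $[(B_{\beta,n})_+,B_{\alpha,m}]$ from Lemma~\ref{lemD} is well-defined for the given $\alpha$; the cross terms then cancel without any appeal to $[B_{\alpha,m},B_{\beta,n}]=0$.
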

Proof: We now only give the proof of a case of eqs.(\ref{compatible
zero curvature})  which should be taken special care of because of
the logarithm. As eqs.\eqref{zs}, $$  \partial_{-M ,n}(A_{\beta,m})
- \partial_{\beta,m}(A_{-M, n})+ [  A_{\beta,m},A_{-M,n} ] =0$$
where $-M+1\leq\beta\leq 0,$
 i.e.
$$-\partial_{-M ,n}(B_{\beta,m})_{-} - \partial_{\beta,m}(B_{-M,
n})_{+} + [ -(B_{\beta,m})_{-} , (B_{-M,n})_{+} ] =0.$$
Eqs.\eqref{dd} lead to
\begin{eqnarray}
\label{beta dd}\partial_{\beta,m}\L^n=[ -(B_{\beta,m})_{-},
\L^n]\end{eqnarray} Considering to eqs.\eqref{logL'} and using
eqs.\eqref{beta dd}, we get
\begin{eqnarray*}
\partial_{\beta,m}(B_{-M,
n})&=&\partial_{\beta,m}(\frac{2}{\epsilon
n!}[\L^n(log\L-\frac{1}{2}(\frac{1}{M}+\frac{1}{N})\C_n)])\\
&=&[-(B_{\beta,m})_{-},\frac{2}{\epsilon
n!}\L^n[log\L-\frac{1}{2}(\frac{1}{M}+\frac{1}{N})\C_n]]\\
&=&[-(B_{\beta,m})_{-},B_{-M,n}].
\end{eqnarray*}
 Then eqs.\eqref{zs} imply
\begin{eqnarray*}
0&=&[\d_{-M,n}-(B_{-M,n})_{+},\d_{\beta,m}+(B_{\beta,m})_{-}]\\
&=&[\d_{-M,n}+(B_{-M,n})_{-}-B_{-M.n},\d_{\beta,m}+(B_{\beta,m})_{-}]\\
&=&[\d_{-M,n}+(B_{-M,n})_{-},\d_{\beta,m}+(B_{\beta,m})_{-}]
+[\d_{\beta,m}+(B_{\beta,m})_{-},B_{-M,n}]\\&=&[\d_{-M,n}+(B_{-M,n})_{-},\d_{\beta,m}+(B_{\beta,m})_{-}].
\end{eqnarray*}

This is just
$$\partial_{-M ,n}(B_{\beta,m})_{-} -
\partial_{\beta,m}(B_{-M, n})_{-} + [ (B_{-M,n})_{-} , (B_{\beta,m})_{-}
] =0.$$
 One  can further verify other
identities easily by the same way.   \qed  \\
 Considering the lemma
above we can prove the following theorem.
\begin{theorem}
\label{t1} $\L$ is a solution to the EBTH if and only if there is
a pair of dressing operators $\P_L$ and $\P_R$, which satisfies
the following Sato  equations{\allowdisplaybreaks}
\begin{eqnarray}
\label{bn1}
\d_{\alpha,n}\P_L & =- ( \B_{\alpha,n}  )_- \P_L, \\
\label{bn1'} \d_{\alpha,n}\P_R & = (  \B_{\alpha ,n})_+\P_R,
\end{eqnarray}
where, $-M\leq\alpha\leq N$,  $n \geq 0$.
\end{theorem}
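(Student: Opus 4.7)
\emph{Proof plan.} I would prove the two implications in Theorem~\ref{t1} separately, handling $\P_L$ and $\P_R$ in parallel. The easy direction ($\Leftarrow$) is a direct differentiation: applying $\partial_{\alpha,n}$ to $\L=\P_L\Lambda^N\P_L^{-1}$ and substituting \eqref{bn1} gives
\begin{equation*}
\partial_{\alpha,n}\L=[-(\B_{\alpha,n})_-,\L].
\end{equation*}
Since $\B_{\alpha,n}$ is a (generalized) function of $\L$ one has $[\B_{\alpha,n},\L]=0$ and therefore $[(\B_{\alpha,n})_+,\L]=-[(\B_{\alpha,n})_-,\L]$. A brief case check against Definition~\ref{deflax} now identifies the right-hand side with $[A_{\alpha,n},\L]$: $A_{\alpha,n}=(\B_{\alpha,n})_+$ when $\alpha\in\{N,\dots,1\}\cup\{-M\}$ and $A_{\alpha,n}=-(\B_{\alpha,n})_-$ when $\alpha\in\{0,\dots,-M+1\}$. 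The parallel computation starting from $\L=\P_R\Lambda^{-M}\P_R^{-1}$ and using \eqref{bn1'} reproduces the same Lax equation, so the two Sato systems are simultaneously consistent.

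For the harder direction ($\Rightarrow$), I would use the Sato--Wilson construction. Fix a base point $t^{(0)}$ and choose initial dressing operators $\P_L^{(0)},\P_R^{(0)}$ of the forms \eqref{dressP}--\eqref{dressQ} with
\begin{equation*}
\L|_{t=t^{(0)}}=\P_L^{(0)}\Lambda^N(\P_L^{(0)})^{-1}=\P_R^{(0)}\Lambda^{-M}(\P_R^{(0)})^{-1};
\end{equation*}
these exist because the dressing factorization \eqref{dressing} holds pointwise at $t^{(0)}$. Define $\P_L(t),\P_R(t)$ as the unique solutions of the infinite evolutionary systems \eqref{bn1}, \eqref{bn1'} with those initial values. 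Frobenius integrability of the $\P_L$-system reduces to the identity
\begin{equation*}
\partial_{\beta,n}(\B_{\alpha,m})_- - \partial_{\alpha,m}(\B_{\beta,n})_- - [(\B_{\alpha,m})_-,(\B_{\beta,n})_-]=0,
\end{equation*}
and that of the $\P_R$-system to the corresponding $(\cdot)_+$ identity; both are precisely the content of Lemma~\ref{zs corollary}. Finally, writing $\tilde\L(t):=\P_L(t)\Lambda^N\P_L(t)^{-1}$, the ($\Leftarrow$) computation applied to $\P_L(t)$ gives $\partial_{\alpha,n}\tilde\L=[-(\B_{\alpha,n})_-,\tilde\L]$, while combining the Lax equation $\partial_{\alpha,n}\L=[A_{\alpha,n},\L]$ with $[\B_{\alpha,n},\L]=0$ rewrites $\partial_{\alpha,n}\L=[-(\B_{\alpha,n})_-,\L]$ as well. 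Both $\tilde\L$ and $\L$ therefore satisfy the same ODE and agree at $t^{(0)}$, so uniqueness yields $\tilde\L\equiv\L$; the analogous reasoning with $\Lambda^{-M}$ in place of $\Lambda^N$ then handles \eqref{bn1'}.

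The main subtlety I anticipate is the case $\alpha=-M$, in which $\B_{-M,n}$ involves $\log\L=\tfrac{1}{2N}\log_+\L+\tfrac{1}{2M}\log_-\L$ and hence mixes the continuous derivative $\partial=d/dx$ with the shift $\Lambda$. One must verify that $[\B_{-M,n},\L]=0$ still holds (it does, since $\log_+\L$ and $\log_-\L$ commute respectively with $\L^{1/N}$ and $\L^{1/M}$) and that the $\pm$-splitting used in the Sato equations is well-defined for these mixed shift--differential operators. This is exactly the boundary case singled out for separate treatment in the proof of Lemma~\ref{zs corollary}, so the compatibility identity needed above can be invoked directly from that lemma rather than re-established.
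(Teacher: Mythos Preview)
Your proposal is correct and follows essentially the same route the paper intends: the paper's proof consists of the single sentence ``Using lemma \ref{zs corollary} and a standard procedure given by \cite{UT} and \cite{M}, we can prove the theorem,'' and what you have written is precisely that standard Sato--Wilson procedure spelled out, with the Frobenius integrability supplied by Lemma~\ref{zs corollary}. The one small point you leave implicit is that the evolution \eqref{bn1} preserves the normalized form \eqref{dressP} of $\P_L$ (and likewise \eqref{bn1'} preserves \eqref{dressQ}), which follows since $(\B_{\alpha,n})_-$ contains only negative powers of $\Lambda$; this is routine and does not affect the argument.
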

Proof: Using lemma \ref{zs corollary} and a
 standard procedure
given by \cite{UT} and \cite{M}, we can prove the  theorem.\\

 Sato equations can be regarded as the definitions of the wave operators, i.e. $\P_L$  and $\P_R$ in
eq.\eqref{bn1} and eq.\eqref{bn1'}.   It is unique up to multiplying
$\P_L$  and $\P_R$ from the right by operators of the form
$1+a_1\Lambda^{-1}+a_2\Lambda^{-2}+\ldots$ and $\tilde a_0 +\tilde
a_1\Lambda+\tilde a_2\Lambda^2+\ldots$ respectively, where $a_i$ and
$\tilde a_j$ are independent of $x$ and $t_{\alpha, n}$ where
$-M\leq\alpha\leq N$,  $n \geq 0$. We shall study identities related
to the wave operators in next section. On  the other hand, we shall
show relations  between  tau function and $w_i$, $\tilde{w}_i$ from
Sato eqtuaions later.
\sectionnew{Hirota Bilinear Identities of Wave Operators   }

We suppose the wave operators $\P_L$, $\P_R$ and $\P_L^{-1}$,
$\P_R^{-1}$ given by eq.(\ref{dressP}) to eq(\ref{dressQinv}),
then define the symbols $P_L$,  $P_R$ and $P_L^{-1}$, $P_R^{-1}$ as
following
\begin{eqnarray}\label{symbol PL}
\P_L(x,t,\Lambda) (\lambda^{\frac{x}{\epsilon}})&=&P_L(x,t,\lambda) \lambda^{\frac{x}{\epsilon}} ,\\
 \P_R(x,t,\Lambda) (\lambda^{\frac{x}{\epsilon}})&=&P_R(x,t,\lambda)\lambda^{\frac{x}{\epsilon}},\\
 \P_L^{-1\#}(x,t,\Lambda)( \lambda^{-\frac{x}{\epsilon}})&=&P_L^{-1}(x,t,\lambda) \lambda^{-\frac{x}{\epsilon}} ,\\
 \label{symbol PR}\P_R^{-1\#}(x,t,\Lambda)
 (\lambda^{-\frac{x}{\epsilon}})&=&P_R^{-1}(x,t,\lambda)\lambda^{-\frac{x}{\epsilon}},
\end{eqnarray}
 where
$\#$ is an {\em antiinvolution} acting on the space of Laurent
series in $\Lambda$ by $x^\#=x$ and $\Lambda^\# = \Lambda^{-1}$. The
left side of  eq.\eqref{symbol PL}-eq.\eqref{symbol PR} means the
operators $\P_L,\P_R,\P_L^{-1\#},\P_R^{-1\#}$ acting on the function
$\lambda^{\pm\frac{x}{\epsilon}}$ in the bracket.
 We should
note that $\P_L^{-1}$ and $\P_R^{-1}$ are the inverse operators of $\P_L$ and
$\P_R$ respectively, but $P_L^{-1}$ and $P_R^{-1}$ are not the
inverse symbols of $P_L$ and $P_R$ respectively.

For simplicity of Hirota bilinear identities, we will introduce two
series below
\begin{eqnarray*}
 &&  \W_L(x,t,\Lambda) = \P_L(x,t,\Lambda)\times\\
 &&
 \exp\left({
\sum_{n\geq 0} \left[\sum_ {\alpha=1}^{N}\frac{\Gamma ( 2-
\frac{\alpha-1}{N})}{\Gamma (n+2 - \frac{\alpha-1}{N} )}
\frac{\Lambda^{N({n+1-\frac{\alpha-1}{N}})}}{\epsilon}t_{\alpha, n}
\right] +\sum_{n> 0}\frac{\Lambda^{nN}}{n!}\( \epsilon\partial_{x} -
\frac{1}{2}(\frac{1}{M}+\frac{1}{N})\C_n\) \frac{t_{-M,n}}{\epsilon} }\right),\\
 &&
\W_R(x,t,\Lambda) =\P_R(x,t,\Lambda)\times \\
&&\exp\left(- \sum_{n\geq 0} \left[ \sum_
{\beta=-M+1}^{0}\frac{\Gamma ( 2+ \frac{\beta}{M})}{\Gamma (n+2 +
\frac{\beta}{M} )}
\frac{\Lambda^{-M({n+1+\frac{\beta}{M}})}}{\epsilon}t_{\beta,
n}\right]+\sum_{n> 0}\frac{\Lambda^{-nM}}{n!}\( \epsilon\partial_{x}
+\frac{1}{2}(\frac{1}{M}+\frac{1}{N})\C_n\) \frac{t_{-M,n}}{\epsilon}\right).
 \end{eqnarray*}

If the series have forms
\begin{eqnarray*} \W_L(x,t,\Lambda)=\sum_{i\in \Z} a_i(x,t,
\d_x)\Lambda^i \mbox{ and } \W_R(x,t,\Lambda)=\sum_{i\in \Z}
b_i(x,t, \d_x)\Lambda^{i}, \end{eqnarray*}

\begin{eqnarray*} \W_L^{-1}(x,t,\Lambda)=\sum_{i\in \Z}\Lambda^{i} a_i'(x,t,
\d_x) \mbox{ and } \W_R^{-1}(x,t,\Lambda)=\sum_{j\in
\Z}\Lambda^{j}b_j'(x,t, \d_x),  \end{eqnarray*} then we denote their
left symbols $W_L$,  $W_R$ and right symbols $W_L^{-1}$, $W_R^{-1}$
as following
\begin{eqnarray*}
&&  W_L(x,t,\lambda) =\sum_{i\in \Z} a_i(x,t,
\d_x)\lambda^i  =P_L(x,t,\lambda)\times\\
 &&
 \exp\left({
\sum_{n\geq 0} \left[\sum_ {\alpha=1}^{N}\frac{\Gamma ( 2-
\frac{\alpha-1}{N})}{\Gamma (n+2 - \frac{\alpha-1}{N} )}
\frac{\lambda^{N({n+1-\frac{\alpha-1}{N}})}}{\epsilon}t_{\alpha, n}
\right] +\sum_{n> 0}\frac{\lambda^{nN}}{n!}\( \epsilon\partial_{x} -
\frac{1}{2}(\frac{1}{M}+\frac{1}{N})\C_n\) \frac{t_{-M,n}}{\epsilon} }\right),\\
&&  W^{-1}_L(x,t,\lambda)=  \sum_{i\in \Z} a_i'(x,t,
\d_x)\lambda^{i} \\
&=&
 \exp\left(-{
\sum_{n\geq 0} \left[\sum_ {\alpha=1}^{N}\frac{\Gamma ( 2-
\frac{\alpha-1}{N})}{\Gamma (n+2 - \frac{\alpha-1}{N} )}
\frac{\lambda^{N({n+1-\frac{\alpha-1}{N}})}}{\epsilon}t_{\alpha, n}
\right] -\sum_{n> 0}\frac{\lambda^{nN}}{n!}\( \epsilon\partial_{x} -
\frac{1}{2}(\frac{1}{M}+\frac{1}{N})\C_n\) \frac{t_{-M,n}}{\epsilon} }\right)\\
&& \hspace{3cm}\times P^{-1}_L(x,t,\lambda),\\
&&
W_R(x,t,\lambda) =\sum_{i\in \Z} b_i(x,t,
\d_x)\lambda^{i} =P_R(x,t,\lambda)\times \\
&&\exp\left(- \sum_{n\geq 0} \left[ \sum_
{\beta=-M+1}^{0}\frac{\Gamma ( 2+ \frac{\beta}{M})}{\Gamma (n+2 +
\frac{\beta}{M} )}
\frac{\lambda^{-M({n+1+\frac{\beta}{M}})}}{\epsilon}t_{\beta,
n}\right]+\sum_{n> 0}\frac{\lambda^{-nM}}{n!}\( \epsilon\partial_{x}
+ \frac{1}{2}(\frac{1}{M}+\frac{1}{N})\C_n\) \frac{t_{-M,n}}{\epsilon}\right)\\
&& W^{-1}_R(x,t,\lambda)=\sum_{j\in
\Z}b_j'(x,t, \d_x)\lambda^{j}\\
& =& \exp\left(- \sum_{n\geq 0} \left[ \sum_
{\beta=-M+1}^{0}\frac{\Gamma ( 2+ \frac{\beta}{M})}{\Gamma (n+2 +
\frac{\beta}{M} )}
\frac{\lambda^{-M({n+1+\frac{\beta}{M}})}}{\epsilon}t_{\beta,
n}\right]+\sum_{n> 0}\frac{\lambda^{-nM}}{n!}\( \epsilon\partial_{x}
+ \frac{1}{2}(\frac{1}{M}+\frac{1}{N})\C_n\) \frac{t_{-M,n}}{\epsilon}\right)\\
&& \hspace{ 3cm}\times P^{-1}_R(x,t,\lambda).
\end{eqnarray*}
 These
operator-valued symbols are quite different from common symbols
because $\epsilon\d_x$  is not replaced by its corresponding symbol $\log
\lambda$.\\
After defining residue as $\res_{\lambda }\sum_{n\in \Z}\alpha_n \lambda^n=\alpha_{-1}$, we get the following proposition using the similar proof as \cite{UT} and \cite{M}.
\begin{proposition}\label{wave-operators}
Let $t$ and $t'$ be time sequences such that $t_{-M,0}=t'_{-M,0}$.
 $\P_L$ and
$\P_R$ are wave operators of the EBTH if and only if for all   $m\in
\Z$, $r\in \N( including  \  0 )$ , the following Hirota bilinear identity hold
\begin{eqnarray}  \notag &&\res_{\lambda }
 \left\{
\lambda^{Nr+m-1}\ W_L(x,t,\epsilon \d_x,\lambda) W_L^{-1}(x-m\epsilon,t', \epsilon \d_x,\lambda)
\right\} = \\ \label{HBE3}&& \res_{\lambda }
 \left\{
\lambda^{-Mr+m-1} W_R( x,t,\epsilon \d_x,\lambda )\
W_R^{-1}(x-m\epsilon,t',\epsilon \d_x,\lambda) \right\}.
\end{eqnarray}
\end{proposition}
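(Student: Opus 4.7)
The plan is to adapt the Ueno--Takasaki strategy for the two-dimensional Toda case, as used by Milanov for the extended Toda hierarchy, proving the two implications of the proposition in turn. Throughout, the bookkeeping relies on the observation that the exponential factors entering $W_L$ and $W_R$ are designed so that, under the Sato equations, differentiation in $t_{\alpha,n}$ corresponds to left-multiplication by the positive or negative projection of $B_{\alpha,n}$, while $\res_\lambda$ plays the role of a pairing between a symbol and its dual.

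\textbf{Only if.} Assume $\P_L, \P_R$ are wave operators, so that Theorem~\ref{t1} gives the Sato equations \eqref{bn1}, \eqref{bn1'}. First I would verify the HBI on the diagonal $t = t'$. There the $\lambda$-dependent exponential factors in $W_L$ and $W_L^{-1}$ cancel, and likewise for $W_R, W_R^{-1}$, so both residues collapse to symbol-level expressions for $\P_L \Lambda^{Nr+m} \P_L^{-1}$ and $\P_R \Lambda^{-Mr+m} \P_R^{-1}$; by the dressing relation \eqref{dressing} both coincide with $\L^r \Lambda^m$, and the residues agree. Extending the identity from $t = t'$ to arbitrary $(t, t')$ is done by showing both sides satisfy the same evolution in every $t_{\beta,n}$ and $t'_{\beta,n}$. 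Applying \eqref{bn1}--\eqref{bn1'}, the $t_{\beta,n}$-derivative of $W_L$ combines the $-(B_{\beta,n})_-\P_L$ term from the Sato equation with the derivative of the exponential, producing a left-multiplication by $(B_{\beta,n})_+$ at the symbol level; a dual formula holds on the right for $W_L^{-1}$. The same matching works for the RHS via $(B_{\beta,n})_-$, using $B_{\beta,n} = (B_{\beta,n})_+ + (B_{\beta,n})_-$ and the fact that the residue kills the pieces with the wrong projection. The constraint $t_{-M,0} = t'_{-M,0}$ is exactly what keeps the relative shift $\Lambda^m$ consistent under the flow.

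\textbf{If, and the main obstacle.} Conversely, assume the HBI for all $m, r$. Setting $t = t'$ and varying $m, r$ forces a symbol-level equality equivalent to $\P_L \Lambda^{Nr+m} \P_L^{-1} = \P_R \Lambda^{-Mr+m} \P_R^{-1}$, giving a consistent Lax operator $\L$ dressed by both $\P_L$ and $\P_R$. Differentiating the HBI in $t_{\alpha,n}$ and then setting $t = t'$, the residue extracts on one side $\d_{t_{\alpha,n}}\P_L$ (respectively $\P_R$) and on the other the action of the appropriate projection of $B_{\alpha,n}$, yielding \eqref{bn1}, \eqref{bn1'}. The principal difficulty, which makes the proof genuinely richer than the scalar two-dimensional Toda case, is that the symbols $W_L, W_R$ are \emph{operator-valued} in $\epsilon\d_x$ because the logarithmic flows indexed by $t_{-M,n}$ appear in the exponential without the usual substitution $\epsilon\d_x \mapsto \log\lambda$. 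All residue and ``exponential-through-$P_L$'' manipulations must therefore be carried out in the noncommutative algebra generated by $\d_x$, $x$, and $\lambda^{\pm 1}$; and the $\alpha = -M$ flow needs particular care because both $(B_{-M,n})_+$ and $(B_{-M,n})_-$ enter simultaneously, as recorded in \eqref{d7} and \eqref{logL'}.
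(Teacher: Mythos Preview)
Your strategy is the paper's: establish the operator identity $\W_L(x,t,\Lambda)\Lambda^{Nr}\W_L^{-1}(x,t',\Lambda)=\W_R(x,t,\Lambda)\Lambda^{-Mr}\W_R^{-1}(x,t',\Lambda)$ on the diagonal $t=t'$, propagate off the diagonal by differentiating in the time variables (the paper phrases this as induction on a multi-index of derivatives, then Taylor expansion), and for the converse differentiate once and set $t=t'$, separating positive and negative parts. Two imprecisions are worth correcting.

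First, at $t=t'$ the operator identity is $\P_L\Lambda^{Nr}\P_L^{-1}=\P_R\Lambda^{-Mr}\P_R^{-1}=\L^r$, not $\L^r\Lambda^m$; the integer $m$ enters only when one extracts the coefficient of $\Lambda^{-m}$ to pass from the operator identity to the residue form \eqref{HBE3}.

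Second, and more substantively, your description of the propagation step is slightly off in a way that would not close if taken literally. The argument works because for every flow index $(\beta,n)$ with $(\beta,n)\neq(-M,0)$, the derivatives $\partial_{\beta,n}\W_L$ and $\partial_{\beta,n}\W_R$ are given by left multiplication of $\W_L$ and $\W_R$ by the \emph{same} operator: it is $(B_{\beta,n})_+$ for $1\le\beta\le N$, it is $-(B_{\beta,n})_-$ for $-M+1\le\beta\le 0$, and for $\beta=-M$ both reduce to one common (more involved) expression built from $\L^n$, $\log_\pm\L$ and $\C_n$. Your wording (``LHS via $(B_{\beta,n})_+$, RHS via $(B_{\beta,n})_-$, and the residue kills the pieces with the wrong projection'') suggests the two sides evolve by different multipliers and that the residue repairs the discrepancy; that mechanism would not yield the identity. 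The decomposition $B=(B)_++(B)_-$ is used only \emph{within} each side, to combine the Sato term with the derivative of the exponential into a single projection; once that is done the two multipliers coincide as operators, and the induction step follows immediately from the inductive hypothesis. The $\beta=-M$ case, as you note, is where the operator-valued nature of the symbols actually matters, and the paper checks it by direct computation of $\partial_{-M,n}\W_L$ and $\partial_{-M,n}\W_R$.
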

{\bf Proof.} \\ ($\Longrightarrow$): Set
$\ga=(\ga_{N,0},\ga_{N,1},\ga_{N,2},\ldots;\ga_{N-1,0},\ga_{N-1,1},\ga_{N-1,2},\ldots;\ldots;\ga_{-M,1},\ga_{-M,2},\ldots
)$ be a multi index and
$$
\d^\ga:=\d_{N,0}^{\ga_{N,0}}\d_{N,1}^{\ga_{N,1}}\d_{N,2}^{\ga_{N,2}}\ldots;
\d_{N-1,0}^{\ga_{N-1,0}}\d_{N-1,1}^{\ga_{N-1,1}}\d_{N-1,2}^{\ga_{N-1,2}}\ldots;\ldots;
\d_{-M,1}^{\ga_{-M,1}}\d_{-M,2}^{\ga_{-M,2}}\ldots\ ,
$$
where $\d_{\alpha,i}=\d/\d t_{\alpha,i}$ ( we stress that $\d/\d
t_{-M,0}$ is not involved). Firstly we shall prove the left
statement leads to
\begin{eqnarray} \label{HBE2} \W_L (x,t,\Lambda)\Lambda^{Nr}
\W_L^{-1}(x,t',\Lambda) = \W_R (x,t,\Lambda)
\Lambda^{-Mr}\W_R^{-1}(x,t',\Lambda)
\end{eqnarray}
 for all integers $r\geq 0$.
Just the same as the method used in\cite{M}, by induction on $\ga $,
we shall prove that
\begin{equation} \label{2.7}
\W_L(x,t,\Lambda)\Lambda^{Nr}(\d^\ga\W_L^{-1}(x,t,\Lambda))
=\W_R(x,t,\Lambda) \Lambda^{-Mr}(\d^\ga\W_R^{-1}(x,t,\Lambda)).
\end{equation} When $\ga=0$, eq.\eqref{2.7} becomes
\begin{equation} \label{2.8}
\P_L(x,t,\Lambda)\Lambda^{Nr}\P_L^{-1}(x,t,\Lambda)
=\P_R(x,t,\Lambda)\Lambda^{-Mr}\P_R^{-1}(x,t,\Lambda).
\end{equation}
which is obviously true according to the definition of
wave operators.\\
Suppose eq.\eqref{2.7} is true in the case of $\alpha \neq 0$.
 Note that

\begin{equation}
\notag\partial_{\alpha , n}\W_{L} :=
\begin{cases}
 [(\d_{\alpha,n}\P_L)\P_L^{-1}+\P_L\frac{\Gamma(2-\frac{\alpha-1}{N})}{\epsilon\Gamma(n+2-\frac{\alpha-1}{N})}
  \Lambda^{N(n+1-\frac{\alpha-1}{N})}\P_L^{-1}]\W_L,
&\alpha=N,N-1,\dots,1,\\
 (\d_{\alpha,n}\P_L)\P_L^{-1}\W_L,& \alpha = 0\dots -M+1,\\
[(\d_{\alpha,n}\P_L)\P_L^{-1}+\P_L\frac{\Lambda^{nN}}{\epsilon
n!}(\epsilon\d_x-\frac{1}{2}(\frac{1}{M}+\frac{1}{N})
\C_n)\P_L^{-1}]\W_L,  &\alpha=-M,
  \end{cases}
\end{equation}
and
\begin{align}
\notag\partial_{\alpha , n}\W_{R} :=
\begin{cases}
(\d_{\alpha,n}\P_R)\P_R^{-1}\W_R,&\alpha=N\dots 1,\\
[(\d_{\alpha,n}\P_R)\P_R^{-1}-\P_R\frac{\Gamma(2+\frac{\alpha}{M})}{\epsilon\Gamma(n+2+\frac{\alpha}{M})}
  \Lambda^{-M(n+1+\frac{\alpha}{M})}\P_R^{-1}]\W_R,
& \alpha=0,\dots,-M+1, \\
[(\d_{\alpha,n}\P_R)-\P_R\frac{\Lambda^{-nM}}{\epsilon
n!}(-\epsilon\d_x-\frac{1}{2}(\frac{1}{M}+\frac{1}{N})
\C_n)\P_R^{-1} ] \W_R,&\alpha=-M.
\end{cases}
  \end{align}
By computation we get \\

\begin{equation} \notag
 \partial_{\alpha ,
n}\W_{L} :=
\begin{cases}
(B_{\alpha,n})_{+}\W_{L},&\alpha=N\dots 1,\\
 -(B_{
\alpha,n})_{-}\W_{L}, &\alpha = 0\dots -M+1,\\
[-(B_{ \alpha,n})_{-}+\frac{1}{\epsilon n!} [ \L^n (\frac{1}{N}\log
_{+}\L - \frac12 ( \frac1M + \frac1N) \C_n ) ]]\W_{L}, &\alpha=-M,
\end{cases}
\end{equation}
\begin{equation} \notag
\partial_{\alpha,n}\W_{R}:=
\begin{cases}
(B_{\alpha,n})_{+}\W_{R}, &\alpha=N\dots 1,\\
 -(B_{\alpha,n})_{-}\W_{R}, &\alpha = 0\dots -M+1,\\
 [(B_{ \alpha,n})_{+}-\frac{1}{\epsilon n!} [ \L^n (\frac{1}{M}\log _{-}\L+
\frac12 ( \frac1M + \frac1N) \C_n ) ]]\W_{R}, &\alpha=-M,
  \end{cases}
\end{equation}
which implies
\begin{equation} \notag (\d_{\alpha,n}\W_L)\Lambda^{Nr}(\d^\ga\W_L^{-1}) =
(\d_{\alpha,n}\W_R)\Lambda^{-Mr}(\d^\ga\W_R^{-1}) \end{equation}
by considering \eqref{2.7}.
Furthermore we  get  \begin{equation} \notag
\W_L\Lambda^{Nr}(\d_{\alpha,n}\d^\ga\W_L^{-1} )=
\W_R\Lambda^{-Mr}(\d_{\alpha,n}\d^\ga\W_R^{-1}).
\end{equation} Thus if we increase the power of  $\d_{\alpha,n}$ by 1 then
eq.\eqref{2.7} still holds.
 The induction is completed.
 Using the Taylor's formula and eq.\eqref{2.7},
expanding Both sides of eq.\eqref{HBE2} about $t=t'$, we can finish
the proof of eq.\eqref{HBE2}.

Then we shall prove the right-side  statement of the proposition is equivalent to identity
eq.\eqref{HBE3}.\\
 Let $m\in \Z$, $r\in \N$ and $t_{-M,0}=t'_{-M,0}$.
Put
\begin{eqnarray*} \W_L(x,t,\Lambda)=\sum_{i\in \Z} a_i(x,t,
\d_x)\Lambda^i \mbox{ and } \W_R(x,t,\Lambda)=\sum_{i\in \Z}
b_i(x,t, \d_x)\Lambda^{i}, \end{eqnarray*}

\begin{eqnarray*} \W_L^{-1}(x,t,\Lambda)=\sum_{i\in \Z}\Lambda^{i} a_i'(x,t,
\d_x) \mbox{ and } \W_R^{-1}(x,t,\Lambda)=\sum_{j\in
\Z}\Lambda^{j}b_j'(x,t, \d_x)  \end{eqnarray*} and compare the
coefficients in front of $\Lambda^{-m}$ in eq.\eqref{HBE2}:
\begin{eqnarray*} \sum_{i+j=-m-Nr} a_i(x,t, \d_x)a_j'(x-m\epsilon,t',
\d_x) = \sum_{i+j=-m+Mr}  b_i(x,t, \d_x)b_j'(x-m\epsilon,t', \d_x).
\end{eqnarray*} This equality can be written also as \begin{eqnarray} &&
\notag \res_{\lambda }
 \left\{\lambda^{Nr+m-1}\
W_L(x,t,\epsilon \d_x,\lambda) W_L^{-1}(x-m\epsilon,t', \epsilon \d_x,\lambda) \right\} = \\
\notag && \res_{\lambda }
 \left\{\lambda^{-Mr+m-1} W_R(
x,t,\epsilon \d_x,\lambda ) W_R^{-1}(x-m\epsilon,t',\epsilon \d_x,\lambda) \right\}.
\end{eqnarray}

($\Longleftarrow$): We have proved that eq.\eqref{HBE3} is
equivalent to eq.\eqref{HBE2}. Now we will prove eq.\eqref{HBE2}
implies that operators  $\P_L$ and $\P_R$ are wave operators of the
EBTH.\\
 Differentiate eq.\eqref{HBE2} with respect to $t_{\alpha,n}$
and then put $t=t'$, we can get

\begin{eqnarray*} \notag  (\d_{\alpha,n}\P_L)\P_L^{-1}+\P_L\C_{\alpha,n}\P_L^{-1}=
 (\d_{\alpha,n}\P_R)\P_R^{-1} -\P_R\C'_{\alpha,n}\P_R^{-1}
\end{eqnarray*}
where
\begin{equation} \notag
 C_{\alpha , n} :=
\begin{cases}
  \frac{\Gamma ( 2- \frac{\alpha-1}{N} )}{\epsilon\Gamma (n+2 - \frac{\alpha-1}{N} )} \Lambda^{N(n+1-\frac{\alpha-1}{N})}, &\alpha=N\dots 1,\\
       0,                            &\alpha = 0\dots -M+1,\\
 \frac{1}{\epsilon n!} [ \Lambda^{nN}( \epsilon\d_{x} -\frac{1}{2}(\frac{1}{M}+\frac{1}{N})\C_n) ], & \alpha =
 -M,
  \end{cases}
\end{equation}
\begin{equation} \notag
 C'_{\alpha , n} :=
\begin{cases}
 0,                          &\alpha=N\dots 1,\\
  \frac{\Gamma ( 2 + \frac{\alpha}{M} )}{\epsilon\Gamma (n+2 + \frac{\alpha}{M} )} \Lambda^{-M(n+1+\frac{\alpha}{M})}, &\alpha = 0\dots -M+1,\\
 \frac{1}{\epsilon n!} [ \Lambda^{-nM}( -\epsilon\d_{x} -\frac{1}{2}(\frac{1}{M}+\frac{1}{N})\C_n) ], & \alpha = -M.
  \end{cases}
\end{equation}
 Since $(\d_{\alpha,n}\P_L)\P_L^{-1}$ contains only
negative powers of $\Lambda$ and $(\d_{\alpha,n}\P_R )\P_R^{-1}$
contains non-negative powers, we get eq.\eqref{bn1},
eq.\eqref{bn1'} by separating the negative and the positive part
of the equation. Thus
$\P_L, \ \P_R$ is a pair of wave operators. This is the end the proof.   \qed\\
 Although in the HBI eq.\eqref{HBE3} the  symbols are  not scaled-valued, we can also think about the
scalar-valued form of the  HBI.
\begin{proposition}\label{HBEscaled}
Let $1\leq \alpha \leq N, -M+1\leq\beta \leq 0 $,   $m\in \Z$, $r\in
\N$; HBI eq.\eqref{HBE3} leads to  the following scalar-valued Hirota
bilinear identities
\begin{eqnarray}  \notag &&\res_{\lambda }
 \left\{
\lambda^{Nr+m-1}[(\d_{\alpha,n}P_L(x,t,\lambda))
P_L^{-1}(x-m\epsilon,t, \lambda)+\frac{\Gamma ( 2-
\frac{\alpha-1}{N})}{\Gamma (n+2 - \frac{\alpha-1}{N} )}
\frac{\lambda^{N({n+1-\frac{\alpha-1}{N}})}}{\epsilon}P_L(x,t,\lambda)\right. \\
\label{HBEscaled3}&& \left.  P_{L}^{-1}(x-m\epsilon,t,
\lambda)]\right\} = \res_{\lambda }
 \left\{
\lambda^{-Mr+m-1} (\d_{\alpha,n}P_R( x,t,\lambda ))\
P_R^{-1}(x-m\epsilon,t,\lambda) \right\},
\end{eqnarray}
\begin{eqnarray}  \notag &&\res_{\lambda }
 \left\{
\lambda^{Nr+m-1}(\d_{\beta,n}P_L(x,t,\lambda))
P_L^{-1}(x-m\epsilon,t, \lambda)\right\} = \res_{\lambda }
\left\{ \lambda^{-Mr+m-1} \big[(\d_{\beta,n}P_R( x,t,\lambda
))\right. \\
\label{HBEscaled3'}&& \left.\
P_R^{-1}(x-m\epsilon,t,\lambda)-\frac{\Gamma ( 2+
\frac{\beta}{M})}{\Gamma (n+2 + \frac{\beta}{M} )}
\frac{\lambda^{-M({n+1+\frac{\beta}{M}})}}{\epsilon}P_R(x,t,\lambda)
P_{R}^{-1}(x-m\epsilon,t, \lambda)\big] \right\},
\end{eqnarray}
\begin{eqnarray}  \notag &&\res_{\lambda }
 \left\{
\lambda^{Nr+m-1}[(\d_{-M,n}P_L(x,t,\lambda)) P_L^{-1}(x-m\epsilon,t,
\lambda)+\frac{\lambda^{nN}}{ n!} P_L(x,t,\lambda)
P_{Lx}^{-1}(x-m\epsilon,t, \lambda) \right.
\\\notag
 && \left. -\frac{\lambda^{nN}}{\epsilon n!}\frac{1}{2}(\frac{1}{M}+\frac{1}{N})\C_n P_L(x,t,\lambda)
P_{L}^{-1}(x-m\epsilon,t, \lambda)]\right\} = \\\notag &&
\res_{\lambda }
 \left\{ \lambda^{-Mr+m-1} [(\d_{-M,n}P_R(
x,t,\lambda ))\
P_R^{-1}(x-m\epsilon,t,\lambda)+\frac{\lambda^{-nM}}{ n!}
P_R(x,t,\lambda) P_{Rx}^{-1}(x-m\epsilon,t, \lambda) \right.
\\\label{HBEscaled3''}
 && \left. +\frac{\lambda^{-nM}}{\epsilon n!}\frac{1}{2}(\frac{1}{M}+\frac{1}{N})\C_n P_R(x,t,\lambda)
P_{R}^{-1}(x-m\epsilon,t, \lambda)] \right\},
\end{eqnarray}
\begin{eqnarray}  \notag &&\res_{\lambda }
 \left\{
\lambda^{Nr+m-1}P_L(x,t,\lambda) P_L^{-1}(x-m\epsilon,t,
\lambda)\right\} \\\label{HBEscaled3'''} &=& \res_{\lambda }
\left\{ \lambda^{-Mr+m-1} P_R( x,t,\lambda )\
P_R^{-1}(x-m\epsilon,t,\lambda) \right\}.
\end{eqnarray}
\end{proposition}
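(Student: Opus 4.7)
The plan is to apply $\d_{t_{\alpha,n}}$ to both sides of the operator-valued HBI \eqref{HBE3} and then specialize $t=t'$. I would first write $W_L = P_L\, E_L$ and $W_R = P_R\, E_R$, where $E_L, E_R$ denote the exponential factors appearing in the definitions of $W_L, W_R$. The crucial observation is that the exponents of $E_L$ and $E_R$ involve only $\lambda$-dependent scalars, possibly multiplied by $\epsilon\d_x$, and these operators commute among themselves; hence $E_L(t,\lambda)\,E_L^{-1}(t,\lambda)=1$ and $E_R(t,\lambda)\,E_R^{-1}(t,\lambda)=1$ identically. Simply setting $t=t'$ in \eqref{HBE3} without differentiating therefore yields \eqref{HBEscaled3'''} at once, and this identity will serve as a lemma in the last case.

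For $1\leq\alpha\leq N$, a direct computation gives $\d_{t_{\alpha,n}}W_L = \bigl[\d_{t_{\alpha,n}}P_L + c_{\alpha,n}(\lambda)\,P_L\bigr]E_L$, where $c_{\alpha,n}(\lambda)=\frac{\Gamma(2-\frac{\alpha-1}{N})}{\Gamma(n+2-\frac{\alpha-1}{N})}\frac{\lambda^{N(n+1-\frac{\alpha-1}{N})}}{\epsilon}$ is the scalar coefficient of $t_{\alpha,n}$ in the exponent of $E_L$, while $\d_{t_{\alpha,n}}W_R = (\d_{t_{\alpha,n}}P_R)\,E_R$, since the exponent of $E_R$ does not depend on $t_{\alpha,n}$ in this range. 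Differentiating both sides of \eqref{HBE3}, setting $t=t'$, and using $E_LE_L^{-1}=E_RE_R^{-1}=1$ makes the exponential factors drop out and produces \eqref{HBEscaled3}. The case $-M+1\leq\beta\leq 0$ is mirror-symmetric: now $E_L$ is insensitive to $t_{\beta,n}$ while $E_R$ contributes an extra scalar term with the opposite sign, yielding \eqref{HBEscaled3'}.

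The delicate case, and the main obstacle, is $\alpha=-M$, because differentiating $E_L$ or $E_R$ brings down a genuine differential operator rather than a scalar. One has $\d_{t_{-M,n}}E_L = \frac{\lambda^{nN}}{n!\,\epsilon}(\epsilon\d_x - c'_n)\,E_L$ and $\d_{t_{-M,n}}E_R = \frac{\lambda^{-nM}}{n!\,\epsilon}(\epsilon\d_x + c'_n)\,E_R$, with $c'_n:=\tfrac12(\tfrac1M+\tfrac1N)\C_n$. These operators commute with $E_L$ and $E_R$ respectively, so after multiplying by $W_{L,R}^{-1}$ and setting $t=t'$ the exponentials still cancel, but the residual $\epsilon\d_x$ must now be composed with the multiplication operators $P_L^{-1}(x-m\epsilon,t,\lambda)$ and $P_R^{-1}(x-m\epsilon,t,\lambda)$. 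Using $\epsilon\d_x\circ P_L^{-1}(x-m\epsilon,t,\lambda)=\epsilon P_{Lx}^{-1}(x-m\epsilon,t,\lambda)+P_L^{-1}(x-m\epsilon,t,\lambda)\,\epsilon\d_x$, and analogously for $P_R^{-1}$, the differentiated HBI becomes an operator identity of the form $A+B\,\d_x = A'+B'\,\d_x$. The coefficient of $\d_x$ reduces to $\res_\lambda \lambda^{N(r+n)+m-1}P_L(x,t,\lambda)P_L^{-1}(x-m\epsilon,t,\lambda)$ on the left and the analogous $P_R$-expression with $\lambda^{-M(r+n)+m-1}$ on the right, and these agree by \eqref{HBEscaled3'''} applied with $r$ replaced by $r+n$. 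Equating the remaining $\d_x$-free parts yields \eqref{HBEscaled3''}, which completes the proof.
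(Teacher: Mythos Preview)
Your proof is correct and essentially parallel to the paper's, with one organizational difference worth noting. The paper first applies both sides of the operator identity \eqref{HBE3} to the constant function~$1$; this converts the factor $\exp\big(\sum_{n>0}\frac{\lambda^{\pm nN}}{n!}(t_{-M,n}-t'_{-M,n})\,\d_x\big)$ into a genuine shift of the $x$-argument of $P_{L,R}^{-1}$, producing a single scalar master identity in $(t,t')$. The four stated identities are then obtained uniformly by differentiating that scalar identity with respect to the appropriate time variable (or not at all, for \eqref{HBEscaled3'''}) and setting $t=t'$; no residual $\d_x$ ever appears.

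You instead differentiate the operator identity first, then specialize $t=t'$, and for the $\alpha=-M$ case split the resulting operator equation $A+B\,\d_x=A'+B'\,\d_x$ into its scalar and $\d_x$-parts, invoking \eqref{HBEscaled3'''} with $r\mapsto r+n$ to match the $\d_x$-coefficients. This is equally valid and has the merit of making the origin of the $P_{Lx}^{-1}$ term completely transparent via the Leibniz rule; the paper's route is slightly more economical in that acting on~$1$ at the outset kills the dangling $\d_x$ once and for all.
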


\begin{proof} Let operators in both sides of eq.\eqref{HBE3} act  on ``$1$'', because
\begin{eqnarray}\notag
  \exp(\sum_{n>
0}\frac{\lambda^{nN}}{
n!}(t_{-M,n}-t'_{-M,n})\d_x)P_L^{-1}(x-m\epsilon,t',\lambda) 1&=&
P_L^{-1}(x+\sum_{n>
0}\frac{\lambda^{nN}}{
n!}(t_{-M,n}-t'_{-M,n})-m\epsilon,t',\lambda),\\
\notag   \exp(\sum_{n>
0}\frac{\lambda^{-nM}}{
n!}(t_{-M,n}-t'_{-M,n})\d_x)P_R^{-1}(x-m\epsilon,t',\lambda) 1&=&
 P_R^{-1}(x+\sum_{n>
0}\frac{\lambda^{-nM}}{
n!}(t_{-M,n}-t'_{-M,n})-m\epsilon,t',\lambda),
 \end{eqnarray}
 therefore the HBI eq.\eqref{HBE3} becomes
\begin{eqnarray}\notag
 && \res_{\lambda }
 \left\{\lambda^{Nr+m-1}P_L(x,t,\lambda)\exp(\sum_{n\geq 0} \sum_ {\alpha=1}^{N}\frac{\Gamma ( 2-
\frac{\alpha-1}{N})}{\Gamma (n+2 - \frac{\alpha-1}{N} )}
\frac{\lambda^{N({n+1-\frac{\alpha-1}{N}})}}{\epsilon}(t_{\alpha,n}-t'_{\alpha,n})-\right.\\
\notag && \left.\sum_{n>0}\frac{\lambda^{nN}} {\epsilon
n!}\frac{1}{2}(\frac{1}{M}+\frac{1}{N})\C_n(t_{-M,n}-t'_{-M,n}))P_L^{-1}(x+\sum_{n>
0}\frac{\lambda^{nN}}{
n!}(t_{-M,n}-t'_{-M,n})-m\epsilon,t',\lambda)\right\}=\\
\notag &&  \res_{\lambda }
\left\{\lambda^{-Mr+m-1}P_R(x,t,\lambda)\exp( -\sum_{n\geq 0} \sum_
{\beta=-M+1}^{0}\frac{\Gamma ( 2+ \frac{\beta}{M})}{\Gamma (n+2 +
\frac{\beta}{M} )}
\frac{\lambda^{-M({n+1+\frac{\beta}{M}})}}{\epsilon}(t_{\beta,n}-t'_{\beta,n})+\right.\\ \notag
&& \left. \sum_{n>0}\frac{\lambda^{-nM}}{\epsilon
n!}\frac{1}{2}(\frac{1}{M}+\frac{1}{N})\C_n(t_{-M,n}-t'_{-M,n}))P_R^{-1}(x+\sum_{n>
0}\frac{\lambda^{-nM}}{
n!}(t_{-M,n}-t'_{-M,n})-m\epsilon,t',\lambda)\right\}.\\\label{HBEscaledtotal}
 \end{eqnarray}
  To get eq.\eqref{HBEscaled3}, we differentiate  both sides of eq.\eqref{HBEscaledtotal} by $t_{\alpha,n}$ and
 let $t=t'$.
  To get eq.\eqref{HBEscaled3'}, we differentiate  both sides of eq.\eqref{HBEscaledtotal}  by $t_{\beta,n}$ and
 let $t=t'$.
 To get eq.\eqref{HBEscaled3''}, we differentiate   both sides of eq.\eqref{HBEscaledtotal}  by $t_{-M,n}$ and
 let $t=t'$.
   To get  eq.\eqref{HBEscaled3'''}, we just
 let $t=t'$  in eq.\eqref{HBEscaledtotal}.
\end{proof}
Moreover, HBI\eqref{HBE3} can imply other interesting identities.
\begin{proposition}\label{HBEscaledx}
Let $1\leq \alpha \leq N, -M+1\leq\beta \leq 0 $,   $r\in \N$ and
$x-x'=m\epsilon$, $m\in \Z$,  HBI \eqref{HBE3} leads to the
following scalar-valued Hirota bilinear identities
\begin{eqnarray}  \notag &&\res_{\lambda }
 \left\{
\lambda^{Nr-1}[(\d_{\alpha,n}P_L(x,t,\lambda)) P_L^{-1}(x',t,
\lambda)\lambda^{\frac{x-x'}{\epsilon}}+\frac{\Gamma ( 2- \frac{\alpha-1}{N})}{\Gamma
(n+2 - \frac{\alpha-1}{N} )}
\frac{\lambda^{N({n+1-\frac{\alpha-1}{N}})}}{\epsilon}P_L(x,t,\lambda)\right. \\
\label{HBEscaledx3}&& \left.  P_{L}^{-1}(x',t,
\lambda)\lambda^{\frac{x-x'}{\epsilon}}]\right\} = \res_{\lambda }
 \left\{
\lambda^{-Mr-1} (\d_{\alpha,n}P_R( x,t,\lambda ))\
P_R^{-1}(x',t,\lambda)\lambda^{\frac{x-x'}{\epsilon}} \right\},
\end{eqnarray}
\begin{eqnarray}  \notag &&\res_{\lambda }
 \left\{
\lambda^{Nr-1}(\d_{\beta,n}P_L(x,t,\lambda)) P_L^{-1}(x',t,
\lambda)\lambda^{\frac{x-x'}{\epsilon}}\right\} = \res_{\lambda }
 \left\{
\lambda^{-Mr-1} [(\d_{\beta,n}P_R( x,t,\lambda ))\right. \\
\label{HBEscaledx3}&& \left. P_R^{-1}(x',t,\lambda)\lambda^{\frac{x-x'}{\epsilon}}
-\frac{\Gamma ( 2+ \frac{\beta}{M})}{\Gamma (n+2 + \frac{\beta}{M}
)}
\frac{\lambda^{-M({n+1+\frac{\beta}{M}})}}{\epsilon}P_R(x,t,\lambda)
P_{R}^{-1}(x',t, \lambda)\lambda^{\frac{x-x'}{\epsilon}}] \right\},
\end{eqnarray}
\begin{eqnarray}  \notag &&\res_{\lambda }
 \left\{
\lambda^{Nr-1}[(\d_{-M,n}P_L(x,t,\lambda)) P_L^{-1}(x',t,
\lambda)\lambda^{\frac{x-x'}{\epsilon}}+\frac{\lambda^{nN}}{ n!}
P_L(x,t,\lambda) P_{Lx'}^{-1}(x',t, \lambda)\lambda^{\frac{x-x'}{\epsilon}} \right.
\\\notag && \left. -\frac{\lambda^{nN}}{\epsilon n!}\frac{1}{2}(\frac{1}{M}+\frac{1}{N})\C_n P_L(x,t,\lambda)
P_{L}^{-1}(x',t, \lambda)\lambda^{\frac{x-x'}{\epsilon}} ]\right\} = \\\notag && \res_{\lambda =
\infty} \left\{ \lambda^{-Mr-1} [(\d_{-M,n}P_R( x,t,\lambda ))\
P_R^{-1}(x',t,\lambda)\lambda^{\frac{x-x'}{\epsilon}} +\frac{\lambda^{-nM}}{ n!}
P_R(x,t,\lambda) P_{Rx'}^{-1}(x',t, \lambda)\lambda^{\frac{x-x'}{\epsilon}} \right.
\\\label{HBEscaledx31}
 && \left. +\frac{\lambda^{-nM}}{\epsilon n!}\frac{1}{2}(\frac{1}{M}+\frac{1}{N})\C_n P_R(x,t,\lambda)
P_{R}^{-1}(x',t, \lambda)\lambda^{\frac{x-x'}{\epsilon}}] \right\},
\end{eqnarray}
\begin{eqnarray}  \notag &&\res_{\lambda }
 \left\{
\lambda^{Nr-1}P_L(x,t,\lambda) P_L^{-1}(x',t,
\lambda)\lambda^{\frac{x-x'}{\epsilon}}\right\} \\\label{HBEscaledx31'} &=&
\res_{\lambda }
 \left\{ \lambda^{-Mr-1} P_R( x,t,\lambda )\
P_R^{-1}(x',t,\lambda)\lambda^{\frac{x-x'}{\epsilon}} \right\}.
\end{eqnarray}
\end{proposition}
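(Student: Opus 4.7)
The plan is to derive Proposition~\ref{HBEscaledx} directly from Proposition~\ref{HBEscaled} by the change of variables $x'=x-m\epsilon$ (equivalently $m=(x-x')/\epsilon$), absorbing the resulting integer power $\lambda^{m}$ into the integrand as the formal factor $\lambda^{(x-x')/\epsilon}$. Under this substitution the prefactors split as
\[
\lambda^{Nr+m-1}=\lambda^{Nr-1}\,\lambda^{(x-x')/\epsilon},\qquad \lambda^{-Mr+m-1}=\lambda^{-Mr-1}\,\lambda^{(x-x')/\epsilon},
\]
so each of the four identities \eqref{HBEscaled3}, \eqref{HBEscaled3'}, \eqref{HBEscaled3''}, \eqref{HBEscaled3'''} transforms termwise into the corresponding identity \eqref{HBEscaledx3}--\eqref{HBEscaledx31'}. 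Pulling the factor $\lambda^{(x-x')/\epsilon}$ inside $\res_{\lambda}$ is legitimate because, with $m\in\Z$ fixed, it is simply an integer power of $\lambda$ and $\res_{\lambda}$ extracts the coefficient of $\lambda^{-1}$ of the full formal Laurent series.

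First I would check each of the four identities in turn. The derivative-subscript translations are straightforward: the only subtle term is the $P_{Lx}^{-1}(x-m\epsilon,t,\lambda)$ and $P_{Rx}^{-1}(x-m\epsilon,t,\lambda)$ appearing in \eqref{HBEscaled3''}, which become $P_{Lx'}^{-1}(x',t,\lambda)$ and $P_{Rx'}^{-1}(x',t,\lambda)$ in \eqref{HBEscaledx31} since the subscript denotes differentiation with respect to the first argument of the symbol, regardless of how that slot is labelled. All the time derivatives $\partial_{\alpha,n}P_L$, $\partial_{\beta,n}P_R$, $\partial_{-M,n}P_{L,R}$ pass through unchanged because $x'$ does not occur among the time variables.

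Alternatively, one could prove the proposition ab initio by repeating the argument of Proposition~\ref{HBEscaled}, but letting the operator pair in \eqref{HBE3} act on $\lambda^{x'/\epsilon}$ rather than on the constant function $1$; the shift $e^{\epsilon\partial_x}$ then generates the factor $\lambda^{(x-x')/\epsilon}$ automatically, after which one differentiates with respect to $t_{\alpha,n}$, $t_{\beta,n}$, $t_{-M,n}$ (or sets $t=t'$) to extract each of the four identities. The only obstacle is essentially bookkeeping: one must keep track of the formal character of $\lambda^{(x-x')/\epsilon}$ and of the ordering of $\epsilon\partial_x$ in the symbols $W_{L},W_{R}$ when they act on the exponential $\lambda^{x'/\epsilon}$. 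Since $x-x'\in\epsilon\Z$ guarantees that $\lambda^{(x-x')/\epsilon}$ is a Laurent monomial in $\lambda$, no analytical issue arises and the proof reduces to a careful rewriting.
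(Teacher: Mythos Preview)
Your proposal is correct and matches the paper's implicit treatment: the paper states Proposition~\ref{HBEscaledx} immediately after proving Proposition~\ref{HBEscaled} without giving a separate proof, so the intended argument is precisely the substitution $x'=x-m\epsilon$, $\lambda^{m}=\lambda^{(x-x')/\epsilon}$ that you describe. Your remark about the relabelling $P_{Lx}^{-1}(x-m\epsilon,t,\lambda)\to P_{Lx'}^{-1}(x',t,\lambda)$ is the only point that deserves explicit mention, and you handle it correctly.
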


\sectionnew{the existence of Tau-functions}

 For shortness, denote by $[\lambda^{-1}]^N$, $\[\lambda\]^{M}$ the following sequences:
\begin{equation} \notag
 \[\lambda^{-1}\]^{N}_{\alpha,n} :=
\begin{cases}
  \frac{\Gamma (n+1 - \frac{\alpha-1}{N}
)}{N\Gamma ( 2- \frac{\alpha-1}{N})}
 \epsilon
\lambda^{-N(n+1-\frac{\alpha-1}{N})}, &\alpha=N,N-1,\dots 1,\\
 0, &\alpha = 0, -1\dots -(M-1),\\
 0, & \alpha = -M.
  \end{cases}
\end{equation}

\begin{equation} \notag
\[\lambda\]^{M}_{\alpha,n} :=
\begin{cases}
0, &\alpha=N, N-1,\dots 1,\\
\frac{\Gamma (n+1 +\frac{\beta}{M} )}{M\Gamma ( 2+
\frac{\beta}{M})}
 \epsilon
\lambda^{M(n+1+\frac{\beta}{M})}, &\alpha = 0, -1, \dots -(M-1),\\
 0, & \alpha = -M.
  \end{cases}
\end{equation}
A function $\tau$  depending only on the dynamical variables $t$ and
$\epsilon$ is called the  {\em \bf tau-function of the EBTH} if it
provides symbols related to wave operators as following,
\begin{eqnarray}\label{pltau}P_L: &=&
1+\frac{w_1}{\lambda}+\frac{w_2}{\lambda^2}+\ldots : =\frac{ \tau
(t_{-M,0}+x-\frac{\epsilon}{2}, t-[\lambda^{-1}]^N;\epsilon) }
     {\tau (t_{-M,0}+x-\frac{\epsilon}{2},t;\epsilon)},\\\label{pl-1tau}
P_L^{-1}:& = &1+\frac{w_1'}{\lambda}+\frac{w_2'}{\lambda^2}+\ldots
: = \frac{\tau (t_{-M,0}+x+\frac{\epsilon}{2},
t+[\lambda^{-1}]^N;\epsilon) }
     {\tau (t_{-M,0}+x+\frac{\epsilon}{2},t;\epsilon)},\\\label{prtau}
P_R:&= &\tilde w_0 + \tilde w_1\lambda+ \tilde w_2\lambda^{2}
+\ldots : = \frac{ \tau
(t_{-M,0}+x+\frac{\epsilon}{2},t+[\lambda]^M;\epsilon)}
     {\tau(t_{-M,0}+x-\frac{\epsilon}{2},t;\epsilon)},\\\label{pr-1tau}
     P_R^{-1}:&= &\tilde w_0 '+ \tilde w_1'\lambda+ \tilde
w_2'\lambda^{2} +\ldots : = \frac{
     \tau (t_{-M,0}+x-\frac{\epsilon}{2},t-[\lambda]^M; \epsilon)}
     {\tau(t_{-M,0}+x+\frac{\epsilon}{2},t;\epsilon)}.
     \end{eqnarray}
  For a given pair of wave
operators the tau-function is unique up to  a non-vanishing function
factor which is independent of $x$, $t_{-M,0}$ and
$t_{\alpha,n}$ with all $n\geq 0$ and $-M+1\leq\alpha\leq{N}$.\\

 In
this section  we shall give a transparent and detailed proof of the
{\bf existence of tau function} for the EBTH according to the Sato
theory (\cite{Sato Fay identity},\cite{DJKM}).\\
 Let $t$ and $t'$ be two different sequences of time variables with  $t_{-M,n}=t'_{-M,n}$, $n\geq 0, r=0$, then  HBI
eq.\eqref{HBE3} becomes
\begin{eqnarray}\notag
 && \res_{\lambda }
 \left\{\lambda^{m-1}P_L(x,t,\lambda)e^{\sum_{n\geq 0} \sum_ {\alpha=1}^{N}\frac{\Gamma ( 2-
\frac{\alpha-1}{N})}{\Gamma (n+2 - \frac{\alpha-1}{N} )}
\frac{\lambda^{N({n+1-\frac{\alpha-1}{N}})}}{\epsilon}(t_{\alpha,n}-t'_{\alpha,n})}
P_L^{-1}(x-m\epsilon,t',\lambda)\right\} =\\
\notag &&  \res_{\lambda }
\left\{\lambda^{m-1}P_R(x,t,\lambda)e^{ -\sum_{n\geq 0} \sum_
{\beta=-M+1}^{0}\frac{\Gamma ( 2+ \frac{\beta}{M})}{\Gamma (n+2 +
\frac{\beta}{M} )}
\frac{\lambda^{-M({n+1+\frac{\beta}{M}})}}{\epsilon}(t_{\beta,n}-t'_{\beta,n})}
P_R^{-1}(x-m\epsilon,t',\lambda)\right\}.\\\label{chan}
 \end{eqnarray}
 By a  straightforward computation, we can infer following lemma from eq.\eqref{chan},
 which are necessary for our main theorem on tau function.

 \begin{lemma}\label{log lemma}
 The following three identities hold
 \begin{eqnarray} \notag
&&\log{P_L(x,t,\lambda_1)}-\log{P_L(x,t-[\lambda_2^{-1}]^{N},\lambda_1)}\\
\label{pl=pl''} &&
=\log{P_L(x,t,\lambda_2)}-\log{P_L(x,t-[\lambda_1^{-1}]^{N},\lambda_2)}.
\end{eqnarray}
\begin{eqnarray}\notag
&&\log{P_L(x,t,\lambda_{1})}-\log{P_L(x+\epsilon,t+[\lambda_{2}]^{M},\lambda_{1})}
=\log{P_R(x,t,\lambda_{2})}-\log{P_R(x,t-[\lambda_1^{-1}]^{N},\lambda_{2})}.\\
\label{lr''1}
\end{eqnarray}
\begin{eqnarray}&& \notag
\log{P_R(x,t,\lambda_1)}-\log{P_R(x+\epsilon,t+[\lambda_2]^M,\lambda_1)} \\
 \label{prpr''} &&
=\log{P_R(x,t,\lambda_2)}-\log{P_R(x+\epsilon,t+[\lambda_1]^M,\lambda_2)}.
 \end{eqnarray}
 \end{lemma}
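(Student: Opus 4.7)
The plan is to derive each of the three identities from equation (\ref{chan}) by specializing the time shift $t-t'$ and the integer $m$, then eliminating $P_L^{-1}$ and $P_R^{-1}$ via two auxiliary ``unit'' identities. I first establish those units. Taking (\ref{chan}) with $m=1$ and $t'=t-[\mu^{-1}]^N$ (for a parameter $\mu$), the LHS exponential simplifies to $(1-\lambda/\mu)^{-1}$ after summing the generating series implicit in $[\mu^{-1}]^N$, while the $\lambda^{-1}$-residue on the RHS vanishes (since $P_RP_R^{-1}$ has no negative powers of $\lambda$). A short formal-residue computation then gives $P_L(x,t,\mu)P_L^{-1}(x-\epsilon,t-[\mu^{-1}]^N,\mu)=1$. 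The mirror argument with $m=0$, $t'=t+[\mu]^M$ produces $P_R(x,t,\mu)P_R^{-1}(x,t+[\mu]^M,\mu)=1$. Equivalently, $\log P_L^{-1}(x,t,\mu)=-\log P_L(x+\epsilon,t+[\mu^{-1}]^N,\mu)$ and $\log P_R^{-1}(x,t,\mu)=-\log P_R(x,t-[\mu]^M,\mu)$, which will let me rewrite every $\log P_L^{-1}$ and $\log P_R^{-1}$ in terms of shifted $\log P_L$ and $\log P_R$.

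For (\ref{pl=pl''}) I specialize (\ref{chan}) with $m=1$ and $t'=t-[\lambda_1^{-1}]^N-[\lambda_2^{-1}]^N$, so that the LHS exponential is $(1-\lambda/\lambda_1)^{-1}(1-\lambda/\lambda_2)^{-1}$ and the RHS exponential equals $1$. The RHS residue vanishes identically, and the LHS residue is computed via the partial-fraction decomposition
\begin{equation*}
\frac{1}{(1-\lambda/\lambda_1)(1-\lambda/\lambda_2)}=\frac{\lambda_2/(\lambda_2-\lambda_1)}{1-\lambda/\lambda_1}+\frac{\lambda_1/(\lambda_1-\lambda_2)}{1-\lambda/\lambda_2}
\end{equation*}
together with the elementary formula $\res_\lambda\lambda^0(1-\lambda/\mu)^{-1}f(\lambda)=\mu(f(\mu)-1)$ valid for $f(\lambda)=1+O(\lambda^{-1})$. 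The outcome is a product identity $P_L(x,t,\lambda_1)P_L^{-1}(x-\epsilon,t',\lambda_1)=P_L(x,t,\lambda_2)P_L^{-1}(x-\epsilon,t',\lambda_2)$. Taking logarithms and applying the $P_L$ unit relation (noting $t'+[\lambda_i^{-1}]^N=t-[\lambda_{3-i}^{-1}]^N$) produces (\ref{pl=pl''}).

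Identity (\ref{lr''1}) follows from the same template with $m=0$ and $t'=t+[\lambda_2]^M-[\lambda_1^{-1}]^N$: both exponentials are now non-trivial (namely $(1-\lambda/\lambda_1)^{-1}$ on the LHS and $(1-\lambda_2/\lambda)^{-1}$ on the RHS), and the residues collapse cleanly to the single products $P_L(x,t,\lambda_1)P_L^{-1}(x,t',\lambda_1)$ and $P_R(x,t,\lambda_2)P_R^{-1}(x,t',\lambda_2)$. Taking logarithms and applying both unit relations produces (\ref{lr''1}). Identity (\ref{prpr''}) is the mirror of (\ref{pl=pl''}): take $m=-1$ and $t'=t+[\lambda_1]^M+[\lambda_2]^M$, so the LHS exponential is trivial while the RHS exponential factors as $(1-\lambda_1/\lambda)^{-1}(1-\lambda_2/\lambda)^{-1}$. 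The $\lambda^1$-residue on the LHS vanishes and partial fractions on the RHS yield $P_R(x,t,\lambda_1)P_R^{-1}(x+\epsilon,t',\lambda_1)=P_R(x,t,\lambda_2)P_R^{-1}(x+\epsilon,t',\lambda_2)$, from which (\ref{prpr''}) follows by the $P_R$ unit relation.

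The main technical obstacle is the correct choice of expansion direction for the rational exponentials: $(1-\lambda/\mu)^{-1}$ admits two distinct formal expansions (about $\lambda=0$ or $\lambda=\infty$), and using the wrong one yields contradictions. The correct convention is forced by the structure of (\ref{chan}): LHS exponentials arise from exponents in positive powers of $\lambda$ and must therefore be expanded near $\lambda=0$, whereas RHS exponentials arise from exponents in negative powers of $\lambda$ and must be expanded near $\lambda=\infty$. Once this convention is fixed, the residue computations are mechanical partial-fractions together with the basic identity $\res_\lambda\lambda^{-1}(1-\lambda/\mu)^{-1}f(\lambda)=f(\mu)$ and its shifted analogs.
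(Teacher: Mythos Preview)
Your proposal is correct and follows essentially the same route as the paper: the same choices of $m$ and $t'$ in \eqref{chan}, the same partial-fraction reductions, and the same two ``unit'' identities $P_L(x,t,\mu)P_L^{-1}(x-\epsilon,t-[\mu^{-1}]^N,\mu)=1$ and $P_R(x,t,\mu)P_R^{-1}(x,t+[\mu]^M,\mu)=1$ to eliminate the inverse symbols. The only organizational difference is that you isolate the unit identities in advance via single time-shifts, whereas the paper obtains them as the $\lambda_2\to\infty$ (resp.\ $\lambda_1\to\infty$) specializations of the two-shift product identities \eqref{pl=pl} and \eqref{lr'}; the content is the same.
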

 \begin{proof}
For the proof of identity\eqref{pl=pl''}, we shall set $m=1,
t'=t-[\lambda_1^{-1}]^{N}-[\lambda_2^{-1}]^{N}$ in eq.\eqref{chan}.
 Using the identity
\begin{eqnarray*} \exp \Big(\sum_{n\geq 0}\sum_ {\alpha=0}^{N-1}
\frac{(\lambda_1^{-1}\lambda)^{N({n+1-\frac{\alpha}{N}})}}{N(n+1-\frac{\alpha}{N})}
\Big) =
 (1-\lambda_1^{-1}\lambda)^{-1},
\end{eqnarray*}   the
bilinear identity eq.\eqref{chan} gives
  \begin{eqnarray}\notag
 && \res_{\lambda }
 \left\{P_L(x,t,\lambda)
P_L^{-1}(x-\epsilon,t-[\lambda_1^{-1}]^{N}-[\lambda_2^{-1}]^{N},\lambda)\frac{1}{(1-\frac{\lambda}{\lambda_1})(1-\frac{\lambda}{\lambda_2})}\right\}
 =\\
\label{hunhe1}&&  \res_{\lambda }
 \left\{P_R(x,t,\lambda)
P_R^{-1}(x-\epsilon,t-[\lambda_1^{-1}]^{N}-[\lambda_2^{-1}]^{N},\lambda)\right\}.
 \end{eqnarray}
Using \begin{eqnarray*}
(1-\lambda_1^{-1}\lambda)^{-1}(1-\lambda_2^{-1}\lambda)^{-1} =
\frac{\lambda_1\lambda_2}{\lambda_2-\lambda_1}\{(1-\lambda_1^{-1}\lambda)^{-1}-
(1-\lambda_2^{-1}\lambda)^{-1}\}\lambda^{-1},
 \end{eqnarray*}
\begin{eqnarray*}
 \res_{\lambda }
 \left\{f(\lambda)\frac{1}{\lambda(1-\frac{\lambda}{\lambda_1})}\right\} =f(\lambda_1),
 \end{eqnarray*}
 where $f(\lambda)=1+\sum_{i=1}^{\infty}a_i\lambda^{-i}$ is a formal series of $\lambda$ , then
eq.\eqref{hunhe1} infers
\begin{eqnarray}\notag
&&P_L(x,t,\lambda_1)P_L^{-1}(x-\epsilon,t-[\lambda_1^{-1}]^{N}-[\lambda_2^{-1}]^{N},\lambda_1)\\
 \label{pl=pl} &&
=P_L(x,t,\lambda_2)P_L^{-1}(x-\epsilon,t-[\lambda_1^{-1}]^{N}-[\lambda_2^{-1}]^{N},\lambda_2).
\end{eqnarray}

Setting $\lambda_1=\lambda$ and $\lambda_2=\infty$, we obtain

\begin{eqnarray}\label{}
P_L(x,t,\lambda)P_L^{-1}(x-\epsilon,t-[\lambda^{-1}]^{N},\lambda)=1,
\end{eqnarray}
which is equivalent to
\begin{eqnarray}\label{plPL=1}
P_L^{-1}(x-\epsilon,t-[\lambda^{-1}]^{N},\lambda)=\frac{1}{P_L(x,t,\lambda)},
\end{eqnarray}
Using this identity, eq.\eqref{pl=pl} gives
\begin{eqnarray} \label{pl=pl'}
\frac{P_L(x,t,\lambda_1)}{P_L(x,t-[\lambda_2^{-1}]^{N},\lambda_1)}
=\frac{P_L(x,t,\lambda_2)}{P_L(x,t-[\lambda_1^{-1}]^{N},\lambda_2)}.
\end{eqnarray}
or equivalently to
eq.\eqref{pl=pl''}.

To prove identity\eqref{lr''1}, we shall set  $m=0,
t'=t-[\lambda_1^{-1}]^{N}+[\lambda_2]^{M}$ in eq.\eqref{chan}.\\
 In this case, using the identities
\begin{eqnarray*} \exp \Big(\sum_{n\geq 0}\sum_ {\alpha=0}^{N-1}
\frac{(\lambda_1^{-1}\lambda)^{N({n+1-\frac{\alpha}{N}})}}{N(n+1-\frac{\alpha}{N})}
\Big) =
 (1-\lambda_1^{-1}\lambda)^{-1},
\end{eqnarray*}

\begin{eqnarray*} \exp \left(\sum_{n\geq 0}\sum_ {\alpha=-M+1}^{0}
\frac{(\lambda_2\lambda^{-1})^{M({n+1+\frac{\alpha}{M}})}}{M(n+1+\frac{\alpha}{M})}
\right) =
 (1-\lambda_2\lambda^{-1})^{-1},
\end{eqnarray*}
the bilinear identity \eqref{chan} gives
  \begin{eqnarray}\notag
 && \res_{\lambda }
 \left\{P_L(x,t,\lambda)
P_L^{-1}(x,t-[\lambda_1^{-1}]^{N}+[\lambda_2]^{M},\lambda)\lambda^{-1}\frac{1}{1-\frac{\lambda}{\lambda_1}}\right\}
 =\\
\label{hunhe}&&  \res_{\lambda }
 \left\{P_R(x,t,\lambda)
P_R^{-1}(x,t-[\lambda_1^{-1}]^{N}+[\lambda_2]^{M},\lambda)\lambda^{-1}\frac{1}{1-\frac{\lambda_2}{\lambda}}\right\}.
 \end{eqnarray}
Consider another residue formula
\begin{eqnarray}\label{respositiveseries}
 \res_{\lambda }
 \left\{f(\lambda)\frac{1}{\lambda-\lambda_1}\right\} =f(\lambda_1),
 \end{eqnarray}
 where $f(\lambda)=a_0+\sum\limits_{i=1}^{\infty} a_i\lambda^i$ is a formal series of $\lambda$,
  eq. \eqref{hunhe} further leads to
\begin{eqnarray} \notag
&&P_L(x,t,\lambda_{1})P_L^{-1}(x,t-[\lambda_1^{-1}]^{N}+[\lambda_{2}]^{M},\lambda_{1})\\
\label{lr'}
&&=P_R(x,t,\lambda_{2})P_R^{-1}(x,t-[\lambda_1^{-1}]^{N}+[\lambda_{2}]^{M},\lambda_{2}).
\end{eqnarray}

Setting $\lambda_1=\infty$ and $\lambda_2=\lambda$ in above
equation, then
\begin{eqnarray}\label{prpr=1}
P_R(x,t,\lambda)P_R^{-1}(x,t+[\lambda]^{M},\lambda)=1,
\end{eqnarray}
\begin{eqnarray}\label{prpr=1'}
P_R^{-1}(x,t+[\lambda]^{M},\lambda)=\frac{1}{P_R(x,t,\lambda)}.
\end{eqnarray}
Using identity eq.\eqref{plPL=1}, eq.\eqref{lr'} and
eq.\eqref{prpr=1'}, we  get
\begin{eqnarray} \label{lr''}
&&\frac{P_L(x,t,\lambda_{1})}{P_L(x+\epsilon,t+[\lambda_{2}]^{M},\lambda_{1})}
=\frac{P_R(x,t,\lambda_{2})}{P_R(x,t-[\lambda_1^{-1}]^{N},\lambda_{2})},
\end{eqnarray}
which is equivalent to
eq.\eqref{lr''1}.

For proving identity\eqref{prpr''}, we  set $m=-1,t'=t+[\lambda_1]^M + [\lambda_2]^M$ in eq.\eqref{chan}.
The bilinear identity eq.\eqref{chan} gives
\begin{eqnarray}\notag
 && \res_{\lambda }
 \left\{P_L(x,t,\lambda)
P_L^{-1}(x+\epsilon,t+[\lambda_1]^{M}+[\lambda_2]^{M},\lambda)\lambda^{-2}\right\}
 =\\
\label{PRPR}&&  \res_{\lambda }
 \left\{P_R(x,t,\lambda)
P_R^{-1}(x+\epsilon,t+[\lambda_1]^{M}+[\lambda_2]^{M},\lambda)
\frac{\lambda^{-2}}{(1-\frac{\lambda_1}{\lambda})(1-\frac{\lambda_2}{\lambda})}\right\}.
 \end{eqnarray} Using formula
 \begin{eqnarray*}
(1-\lambda_1\lambda^{-1})^{-1}(1-\lambda_2\lambda^{-1})^{-1} =
\frac{1}{\lambda_1-\lambda_2}\{(1-\lambda_1\lambda^{-1})^{-1}-
(1-\lambda_2\lambda^{-1})^{-1}\}\lambda
\end{eqnarray*}
and residue formula eq.\eqref{respositiveseries}, eq.\eqref{PRPR} further gives
\begin{eqnarray}&& \notag
P_R(x,t,\lambda_1)P_R^{-1}(x+\epsilon,t+[\lambda_1]^M+[\lambda_2]^M,\lambda_1) \\
 \label{prpr} &&
=P_R(x,t,\lambda_2)P_R^{-1}(x+\epsilon,t+[\lambda_1]^M+[\lambda_2]^M,\lambda_2).
 \end{eqnarray}
Using identity \eqref{prpr=1}, eq.\eqref{prpr} leads
\begin{eqnarray}\label{prpr'} &&
\frac{P_R(x,t,\lambda_1)}{P_R(x+\epsilon,t+[\lambda_2]^M,\lambda_1)}
=\frac{P_R(x,t,\lambda_2)}{P_R(x+\epsilon,t+[\lambda_1]^M,\lambda_2)},
 \end{eqnarray}
which is equivalent to
eq.\eqref{prpr''}.
 So  the proof of the lemma is completed now.
 \end{proof}
By lemma \ref{log lemma}, we  get the following theorem.
\begin{theorem}\label{tau-function}
Given a pair of wave operators $\P_L$ and $\P_R$ of the EBTH there
exists a unique corresponding tau-function  up to a non-vanishing
function factor which is independent of $t_{-M,0}$ and
$t_{\alpha,n},\ n\geq 0,-M+1\leqslant\alpha\leqslant N-1$.
\end{theorem}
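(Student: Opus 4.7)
The plan is to construct $\log\tau$ in stages, using the three compatibility identities in Lemma \ref{log lemma} as closedness conditions that guarantee the existence of a potential. Throughout, I would exploit the fact that $x$ and $t_{-M,0}$ enter the Lax flows only through the combination $y = x + t_{-M,0}$, so $\tau$ may be regarded as a function of $y$ together with the remaining time variables; the $\pm\epsilon/2$ shifts in \eqref{pltau}--\eqref{pr-1tau} will be accommodated by the spatial shift $x \to x + \epsilon$ appearing in \eqref{lr''1}.

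First I would introduce the Taylor shift operators $T^L_\lambda f(t) := f(t-[\lambda^{-1}]^N)$ and $T^R_\lambda f(t) := f(t+[\lambda]^M)$, which move only the variables $t_{\alpha,n}$ with $1 \le \alpha \le N$ and with $-M+1 \le \alpha \le 0$ respectively. Identity \eqref{pl=pl''} then reads $(1-T^L_{\lambda_2})\log P_L(x,t,\lambda_1) = (1-T^L_{\lambda_1})\log P_L(x,t,\lambda_2)$, which is the integrability condition for the existence of a potential $\Phi_L$ satisfying $\log P_L(x,t,\lambda) = \Phi_L(y-\epsilon/2,\, t-[\lambda^{-1}]^N) - \Phi_L(y-\epsilon/2,\, t)$. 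Concretely, expanding $\log P_L = \sum_{k\geq 1} g_k(x,t)\lambda^{-k}$ and matching coefficients of monomials in $\lambda_1^{-1},\lambda_2^{-1}$ converts the identity into pairwise commutation of the formal partial derivatives $\partial_{\alpha,n}\Phi_L$, and one integrates the resulting closed system, with $\Phi_L$ fixed up to a constant of integration in the relevant times. Identity \eqref{prpr''} analogously produces $\Phi_R$ with $\log P_R(x,t,\lambda) = \Phi_R(y+\epsilon/2,\, t+[\lambda]^M) - \Phi_R(y-\epsilon/2,\, t)$, where the spatial shift $x\to x+\epsilon$ in \eqref{prpr''} is absorbed into $y\mapsto y+\epsilon$. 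Finally, the cross identity \eqref{lr''1} is precisely the gluing condition that identifies $\Phi_L$ and $\Phi_R$ as a single function $\log\tau(y,t)$, after choosing the additive integration constants consistently.

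To finish, I would verify that the same $\tau$ automatically reproduces $P_L^{-1}$ and $P_R^{-1}$ via \eqref{pl-1tau} and \eqref{pr-1tau}; this is forced by the inverse relations \eqref{plPL=1} and \eqref{prpr=1} derived in the proof of Lemma \ref{log lemma}, which already fix the ratio form of $P_L^{-1}, P_R^{-1}$ once that of $P_L, P_R$ is known. Uniqueness is then transparent: if $\tilde\tau$ is a second tau function for the same wave operators, the ratio $\tilde\tau/\tau$ must be annihilated by $1-T^L_\lambda$, by $1-T^R_\lambda$, and by the shift $y\mapsto y+\epsilon$ (coming from the $P_R$ formula), hence it can depend only on the time variables listed in the theorem. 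The main technical obstacle is the gluing step: since $T^L_\lambda$ and $T^R_\lambda$ act on disjoint blocks of times, tying $\Phi_L$ and $\Phi_R$ together through \eqref{lr''1} forces me to match the spatial shift $x\to x+\epsilon$ against the $y$-shift and to align the two separate integration constants, so that a single globally defined $\log\tau(y,t)$ emerges satisfying all four ratio formulas simultaneously.
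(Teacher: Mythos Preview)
Your route is valid but genuinely different in organization from the paper's. The paper does not build $\Phi_L$ and $\Phi_R$ separately and then glue; instead it writes down a single explicit 1-form $\omega=\omega_L+\omega_R$ (in the variables $t_{\alpha,n}$, $-M+1\le\alpha\le N$) via residue formulas in $\log P_L$ and $\log P_R$, uses Lemma~\ref{log lemma} to show $\omega(t)-\omega(t-[\lambda^{-1}]^N)=-d\log P_L$ and $\omega(t)-\omega(x+\epsilon,t+[\lambda]^M)=-d\log P_R$, and then proves $d\omega=0$ by a translation-invariance argument: $d\omega$ is shown to have constant coefficients $a_{\alpha,n,\beta,m}$, and substituting back into Lemma~\ref{log lemma} forces $a_{\alpha,n,\beta,m}=a_{\beta,m,\alpha,n}$, whence $a=0$ by antisymmetry. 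Integration then gives $\omega=d\log\tau$, and the residual integration ``constants'' $H_L(\epsilon,x,t_{-M,n},\lambda)$, $H_R(\epsilon,x,t_{-M,n},\lambda)$ are shown to vanish by feeding the result back into the definition of $\omega$.

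Your two-potential approach is the natural alternative, and the cocycle reading of \eqref{pl=pl''} and \eqref{prpr''} is correct. What you call the ``main technical obstacle'' --- the gluing step --- is exactly where the paper's argument does its real work: the vanishing of the \emph{mixed} coefficients $a_{\alpha,n,\beta,m}$ (one index in the $L$-range, one in the $R$-range) comes from the cross identity \eqref{lr''1} via \eqref{eqhunhe''}, and this is what forces $\Phi_L$ and $\Phi_R$ to coalesce into a single $\log\tau$. In your language, \eqref{lr''1} only gives $(T^L_{\lambda_1}-1)(S_{\lambda_2}-1)(\Phi_L-\Phi_R)=0$, so you still need to argue that the remaining freedom in $\Phi_L$ (a function of $y$, the $R$-times, and $t_{-M,n}$) and in $\Phi_R$ (a function of $y$, the $L$-times, and $t_{-M,n}$) can be absorbed to make $\Phi_L=\Phi_R$; the paper's symmetry/antisymmetry trick is one clean way to close this, and your sketch does not yet supply a substitute. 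Your uniqueness argument and the handling of $P_L^{-1},P_R^{-1}$ via \eqref{plPL=1}, \eqref{prpr=1'} match the paper.
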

\begin{proof}
As \cite{DJKM}, the proof is a little complicated and  the  process can be divided
into there steps. For the first step, we shall  define a 1-form
$\omega$, and then give the translational invariance of $d\omega$.
Then we will prove the 1-form  is closed in the second step which
leads to the existence of tau function $\tau(t)$. The third step is
devoted  to give the certain value of integration constants such
that we can get the symbols of dressing operators by $\tau(t)$. To
this end, define
\begin{eqnarray}\notag
\omega_L(\epsilon,x,t)&: =&-\sum_ {\alpha=1}^{N}\sum_ {n\geq 0}d
 t_{\alpha,n}\res_{\lambda}\left\{\frac{\Gamma ( 2-
\frac{\alpha-1}{N})}{\Gamma (n+2 - \frac{\alpha-1}{N} )}
\frac{\lambda^{N({n+1-\frac{\alpha-1}{N}})}}{\epsilon}
(\frac{\d}{\d \lambda}+\sum_{n'\geq 0}\sum_
{\alpha'=1}^{N}\right.\\
 && \left.\frac{\Gamma (n'+2 - \frac{\alpha'-1}{N} )}{\Gamma (
2- \frac{\alpha'-1}{N})}
 \epsilon
{\lambda^{-N(n'+1-\frac{{\alpha'-2}}{N})}
 \frac{\partial}{\partial{
t_{\alpha' ,n'}}}})\log{P_L(x,t,\lambda)}\right\}.
 \end{eqnarray}

\begin{eqnarray}\notag
\omega_R(\epsilon,x,t)&:=&\sum_ {\beta=-M+1}^{0}\sum_ {n\geq 0}d
 t_{\beta,n}\res_{\lambda}\left\{\frac{\Gamma ( 2+
\frac{\beta}{M})}{\Gamma (n+2 + \frac{\beta}{M} )}
\frac{\lambda^{-M({n+1+\frac{\beta}{M}})}}{\epsilon}(\frac{\d}{\d
\lambda}-\sum_{n'\geq 0}\sum_ {\beta'=-M+1}^{0}\right.\\
 && \left.\frac{\Gamma (n'+2+
\frac{\beta'}{M} )}{\Gamma ( 2+ \frac{\beta'}{M})}
 \epsilon
{\lambda^{M(n'+1+\frac{{\beta'-1}}{M})}
 \frac{\partial}{\partial{
t_{\beta' ,n'}}}})\log{P_R(x,t,\lambda)}\right\}.
 \end{eqnarray}
Using the three identity eq.\eqref{pl=pl''}, eq.\eqref{lr''1} and
eq.\eqref{prpr''} in lemma \ref{log lemma}, we  get
\begin{eqnarray}
\label{eq1}
\omega_L(x,t)-\omega_L(x,t-[\lambda^{-1}]^N)&=&-d_L\log P_L(x,t,\lambda),\\
\label{eq2}\omega_R(x,t)-\omega_R(x,t-[\lambda^{-1}]^N)&=&-d_R\log P_L(x,t,\lambda),\\
\label{eq3}\omega_L(x,t)-\omega_L(x+\epsilon,t+[\lambda]^M)&=&-d_L\log P_R(x,t,\lambda),\\
\label{eq4}\omega_R(x,t)-\omega_R(x+\epsilon,t+[\lambda]^M)&=&-d_R\log
P_R(x,t,\lambda),
 \end{eqnarray}
where
\begin{eqnarray}\notag
d_L&=&\sum_ {\alpha=1}^{N}\sum_ {n\geq 0}d
 t_{\alpha,n}\frac{\d}{\d t_{\alpha,n}},\\
d_R&=&\sum_ {\beta=-M+1}^{0}\sum_ {n\geq 0}d
 t_{\beta,n}\frac{\d}{\d t_{\beta,n}}.
 \end{eqnarray}
 Here we only give the proof of eq.\eqref{eq1} using identity eq.\eqref{pl=pl''} in the following, the other there equations can be got in the same way.
 \begin{eqnarray*}
&&\omega_L(x,t)-\omega_L(x,t-[\lambda^{-1}]^N)\\
& =&-\sum_ {\alpha=1}^{N}\sum_ {n\geq 0}d
 t_{\alpha,n}\res_{\lambda_1}\left\{\frac{\Gamma ( 2-
\frac{\alpha-1}{N})}{\Gamma (n+2 - \frac{\alpha-1}{N} )}
\frac{\lambda_1^{N({n+1-\frac{\alpha-1}{N}})}}{\epsilon}
(\frac{\d}{\d \lambda_1}+\sum_{n'\geq 0}\sum_
{\alpha'=1}^{N}\right.\\
 && \left.\frac{\Gamma (n'+2 - \frac{\alpha'-1}{N} )}{\Gamma (
2- \frac{\alpha'-1}{N})}
 \epsilon
{\lambda_1^{-N(n'+1-\frac{{\alpha'-2}}{N})}
 \frac{\partial}{\partial{
t_{\alpha' ,n'}}}})(\log{P_L(x,t,\lambda_1)}-\log{P_L(x,t-[\lambda^{-1}]^N,\lambda_1)})\right\}\\
& =&-\sum_ {\alpha=1}^{N}\sum_ {n\geq 0}d
 t_{\alpha,n}\res_{\lambda_1}\left\{\frac{\Gamma ( 2-
\frac{\alpha-1}{N})}{\Gamma (n+2 - \frac{\alpha-1}{N} )}
\frac{\lambda_1^{N({n+1-\frac{\alpha-1}{N}})}}{\epsilon}
(\frac{\d}{\d \lambda_1}+\sum_{n'\geq 0}\sum_
{\alpha'=1}^{N}\right.\\
 && \left.\frac{\Gamma (n'+2 - \frac{\alpha'-1}{N} )}{\Gamma (
2- \frac{\alpha'-1}{N})}
 \epsilon
{\lambda_1^{-N(n'+1-\frac{{\alpha'-2}}{N})}
 \frac{\partial}{\partial{
t_{\alpha' ,n'}}}})(\log{P_L(x,t,\lambda)}-\log{P_L(x,t-[\lambda_1^{-1}]^N,\lambda)})\right\}\\
& =&-\sum_ {\alpha=1}^{N}\sum_ {n\geq 0}d
 t_{\alpha,n}\res_{\lambda_1}\left\{\frac{\Gamma ( 2-
\frac{\alpha-1}{N})}{\Gamma (n+2 - \frac{\alpha-1}{N} )}
\frac{\lambda_1^{N({n+1-\frac{\alpha-1}{N}})}}{\epsilon}
\sum_{n'\geq 0}\sum_
{\alpha'=1}^{N}\right.\\
 && \left.\frac{\Gamma (n'+2 - \frac{\alpha'-1}{N} )}{\Gamma (
2- \frac{\alpha'-1}{N})}
 \epsilon
{\lambda_1^{-N(n'+1-\frac{{\alpha'-2}}{N})}
 \frac{\partial}{\partial{
t_{\alpha' ,n'}}}}\log{P_L(x,t,\lambda)}\right\}\\
& =&-\sum_ {\alpha=1}^{N}\sum_ {n\geq 0}d
 t_{\alpha,n}
 \frac{\partial}{\partial{
t_{\alpha ,n}}}\log{P_L(x,t,\lambda)}\\
&=&-d_L\log P_L(x,t,\lambda).
 \end{eqnarray*}
 In the calculation above, we should note the following identity holds
  \begin{eqnarray*}
(\frac{\d}{\d \lambda_1}+\sum_{n'\geq 0}\sum_
{\alpha'=1}^{N}\frac{\Gamma (n'+2 - \frac{\alpha'-1}{N} )}{\Gamma (
2- \frac{\alpha'-1}{N})}
 \epsilon
{\lambda_1^{-N(n'+1-\frac{{\alpha'-2}}{N})}
 \frac{\partial}{\partial{
t_{\alpha' ,n'}}}})\log{P_L(x,t-[\lambda_1^{-1}]^N,\lambda)}=0.
 \end{eqnarray*}
 In fact equations  \eqref{eq1}-\eqref{eq4} can be seen as a generalization of  eqs.(3.16) in \cite{Fay-like}.
Moreover we define
\begin{eqnarray}\notag
d &=&d_L + d_R,\ \ \ \ \ \ \omega(x,t)=\omega_L(x,t)+\omega_R(x,t).
 \end{eqnarray}
 Eq.\eqref{eq1} and  Eq.\eqref{eq2} lead to
\begin{eqnarray}
\label{eq12} \omega(x,t)-\omega(x,t-[\lambda^{-1}]^N)&=&-d\log
P_L(x,t,\lambda).
 \end{eqnarray}
   Eq.\eqref{eq3} and  Eq.\eqref{eq4} lead to
\begin{eqnarray}
\label{eq34} \omega(x,t)-\omega(x+\epsilon,t+[\lambda]^M)&=&-d\log
P_R(x,t,\lambda).
 \end{eqnarray}
When $\lambda=0$, eq.\eqref{eq34}lead to
\begin{eqnarray}
\label{eq34'} \omega(x,t)-\omega(x+\epsilon,t)&=&-d\log \tilde
w_0(x,t).
 \end{eqnarray}
 Differentiate both sides of equations in eq.\eqref{eq12}, eq.\eqref{eq34}
 and
 eq.\eqref{eq34'}, we get

\begin{eqnarray}
\label{eq12'} d \omega(x,t)&=&d \omega(x,t-[\lambda^{-1}]^N),\\
\label{equ34'} d\omega(x,t)&=&d\omega(x+\epsilon,t+[\lambda]^M),\\
 \label{eq34''} d \omega(x,t)&=&d
\omega(x+\epsilon,t),
 \end{eqnarray}
 which shows $d \omega(x,t)$ is independent of $x,t_{\alpha,n}, -M+1\leq\alpha \leq N, n\geq 0$.
Without loss of generality, we can assume
\begin{eqnarray}
\label{eq} d \omega(x,t)&=&\sum_{\alpha,\beta=-M+1}^N\sum_{n,m\geq
0}a(\epsilon)_{\alpha,n,\beta,m}dt_{\alpha,n}\wedge dt_{\beta,m}
 \end{eqnarray}
 where $a(\epsilon)_{\alpha,n,\beta,m}$ are independent of $x,t_{\alpha,n}, -M+1\leq\alpha \leq N, n\geq 0$
 and
 $a(\epsilon)_{\alpha,n,\beta,m}=-a(\epsilon)_{\beta,m,\alpha,n}$.
 So
\begin{eqnarray}
\label{eq'} \omega(x,t)&=&\sum_{\beta=-M+1}^N\sum_{m\geq
0}(\sum_{\alpha=-M+1}^N\sum_{n\geq
0}a(\epsilon)_{\alpha,n,\beta,m}t_{\alpha,n}) dt_{\beta,m}+d
F(\epsilon,x,t)
 \end{eqnarray}
 for arbitrary function $F(\epsilon,x,t)$.
 Taking $ \omega(x,t)$ in  eq.\eqref{eq'} back into the equation \eqref{eq12} and
 \eqref{eq34}, then
\begin{eqnarray}
 \notag -d\log P_L(x,t,\lambda)&=&d F(x,t)-d
F(x,t-[\lambda^{-1}]^N)+\sum_{\beta=-M+1}^N\sum_{m\geq
0}(\sum_{\alpha=1}^N\sum_{n\geq 0}\\\label{eq1'2'}
&&a_{\alpha,n,\beta,m}\frac{\Gamma (n+1 - \frac{\alpha-1}{N}
)}{N\Gamma ( 2- \frac{\alpha-1}{N})}
 \epsilon
{\lambda^{-N(n+1-\frac{{\alpha-1}}{N})}
 }) dt_{\beta,m},
 \end{eqnarray}

\begin{eqnarray}
\notag -d\log P_R(x,t,\lambda)&=&d F(x,t)-d
F(x+\epsilon,t+[\lambda]^M)-\sum_{\beta=-M+1}^N\sum_{m\geq
0}(\sum_{\alpha=-M+1}^0\sum_{n\geq 0}\\\label{eq3'4'}
&&a_{\alpha,n,\beta,m}\frac{\Gamma (n+1 +\frac{\alpha}{M}
)}{M\Gamma ( 2+\frac{\alpha}{M})}
 \epsilon {\lambda^{M(n+1+\frac{{\alpha}}{M})}
 }) dt_{\beta,m}.
 \end{eqnarray}
Furthermore, two identities above lead to
\begin{eqnarray}
\label{eq12''} \notag&&\log P_L(x,t,\lambda)\\\notag
&=&F(x,t-[\lambda^{-1}]^N)-F(x,t)-
\sum_{\beta=-M+1}^N\sum_{m\geq 0}(\sum_{\alpha=1}^N\sum_{n\geq 0}\\
&&a_{\alpha,n,\beta,m}\frac{\Gamma (n+1 - \frac{\alpha-1}{N}
)}{N\Gamma ( 2- \frac{\alpha-1}{N})}
 \epsilon
{\lambda^{-N(n+1-\frac{{\alpha-1}}{N})}
 }) t_{\beta,m}+H_L(\epsilon, x,t_{-M,n},\lambda),
 \end{eqnarray}
\begin{eqnarray}
\label{eq34''}\notag &&\log P_R(x,t,\lambda)\\\notag &=&
F(x+\epsilon,t+[\lambda]^M)-F(x,t)+\sum_{\beta=-M+1}^N\sum_{m\geq
0}(\sum_{\alpha=-M+1}^0\sum_{n\geq
0}\\
&&a_{\alpha,n,\beta,m}\frac{\Gamma (n+1 +\frac{\alpha}{M} )}{M\Gamma
( 2+\frac{\alpha}{M})}
 \epsilon {\lambda^{M(n+1+\frac{{\alpha}}{M})}
 }) t_{\beta,m}+H_R(\epsilon, x,t_{-M,n},\lambda),
 \end{eqnarray}
where the functions $H_L(\epsilon,
x,t_{-M,n},\lambda)=\sum_{i=1}^{\infty}H_{li}(\epsilon,
x,t_{-M,n})\lambda^{-i}$ and \\
$H_R(\epsilon,
x,t_{-M,n},\lambda)=\sum_{i=0}^{\infty}H_{ri}(\epsilon,
x,t_{-M,n})\lambda^{i}$ are independent on $t_{\alpha,n} (-M+1\leq
\alpha \leq N)$. Taking these results back into eq.\eqref{pl=pl''},
eq.\eqref{lr''1} and eq.\eqref{prpr''}, then
\begin{eqnarray}
 \notag&&\sum_{\beta=1}^N\sum_{m\geq
0}(\sum_{\alpha=1}^N\sum_{n\geq 0}a_{\alpha,n,\beta,m}\frac{\Gamma
(n+1 - \frac{\alpha-1}{N} )}{N\Gamma ( 2- \frac{\alpha-1}{N})}
 \epsilon
{\lambda_1^{-N(n+1-\frac{{\alpha-1}}{N})}
 })\frac{\Gamma (m+1 - \frac{\beta-1}{N}
)}{N\Gamma ( 2- \frac{\beta-1}{N})}
 \epsilon{\lambda_2^{-N(m+1-\frac{{\beta-1}}{N})}}=\\\notag
&&\sum_{\beta=1}^N\sum_{m\geq 0}(\sum_{\alpha=1}^N\sum_{n\geq
0}a_{\alpha,n,\beta,m}\frac{\Gamma (n+1 - \frac{\alpha-1}{N}
)}{N\Gamma ( 2- \frac{\alpha-1}{N})}
 \epsilon
{\lambda_2^{-N(n+1-\frac{{\alpha-1}}{N})}
 })\frac{\Gamma (m+1 - \frac{\beta-1}{N}
)}{N\Gamma ( 2- \frac{\beta-1}{N})}
 \epsilon{\lambda_1^{-N(m+1-\frac{{\beta-1}}{N})}},\\\label{eq12'''}
 \end{eqnarray}

\begin{eqnarray}
 \notag&&\sum_{\beta=-M+1}^0\sum_{m\geq
0}(\sum_{\alpha=1}^N\sum_{n\geq 0}a_{\alpha,n,\beta,m}\frac{\Gamma
(n+1 - \frac{\alpha-1}{N} )}{N\Gamma ( 2- \frac{\alpha-1}{N})}
 \epsilon
{\lambda_1^{-N(n+1-\frac{{\alpha-1}}{N})}
 })\frac{\Gamma (m+1 + \frac{\beta}{M}
)}{M\Gamma ( 2+ \frac{\beta}{M})}
 \epsilon{\lambda_2^{M(m+1+\frac{{\beta}}{M})}}\\
 \notag&&+H_L(\epsilon, x,\lambda_1)-H_L(\epsilon,
 x+\epsilon,\lambda_1)=\\\notag
&&\sum_{\beta=1}^N\sum_{m\geq 0}(\sum_{\alpha=-M+1}^0\sum_{n\geq
0}a_{\alpha,n,\beta,m}\frac{\Gamma (n+1+\frac{\alpha}{M}
)}{M\Gamma ( 2+ \frac{\alpha}{M})}
 \epsilon
{\lambda_2^{M(n+1+\frac{{\alpha}}{M})}
 })\frac{\Gamma (m+1 - \frac{\beta-1}{N}
)}{N\Gamma ( 2- \frac{\beta-1}{N})}
 \epsilon{\lambda_1^{-N(m+1-\frac{{\beta-1}}{N})}},\\\label{eqhunhe''}
 \end{eqnarray}
\begin{eqnarray}
 \notag&&\sum_{\beta=-M+1}^0\sum_{m\geq
0}(\sum_{\alpha=-M+1}^0\sum_{n\geq
0}a_{\alpha,n,\beta,m}\frac{\Gamma (n+1+\frac{\alpha}{M}
)}{M\Gamma ( 2+ \frac{\alpha}{M})}
 \epsilon
{\lambda_1^{M(n+1+\frac{{\alpha}}{M})}
 })\frac{\Gamma (m+1 + \frac{\beta}{M}
)}{M\Gamma ( 2+ \frac{\beta}{M})}
 \epsilon{\lambda_2^{M(m+1+\frac{{\beta}}{M})}}=\\\notag
&&\sum_{\beta=-M+1}^0\sum_{m\geq
0}(\sum_{\alpha=-M+1}^0\sum_{n\geq
0}a_{\alpha,n,\beta,m}\frac{\Gamma (n+1+\frac{\alpha}{M}
)}{M\Gamma ( 2+ \frac{\alpha}{M})}
 \epsilon
{\lambda_2^{M(n+1+\frac{{\alpha}}{M})}
 })\frac{\Gamma (m+1 + \frac{\beta}{M}
)}{M\Gamma ( 2+ \frac{\beta}{M})}
 \epsilon{\lambda_1^{M(m+1+\frac{{\beta}}{M})}}.\\\label{eq34i'''}
 \end{eqnarray}
 By comparing the
coefficients on both sides of
eq.\eqref{eq12'''}-eq.\eqref{eq34i'''}, we get
$a_{\alpha,n,\beta,m}=a_{\beta,m,\alpha,n}$. With the help of the
defining condition in eq.\eqref{eq}, i.e,
$a_{\alpha,n,\beta,m}=-a_{\beta,m,\alpha,n}$,
$a_{\alpha,n,\beta,m}=0$ hold for all $-M+1\leq \alpha,\beta\leq N;
n,m\geq 0$. So from identity \eqref{eq}, we have $d \omega(x,t)=0$.
 We thus  conclude that there {\bf exists a
non-vanishing function $\tau(\epsilon,x,t)$} such that
\begin{eqnarray}
\label{eq34'''} \omega(\epsilon, x,t)&=&d \log
\tau(x-\frac{\epsilon}{2},t).
 \end{eqnarray}
 In fact the function $\tau(x-\frac{\epsilon}{2}, t)$ can be written in another form as  $\tau(t_{-M,0}+x-\frac{\epsilon}{2},\bar t)$, where $\bar t$ is denoted as all the other time variables except $t_{-M,0}$.
  Therefore eq.\eqref{eq34'} can be rewritten as
 \begin{eqnarray}\label{w0tau} \tilde w_0(\epsilon,
x,t)&=&\frac{\tau(x+\frac{\epsilon}{2},t)}{\tau(x-\frac{\epsilon}{2},t)}.
 \end{eqnarray}

From eq.\eqref{eq'}, we can take $F(\epsilon, x,t)=\log
\tau(x-\frac{\epsilon}{2},t)$. So eq.\eqref{eq12''} and
eq.\eqref{eq34''} give us
\begin{eqnarray}\notag
\log P_L(x,t,\lambda)
&=&\log\tau(x-\frac{\epsilon}{2},t-[\lambda^{-1}]^N)-\log\tau(x-\frac{\epsilon}{2},t)+
H_L(\epsilon, x,t_{-M,n},\lambda),\\ \label{eq12''''}   \\ \notag
\log P_R(x,t,\lambda) &=&
\log\tau(x+\frac{\epsilon}{2},t+[\lambda]^M)-\log\tau(x-\frac{\epsilon}{2},t)+H_R(\epsilon,
x,t_{-M,n},\lambda).\\\label{eq34''''}
 \end{eqnarray}
 Substituting these into the definition of $\omega$ and using
 eq.\eqref{eq34'''} we will see that $H_L(\epsilon, x,\lambda), H_R(\epsilon,
 x,\lambda)$ are all zero. eq.\eqref{eq12''''}, eq.\eqref{plPL=1}, eq.\eqref{eq34''''} and  eq.\eqref{prpr=1'}
 will give  the equations eq.\eqref{pltau},  eq.\eqref{pl-1tau}, eq.\eqref{prtau} and eq.\eqref{pr-1tau} in the definition of $\tau$.
So the proof of existence of tau function is finished.
\end{proof}

Next we shall consider the Fay-like idenities on the tau functions.
To this end, by taking the definition  of tau function in
\eqref{pltau},  \eqref{pl-1tau}, \eqref{prtau} and \eqref{pr-1tau}
into eq.\eqref{chan} and  replacing $x-\frac{\epsilon}{2}$ by $x$ in the tau
function, we  get the following
Hirota bilinear identity
\begin{eqnarray}\notag
&& \res_{\lambda }
 \left\{ \lambda^{m-1}
\tau(x,t-[\lambda^{-1}]^N)\times
\tau(x-(m-1)\epsilon,t'+[\lambda^{-1}]^N) e^{\xi_L(t-t')} \right\}
\\\label{HBE511} &&=  \res_{\lambda }
 \left\{ \lambda^{m-1}
\tau(x+\epsilon,t+[\lambda]^M)\times
\tau(x-m\epsilon,t'-[\lambda]^M) e^{\xi_R(t-t')}\right\} ,
\end{eqnarray}
where\begin{eqnarray*} \xi_L(t-t')&=&\sum_{n\geq 0} \sum_
{\alpha=1}^{N}\frac{\Gamma ( 2- \frac{\alpha-1}{N})}{\Gamma (n+2 -
\frac{\alpha-1}{N} )}
\frac{\lambda^{N({n+1-\frac{\alpha-1}{N}})}}{\epsilon}(t_{\alpha,n}-t'_{\alpha,n}),\\
\xi_R(t-t')&=&-\sum_{n\geq 0} \sum_ {\beta=-M+1}^{0}\frac{\Gamma (
2+ \frac{\beta}{M})}{\Gamma (n+2 + \frac{\beta}{M} )}
\frac{\lambda^{-M({n+1+\frac{\beta}{M}})}}{\epsilon}(t_{\beta,n}-t'_{\beta,n}).
\end{eqnarray*}
To better understand these identities, following special cases are given explicitly.

Similar to \cite{Fay-like}, we can choose other cases in different
values of $m, t, t'$ which lead to the following Fay-like
identities:\\

{\bf  \Rmnum{1}.} $m=0, t'=t-[\lambda_1^{-1}]^N-[\lambda_2^{-1}]^N$.
In this case the Hirota bilinear identity \eqref{HBE511} will lead to
\begin{eqnarray*}\label{HBE6}
&& \res_{\lambda }
 \left\{ \tau(x,t-[\lambda^{-1}]^N)\times
\tau(x+\epsilon,t'+[\lambda^{-1}]^N)\frac{1}{(1-\lambda
\lambda_1^{-1})(1-\lambda \lambda_2^{-1})}\frac{1}{\lambda}
\right\}  \\
&&=  \res_{\lambda }
 \left\{
\tau(x+\epsilon,t+[\lambda]^M)\times
\tau(x,t'-[\lambda]^M)\frac{1}{\lambda} \right\}.
\end{eqnarray*}
Using \begin{eqnarray*}
(1-\lambda_1^{-1}\lambda)^{-1}(1-\lambda_2^{-1}\lambda)^{-1} =
\frac{\lambda_1^{-1}}{\lambda_1^{-1}-\lambda_2^{-1}}(1-\lambda_1^{-1}\lambda)^{-1}-\frac{\lambda_2^{-1}}{\lambda_1^{-1}-\lambda_2^{-1}}
(1-\lambda_2^{-1}\lambda)^{-1}, \end{eqnarray*} we  get
\begin{eqnarray*}\label{HBE6}
&&  \frac{\lambda_1^{-1}}{\lambda_1^{-1}-\lambda_2^{-1}}
\tau(x,t-[\lambda_1^{-1}]^N) \tau(x+\epsilon,t'+[\lambda_1^{-1}]^N)
-\frac{\lambda_2^{-1}}{\lambda_1^{-1}-\lambda_2^{-1}}
\tau(x,t-[\lambda_2^{-1}]^N)
\tau(x+\epsilon,t'+[\lambda_2^{-1}]^N)  \\
&&= \tau(x+\epsilon,t) \tau(x,t').
\end{eqnarray*}
It further leads to
\begin{eqnarray}\label{HBE61}\notag
&&  \frac{\lambda_1^{-1}}{\lambda_1^{-1}-\lambda_2^{-1}}
\tau(x,t-[\lambda_1^{-1}]^N) \tau(x+\epsilon,t-[\lambda_2^{-1}]^N)
-\frac{\lambda_2^{-1}}{\lambda_1^{-1}-\lambda_2^{-1}}
\tau(x,t-[\lambda_2^{-1}]^N)
\tau(x+\epsilon,t-[\lambda_1^{-1}]^N)  \\
&&= \tau(x+\epsilon,t)
\tau(x,t-[\lambda_1^{-1}]^N-[\lambda_2^{-1}]^N).
\end{eqnarray}

{\bf  \Rmnum{2}.} $m=0, t'=t+[\lambda_1]^M+[\lambda_2]^M$. In this
case the Hirota bilinear identity \eqref{HBE511} will lead to
\begin{eqnarray*}\label{HBE62}
&& \res_{\lambda }
 \left\{ \tau(x,t-[\lambda^{-1}]^N)\times
\tau(x+\epsilon,t'+[\lambda^{-1}]^N)\lambda^{-1}
\right\} \\
&&=  \res_{\lambda }
 \left\{
\tau(x+\epsilon,t+[\lambda]^M)\times
\tau(x,t'-[\lambda]^M)\lambda^{-1}\frac{1}{(1-\lambda^{-1}
\lambda_1)(1-\lambda^{-1} \lambda_2)} \right\}.
\end{eqnarray*}
Using \begin{eqnarray*}
(1-\lambda^{-1}\lambda_1)^{-1}(1-\lambda_2\lambda^{-1})^{-1} =
\frac{\lambda_1}{\lambda_1-\lambda_2}(1-\lambda_1\lambda^{-1})^{-1}-\frac{\lambda_2}{\lambda_1-\lambda_2}
(1-\lambda_2\lambda^{-1})^{-1}, \end{eqnarray*} we  get
\begin{eqnarray*}\label{HBE62}
&&\tau(x,t) \tau(x+\epsilon,t')\\
&=& \frac{\lambda_1}{\lambda_1-\lambda_2}
\tau(x+\epsilon,t+[\lambda_1]^M) \tau(x,t'-[\lambda_1]^M)
-\frac{\lambda_2}{\lambda_1-\lambda_2}
\tau(x+\epsilon,t+[\lambda_2]^M) \tau(x,t'-[\lambda_2]^M).
\end{eqnarray*}
It further leads to
\begin{eqnarray}\label{HBE62'}
&&\tau(x,t) \tau(x+\epsilon,t+[\lambda_1]^M+[\lambda_2]^M)\\\notag
&=& \frac{\lambda_1}{\lambda_1-\lambda_2}
\tau(x+\epsilon,t+[\lambda_1]^M) \tau(x,t+[\lambda_2]^M)
-\frac{\lambda_2}{\lambda_1-\lambda_2}
\tau(x+\epsilon,t+[\lambda_2]^M) \tau(x,t+[\lambda_1]^M).
\end{eqnarray}

{\bf  \Rmnum{3}.} $m=1, t'=t-[\lambda_1^{-1}]^N+[\lambda_2]^M$. In
this case the Hirota bilinear identity \eqref{HBE511} will lead to
\begin{eqnarray*}\label{HBE62}
&& \res_{\lambda }
 \left\{ \tau(x,t-[\lambda^{-1}]^N)\times
\tau(x,t'+[\lambda^{-1}]^N)\frac{1}{1-\lambda \lambda_1^{-1}}
\right\} \\
&&=  \res_{\lambda }
 \left\{
\tau(x+\epsilon,t+[\lambda]^M)\times
\tau(x-\epsilon,t'-[\lambda]^M)\frac{1}{1-\lambda^{-1} \lambda_2}
\right\} ,
\end{eqnarray*}
which is equivalent to
\begin{eqnarray*}\label{HBE62'''}
 \lambda_1 (\tau(x,t-[\lambda_1^{-1}]^N) \tau(x,t'+[\lambda_1^{-1}]^N)-\tau(x,t) \tau(x,t')) =
\lambda_2\tau(x+\epsilon,t+[\lambda_2]^M)
\tau(x-\epsilon,t'-[\lambda_2]^M).
\end{eqnarray*}
It further implies
\begin{eqnarray}\notag
 \lambda_1 (\tau(x,t-[\lambda_1^{-1}]^N)\tau(x,t+[\lambda_2]^M)-\tau(x,t) \tau(x,t-[\lambda_1^{-1}]^N+[\lambda_2]^M))
 \\\label{HBE62''}
=\lambda_2\tau(x+\epsilon,t+[\lambda_2]^M)
\tau(x-\epsilon,t-[\lambda_1^{-1}]^N).
\end{eqnarray}
 These identities  can
be used to prove  the Adler-Shiota-van Moerbeke (ASvM) formula. We tried that but got stuck by some difficulty.
We will omit it because the center of our consideration in this
paper is the HBEs of the EBTH which will appear in the next section.\\

As the end of this section, we would like to show the close
relations between tau function and dynamical functions $w_i$ and
$\tilde w_i$ from Sato equation. Calculate the residue of
eq.\eqref{bn1}, it implies
\begin{eqnarray}
\label{residue}
 \d_{\alpha,n}w_1 & =- \res_{\Lambda}\B_{\alpha,n},
\end{eqnarray}
where the residue is the coefficient of  term $\Lambda^{-1}$.
According to eq.\eqref{pl-1tau}, we have
\begin{equation}
P_L= 1+\frac{w_1}{\lambda}+\frac{w_2}{\lambda^2}+\cdots=1-\frac{\epsilon\partial_{N,0}\tau(x,t)}{\lambda \tau(x,t)}
+ \cdots,
\end{equation}
which implies $w_1= -\epsilon\partial_{N,0}\log\tau(x,t)$. Taking $w_1$ into eq.\eqref{residue}, we have
\begin{eqnarray}
 \epsilon\partial_{\alpha,n}\partial_{N,0}\log\tau(x,t)& = \res_{\Lambda}\B_{\alpha,n},
\end{eqnarray}
Additionally, comparing the coefficient of $\Lambda^0$ on both sides of
eq.\eqref{bn1'}, we can get
\begin{eqnarray*}
\d_{\alpha,n}\tilde w_0 = \tilde w_0
\res_{\Lambda}[(B_{\alpha,n})\Lambda^{-1}].
\end{eqnarray*}
Further considering eq.\eqref{w0tau} and replacing  $x-\frac{\epsilon}{2}$ by $x$, we can get
\begin{eqnarray}
\d_{\alpha,n}\log \tilde w_0 =\d_{\alpha,n}\log
\frac{\tau(x+\epsilon)}{\tau(x)}=
\res_{\Lambda}[(B_{\alpha,n})\Lambda^{-1}].
\end{eqnarray}
The relations between tau function and other dynamical functions
also can be given  by tedious calculation from Sato equations.

\sectionnew{the HBEs of the EBTH}
In this section we continue to  discuss the fundamental properties
of the tau function, i.e., the Hirota bilinear equations. So we
introduce the following vertex operators
\begin{eqnarray*}
\Gamma^{\pm a} :&=&\exp\left({\pm  \sum_{n\geq 0}\left[\sum_
{\alpha=1}^{N} \frac{\Gamma ( 2- \frac{\alpha-1}{N})}{\Gamma (n+2 -
\frac{\alpha-1}{N} )}
\frac{\lambda^{N({n+1-\frac{\alpha-1}{N}})}}{\epsilon}t_{\alpha, n}
+\frac{\lambda^{nN}}{n!}\( \log {\lambda} - \frac{1}{2}(\frac{1}{M}+\frac{1}{N})\C_n\)
\frac{t_{-M,n}}{\epsilon}\] }\right)\\
&&\times\exp\left({\mp
\frac{\epsilon}{2}\d_{-M,0}  \mp \[\lambda^{-1}\]^N_\d  }\right),\\
\Gamma^{\pm b} :
 &=&\exp\left({ \pm \sum_{n\geq 0} \left[\sum_
{\beta=-M+1}^{0}\frac{\Gamma ( 2+ \frac{\beta}{M})}{\Gamma (n+2 +
\frac{\beta}{M} )}
\frac{\lambda^{-M({n+1+\frac{\beta}{M}})}}{\epsilon}t_{\beta,
n}-\frac{\lambda^{-nM}}{n!}\(\log {\lambda}+\frac{1}{2}(\frac{1}{M}+\frac{1}{N})\C_n\)
\frac{t_{-M,n}}{\epsilon}\right] }\right) \\
&&\times \exp\left({\mp \frac{\epsilon}{2}\d_{-M,0} \mp
\[\lambda\]^M_\d }\right),
\end{eqnarray*}

where\ \
 $$\[\lambda^{-1}\]^N_\d=\sum_{n\geq 0}\sum_
{\alpha=1}^{N}\frac{\Gamma (n+1 - \frac{\alpha-1}{N} )}{N\Gamma (
2- \frac{\alpha-1}{N})}
 \epsilon
{\lambda^{-N(n+1-\frac{{\alpha-1}}{N})}
 \frac{\partial}{\partial{
t_{\alpha ,n}}}},$$
$$\[\lambda\]^M_\d=\sum_{n\geq 0}\sum_ {\beta=-M+1}^{0}\frac{\Gamma (n+1
+ \frac{\beta}{M} )}{M\Gamma ( 2+ \frac{\beta}{M})}
 \epsilon
\lambda^{M(n+1+\frac{{\beta}}{M})}
 \frac{\partial}{\partial
t_{\beta ,n}}.$$ We can see that the coefficients of the vertex
operators  $\Gamma^{\pm a} \otimes \Gamma^{\mp a}$ and
$\Gamma^{\pm b} \otimes \Gamma^{\mp b }$  are multi-valued
function because of the logarithmic terms $\log \lambda$. There
are monodromy factors $M^a$ and $M^b$ respectively as following
between two different ones in adjacent
 branches around $\lambda=\infty$
\begin{equation} M^{a}= \exp \left\{ \pm \frac{2\pi i}{\epsilon} \sum_{n\geq 0}
\frac{\lambda^{nN}}{n!} ( t_{-M,n} \otimes 1 - 1\otimes t_{-M,n})
\right\},\end{equation}
\begin{equation}
 M^{b}= \exp \left\{ \pm \frac{2\pi i}{\epsilon}
\sum_{n\geq 0} \frac{\lambda^{-nM}}{n!} ( t_{-M,n} \otimes 1 -
1\otimes t_{-M,n}) \right\}.
\end{equation}
In order to offset the complication we need to generalize the
concept of vertex operators which leads it to be not scalar-valued
any more. So we introduce the following vertex operators
\begin{equation}\Gamma^{\delta}_{a} = \exp\( -\sum_{n>0}\frac{\lambda^{nN}}{\epsilon
n!}(\epsilon\d_x)t_{-M,n}\) \exp(x\d_{-M,0}),\end{equation}
\begin{equation}\Gamma^{\delta}_{b} = \exp\( -\sum_{n>0}\frac{\lambda^{-nM}}{\epsilon
n!}(\epsilon\d_x)t_{-M,n}\) \exp(x\d_{-M,0}).\end{equation}
 Then \begin{equation}
 \label{double delta a} \Gamma^{\delta
\#}_{a}\otimes \Gamma^{\delta}_{a} = \exp(x\d_{-M,0})\exp\(
\sum_{n>0}\frac{\lambda^{nN}}{\epsilon
n!}(\epsilon\d_x)(t_{-M,n}-t'_{-M,n}) \) \exp(x\d'_{-M,0}),
\end{equation}
\begin{equation}
\label{double delta b} \ \ \ \Gamma^{\delta \#}_{b}\otimes
\Gamma^{\delta}_{b} = \exp(x\d_{-M,0})\exp\(
\sum_{n>0}\frac{\lambda^{-nM}}{\epsilon
n!}(\epsilon\d_x)(t_{-M,n}-t'_{-M,n}) \) \exp(x\d'_{-M,0}).
\end{equation}
After computation we get
\begin{eqnarray*} && \(\Gamma^{\delta \#}_{a}\otimes \Gamma^\delta_{a} \) M^{a} =
\exp \left\{ \pm \frac{2\pi i}{\epsilon} \sum_{n> 0}
\frac{\lambda^{nN}}{n!} ( t_{-M,n} - t'_{-M,n})
\right\}\\
&& \exp\(  \pm \frac{2\pi i}{\epsilon} (( t_{-M,0}+x) -(
t'_{-M,0}+x+ \sum_{n> 0} \frac{\lambda^{nN}}{n!} ( t_{-M,n} -
t'_{-M,n})) \) \(\Gamma^{\delta \#}_{a}\otimes \Gamma^\delta _{a}\)
\\&=& \exp\({\pm \frac{2\pi i}{\epsilon}(t_{-M,0}-t'_{-M,0})}\)
\(\Gamma^{\delta \#}_{a}\otimes \Gamma^\delta _{a}\),
\end{eqnarray*}
\begin{eqnarray*}
&& \(\Gamma^{\delta \#}_{b}\otimes \Gamma^\delta_{ b} \) M^{b} =
\exp \left\{ \pm \frac{2\pi i}{\epsilon} \sum_{n> 0}
\frac{\lambda^{-nM}}{n!} ( t_{-M,n} - t'_{-M,n})
\right\}\\
&& \exp\(  \pm \frac{2\pi i}{\epsilon} (( t_{-M,0}+x) -(
t'_{-M,0}+x+ \sum_{n> 0} \frac{\lambda^{-nM}}{n!} ( t_{-M,n} -
t'_{-M,n})) \)\(\Gamma^{\delta \#}_{b}\otimes \Gamma^\delta_{b} \)
\\&=& \exp\({\pm \frac{2\pi i}{\epsilon}(t_{-M,0}-t'_{-M,0})}\)
\(\Gamma^{\delta \#}_{b}\otimes \Gamma^\delta_{b} \).
\end{eqnarray*}
Thus when $t_{-M,0}-t'_{-M,0} \in \Z\epsilon $, $\(\Gamma^{\delta
\#}_{a}\otimes \Gamma^{\delta}_{a}\) \( \Gamma^{a}\otimes
\Gamma^{-a}\) \mbox{and}\(\Gamma^{\delta \#}_{b}\otimes
\Gamma^{\delta}_{b }\)\(\Gamma^{-b}\otimes\Gamma^{b}\)$ are all
single-valued near $\lambda=\infty$.

 We will say that $\tau$ satisfies the {\bf HBEs of the EBTH}
 if
\begin{equation} \label{HBE} \res_{{\rm{\lambda}}}
 \left(\lambda^{Nr-1}\(\Gamma^{\delta
\#}_{a}\otimes \Gamma^{\delta}_{a}\) \( \Gamma^{a}\otimes
\Gamma^{-a}\right) -\lambda^{-Mr-1}\(\Gamma^{\delta \#}_{b}\otimes
\Gamma^{\delta}_{b }\)\(\Gamma^{-b}\otimes\Gamma^{b} \) \right)(\tau
\otimes \tau )=0
\end{equation}
computed at $t_{-M,0}-t'_{-M,0}=m\epsilon$
 for each  $m\in \Z$, $r\in \N$.
 Now we should note that the vertex operators  take
 value in algebra $A[[t]]$ whose element is like
 $\sum_{i\geq 0}c_i(x,t,\epsilon)\d^i$.

\begin{theorem}\label{t11}
Function $\tau(t,\epsilon)$  is a tau-function of the extended
bigraded Toda hierarchy at a certain spatial point if and only if it
satisfies the Hirota bilinear equations \eqref{HBE}.
\end{theorem}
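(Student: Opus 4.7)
The plan is to derive the theorem from Proposition~\ref{wave-operators} (which already establishes the equivalence between \eqref{HBE3} and $(\P_L,\P_R)$ being wave operators of the EBTH) and Theorem~\ref{tau-function} (which produces $\tau$ from any pair of wave operators) by showing that the Hirota bilinear identity \eqref{HBE3} is equivalent to the Hirota bilinear equations \eqref{HBE} once the symbols are rewritten via the tau-function formulas \eqref{pltau}--\eqref{pr-1tau}. Granting this equivalence, both implications follow immediately: a tau-function of the EBTH comes from a pair of wave operators and hence satisfies \eqref{HBE3}, which then gives \eqref{HBE}; conversely, from any $\tau$ satisfying \eqref{HBE} one defines $P_L,P_R,P_L^{-1},P_R^{-1}$ via \eqref{pltau}--\eqref{pr-1tau}, reads off \eqref{HBE3}, and invokes Proposition~\ref{wave-operators} to conclude that $\tau$ is a tau-function of the hierarchy.

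The computational heart is to substitute \eqref{pltau}--\eqref{pr-1tau} into the product $W_L(x,t,\epsilon\partial_x,\lambda)\,W_L^{-1}(x-m\epsilon,t',\epsilon\partial_x,\lambda)$ and the analogous $W_R$ product, and to split each exponential prefactor of $W_L$ and $W_R$ into a scalar piece and an operator-valued piece. The scalar pieces (the $t_{\alpha,n}$-exponentials for $\alpha\neq -M$, the $\C_n$-scalar parts of $t_{-M,n}$, and the $[\lambda^{-1}]^N_\partial,\;[\lambda]^M_\partial$ shifts that are absorbed into the arguments of $\tau$) assemble exactly into $\Gamma^a\otimes\Gamma^{-a}$ and $\Gamma^{-b}\otimes\Gamma^b$ acting on $\tau\otimes\tau$. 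The operator-valued pieces, namely $\exp(\sum_{n>0}(\lambda^{\pm nN}/n!)\,t_{-M,n}\,\partial_x)$ and the corresponding $M$-series, are identified with $\Gamma^{\delta\#}_a\otimes\Gamma^\delta_a$ and $\Gamma^{\delta\#}_b\otimes\Gamma^\delta_b$ by precisely the factorizations \eqref{double delta a}--\eqref{double delta b}. The shift $x\mapsto x-m\epsilon$ in $W_L^{-1}$ and $W_R^{-1}$ is absorbed via the $\exp(x\,\partial_{-M,0})$ factor built into $\Gamma^\delta$, using that the tau-function depends on $x$ and $t_{-M,0}$ only through the combination $t_{-M,0}+x-\epsilon/2$; this same absorption converts the hypothesis $t_{-M,0}=t'_{-M,0}$ of \eqref{HBE3} into the condition $t_{-M,0}-t'_{-M,0}=m\epsilon$ appearing in \eqref{HBE}, which is exactly the condition under which the monodromy factors $M^a,M^b$ cancel and the vertex-operator combinations are single-valued near $\lambda=\infty$. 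A short bookkeeping of the $\lambda^{\pm x/\epsilon}$-shifts then matches $\lambda^{Nr+m-1}$ and $\lambda^{-Mr+m-1}$ in \eqref{HBE3} with $\lambda^{Nr-1}$ and $\lambda^{-Mr-1}$ in \eqref{HBE}.

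The main obstacle will be keeping careful track of the non-commutativity between the $\epsilon\partial_x$ hidden inside the exponentials and the explicit $x$-dependence of the shifted symbols $P_L^{-1}(x-m\epsilon,t',\lambda)$ and $P_R^{-1}(x-m\epsilon,t',\lambda)$. One must verify that the operator-valued exponential coming from $W_L(x,t,\epsilon\partial_x,\lambda)$ and the one coming from $W_L^{-1}(x-m\epsilon,t',\epsilon\partial_x,\lambda)$ combine, without reordering anomalies, into precisely $\Gamma^{\delta\#}_a\otimes\Gamma^\delta_a$ acting on $\tau(x,t)\otimes\tau(x,t')$, and analogously for the $W_R$ side. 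Once this verification is in place on each side of \eqref{HBE3}, the equivalence with \eqref{HBE} is immediate and the theorem follows.
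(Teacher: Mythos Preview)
Your proposal is correct and follows essentially the same route as the paper: reduce the theorem to the equivalence of \eqref{HBE} with \eqref{HBE3} and then appeal to Proposition~\ref{wave-operators}. The paper organizes the core computation slightly differently, by first establishing the four factorwise identities \eqref{vertex computation1}--\eqref{vertex computation4} (each vertex operator acting on a single copy of $\tau$ equals a $W$-symbol times a $\tau$-factor and a power of $\lambda$) and only then multiplying the two tensor factors together; this sidesteps most of the ordering worries you flag, since each identity is a computation in one tensor slot. Your decomposition of the exponentials into scalar and $\epsilon\partial_x$-valued pieces, the absorption of the $x\mapsto x-m\epsilon$ shift via $\exp(x\,\partial_{-M,0})$, and the bookkeeping converting $\lambda^{Nr+m-1}$ into $\lambda^{Nr-1}$ are exactly the mechanisms the paper uses inside those four identities, so the two arguments coincide in substance.
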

{\bf Proof.} Note that the  $\tau(t)$ now is independent of
variable $x$ because the $x$ takes a fixed value, fox example
$x=x_0$(constant). However, in the following proof, $x$ will appear
in the $\tau(t)$  due to the action of vertex operator on $\tau(t)$. For
example, $e^{x\partial_{-M,0}}\tau(t)$ = $\tau(t_{-M,0}+x,\bar{t})$
where  $\bar{t}$ is  just as the definition in the proof of the existence of tau function.

 We just need  to prove that the HBEs
are equivalent to the right side in Proposition
\ref{wave-operators}. By a straightforward computation we can get
the following four identities {\allowdisplaybreaks}
\begin{eqnarray}\label{vertex computation1}
\Gamma^{\delta \#}_a \Gamma^{a}\tau & =& \tau(t_{-M,0}+x-\epsilon/2,\bar{t})
\lambda^{\ t_{-M,0}/\epsilon} W_L(x,t,\epsilon \d_x,\lambda )\lambda^{x/\epsilon},
\\ \label{vertex computation2}
 \Gamma^{\delta}_a \Gamma^{-a}\tau  & =&
\lambda^{-t_{-M,0}/\epsilon} \lambda^{ -x/\epsilon}
W_L^{-1}(x,t,\epsilon \d_x,\lambda)\ \tau(x+t_{-M,0}+\epsilon/2,\bar{t}), \\\label{vertex
computation3}
 \Gamma^{\delta \#}_b \Gamma^{-b}\tau  & =&
\tau(x+t_{-M,0}-\epsilon/2,\bar{t}) \lambda^{t_{-M,0}/\epsilon} W_R(x,t,\epsilon \d_x,\lambda
)\lambda^{ x/\epsilon}, \\\label{vertex computation4}
 \Gamma^{\delta}_b \Gamma^{b}\tau & = &\lambda^{-t_{-M,0}/\epsilon} \lambda^{
 -x/\epsilon} W_R^{-1}(x,t,\epsilon \d_x,\lambda)\
\tau(x+t_{-M,0}+\epsilon/2,\bar{t}) .
\end{eqnarray}
Here $\bar t$ is denoted as all the other time variables except $t_{-M,0}$.  We should note that we take the left side of
eq.\eqref{vertex computation1}-eq.\eqref{vertex computation4} not as
functions but operators involving $e^{\partial_x}$. We should pay
more attention to the different operations of the operators $\d_x$,
$\partial_{\alpha,n}$ and $\partial_{\beta,m}$, for example,
\begin{eqnarray}\label{translation1}
&&\exp\(\sum_{n>0}\frac{\lambda^{nN}}{
n!}t_{-M,n}\d_x \)\tau(x+t_{_M,0},\bar{t},\lambda)\nonumber \\
&&=\tau(x+t_{-M,0}+\sum_{n>0}\frac{\lambda^{nN}}{ n!}
 t_{-M,n}, \bar{t},\lambda)
\exp\(\sum_{n>0}\frac{\lambda^{nN}}{
n!}t_{-M,n}\d_x \),
\end{eqnarray}
\begin{eqnarray}\label{translation2}
 \exp\(x\d_{-M,0}\) \exp \left\{-\frac{\epsilon}{2}\d_{-M,0}  -
[\lambda^{-1}]^N_\d \right\} \tau(t;\epsilon)= \tau(t_{-M,0}+x-\epsilon/2,\bar t-[\lambda^{-1}]^N),
\end{eqnarray}
\begin{eqnarray}\label{translation3}
&& \exp\left\{(\log \lambda ) \frac{t_{-M,0}+x}{\epsilon}
 \right\}
\exp\(\sum_{n>0}\frac{\lambda^{nN}}{\epsilon\
n!}(\epsilon\d_x)t_{-M,n}\)  \nonumber \\
 &=&\exp\(\sum_{n>0}\frac{\lambda^{nN}}{\epsilon\ n!}
(\epsilon\d_x)t_{-M,n}\)
 \exp\left\{(\log \lambda )\frac{t_{-M,0}+x-\sum\limits_{n>0}\dfrac{\lambda^{nN}}{\epsilon n!}\epsilon t_{-M,n}  }{\epsilon}
 \right\}.
\end{eqnarray}
The above formula shows the relationship between $e^{\partial_x}$ and $\tau(t)$ is a product of operators, but
the relationship between $e^{\partial_{\alpha,n}}$ (or $e^{\partial_{\beta,m}}$) and $\tau(t)$  is a action of the
former on the latter.

For simplifying the proof, we first introduce following operators,
\begin{equation}\label{D}
D=\sum\limits_{n\geq 0}\sum\limits_{\alpha=1}^{N}\dfrac{\Gamma(2-\frac{\alpha-1}{N})}{
\Gamma(n+2-\frac{\alpha-1}{N})} \dfrac{\lambda^{N(n+1-\frac{\alpha-1}{N})}}{\epsilon}t_{\alpha,n},
\end{equation}
\begin{equation}\label{E}
E=\sum\limits_{n> 0}\dfrac{\lambda^{nN}}{n!}(\log \lambda-\frac{1}{2}(\frac{1}{M}+\frac{1}{N})\C_n)\frac{t_{-M,n}}{\epsilon},
\end{equation}
then, with the help of above identities eq.\eqref{translation1} and eq.\eqref{translation2}, the left hand side of identity eq.\eqref{vertex computation1} can be expressed by
{\allowdisplaybreaks
\begin{eqnarray*}
\Gamma^{\delta \#}_a \Gamma^{a}\tau &=&
 \exp\(x\d_{-M,0}\)
\exp\(\sum_{n>0}\frac{\lambda^{nN}}{\epsilon
n!}(\epsilon\d_x)t_{-M,n}\)\times \\
&& \exp \left\{\sum_{n\geq 0} \left[\sum_
{\alpha=1}^{N}\frac{\Gamma ( 2- \frac{\alpha-1}{N})}{\Gamma (n+2 -
\frac{\alpha-1}{N} )}
\frac{\lambda^{N({n+1-\frac{\alpha-1}{N}})}}{\epsilon}t_{\alpha,
n} +\frac{\lambda^{nN}}{n!}\( \log {\lambda} - \frac{1}{2}(\frac{1}{M}+\frac{1}{N})\C_n\)
\frac{t_{-M,n}}{\epsilon}\right]  \right\}\\
&& \times \exp \left\{-\frac{\epsilon}{2}\d_{-M,0}  -
[\lambda^{-1}]^N_\d \right\} \tau(t;\epsilon)\\
&&=\exp\{D\}\exp\{E \} \exp\{x\partial_{-M,0} \}\exp\{\sum\limits_{n>0} \dfrac{\lambda^{nN}}{\epsilon\ n!}
(\epsilon \partial_x) t_{-M,n} \}\\
&&
\exp\{(\log\lambda)\frac{t_{-M,0}}{\epsilon} \} \tau(t_{-M,0}-\frac{\epsilon}{2},\bar{t}-[\lambda^{-1}]^N )\\
&&=\tau(t_{-M,0}+x-\frac{\epsilon}{2},\bar{t}-[\lambda^{-1}]^N )\exp\{D\}\exp\{E \}
\exp\{(\log\lambda)\frac{t_{-M,0}+x}{\epsilon} \}\\
&&
\exp\{\sum\limits_{n>0} \dfrac{\lambda^{nN}}{\epsilon\ n!}
(\epsilon \partial_x) t_{-M,n} \}.
\end{eqnarray*}
Taking eq.\eqref{translation3} into it,  then substituting $D$ and $E$ by eq.\eqref{D} and eq.\eqref{E},

\begin{eqnarray*}
\Gamma^{\delta \#}_a \Gamma^{a}\tau
&=&\tau(t_{-M,0}+x-\epsilon/2,\bar t)P_L(x,t,\lambda)
\exp\(\sum_{n>0}\frac{\lambda^{nN}}{\epsilon n!}
(\epsilon\d_x)t_{-M,n}\) \\
&& \exp \left\{ \left[\sum_{n\geq 0}
 \sum_
{\alpha=1}^{N}\frac{\Gamma ( 2- \frac{\alpha-1}{N})}{\Gamma (n+2 -
\frac{\alpha-1}{N} )}
\frac{\lambda^{N({n+1-\frac{\alpha-1}{N}})}}{\epsilon}t_{\alpha, n}
+ \sum_{n>0}
\frac{\lambda^{nN}}{n!}\( \log {\lambda} - \frac{1}{2}(\frac{1}{M}+\frac{1}{N})\C_n\) \frac{t_{-M,n}}{\epsilon}\right] \right\}\\
&&\exp\left\{(\log \lambda)
\[t_{-M,0}-\(\sum_{n>0}\frac{\lambda^{nN}}{n!}t_{-M,n}\)+x\]/\epsilon
 \right\}\\
 &=&
\tau(t_{-M,0}+x-\epsilon/2,\bar t)P_L(x,t,\lambda) \times \\&& \exp \left\{
\left[\sum_{n\geq 0}
 \sum_
{\alpha=1}^{N}\frac{\Gamma ( 2- \frac{\alpha-1}{N})}{\Gamma (n+2 -
\frac{\alpha-1}{N} )}
\frac{\lambda^{N({n+1-\frac{\alpha-1}{N}})}}{\epsilon}t_{\alpha,
n} + \sum_{n>0}\frac{ \lambda^{nN}}{n!}
(\epsilon\d_x-\frac{1}{2}(\frac{1}{M}+\frac{1}{N})\C_n)\frac{t_{-M,n}}{\epsilon}
\right]\right\}\\
&&\exp\left\{(\log \lambda )
(t_{-M,0}+x)/\epsilon
 \right\}\\
 & =&
\tau(t_{-M,0}+x-\epsilon/2,\bar t) \lambda^{\ t_{-M,0}/\epsilon} W_L(x,t,\epsilon \d_x,\lambda
)\lambda^{x/\epsilon}.
 \end{eqnarray*}
 The other
three identities are derived in
 similar way which will be shown in detail in the appendix. \\
By substituting four equations eq.\eqref{vertex
computation1}-eq.\eqref{vertex computation4} into the HBEs
\eqref{HBE} we find:
\begin{eqnarray*}
&&\res_{\lambda }
 \left\{
\lambda^{Nr-1}\Gamma^{\delta\#}\Gamma^{a}\tau \otimes
\Gamma^{\delta}\Gamma^{-a}\tau -\lambda^{-Mr-1}
\Gamma^{\delta\#}\Gamma^{-b}\tau \otimes
\Gamma^{\delta}\Gamma^{b}\tau \right\} \\
= &&\res_{\lambda }
 \left\{ \tau(x-\epsilon/2,t)
\lambda^{Nr-1}\lambda^{(t_{-M,0}-t'_{-M,0})/\epsilon}
W_L(x,t,\epsilon \d_x,\lambda) W_L^{-1}(x,t',\epsilon \d_x,\lambda)\
\tau(x+\epsilon/2,t') \right. \\
 && \left.  - \tau(x-\epsilon/2,t)
\lambda^{-Mr-1}\lambda^{(t_{-M,0}-t'_{-M,0})/\epsilon}
W_R(x,t,\epsilon \d_x,\lambda) W_R^{-1}(x,t',\epsilon \d_x,\lambda)\tau(x+\epsilon/2,t')
\right\}.
\end{eqnarray*}
Note here $\tau(x-\frac{\epsilon}{2}, t)=\tau(t_{-M,0}+x-\frac{\epsilon}{2},\bar t)$ as eq.(\ref{eq34'''}).
Let $t_{-M,0}-t'_{-M,0} = m\epsilon$ and
consider that $W_L(x,t,\epsilon \d_x,\lambda), W_L^{-1}(x,t',\epsilon \d_x,\lambda)$ and
$W_R(x,t,\epsilon \d_x,\lambda), W_R^{-1}(x,t',\epsilon \d_x,\lambda)$ are all not scaled-valued
but take values in the algebra of differential operator.  Therefore the HBEs lead to
\begin{eqnarray*}
&& \res_{\lambda }
 \left\{\lambda^{m+Nr-1}
W_L(x,t_{-M,0},\bar t,\epsilon \d_x,\lambda)
W_L^{-1}(x,t_{-M,0}-m\epsilon,\bar t',\epsilon \d_x,\lambda)-\right.\\
&& \left.\lambda^{m-Mr-1} W_R(x,t_{-M,0},\bar t,\epsilon \d_x,\lambda) W_R^{-1}
(x,t_{-M,0}-m\epsilon,\bar t',\epsilon \d_x,\lambda) \right\}=0,
\end{eqnarray*}
which can also be written as
\begin{eqnarray*} && \res_{\lambda }
\left\{\lambda^{m+Nr-1} W_L(x,t_{-M,0}\bar t,\epsilon \d_x,\lambda)
W_L^{-1}(x-m\epsilon,t_{-M,0},\bar t',\epsilon \d_x,\lambda)-\right. \\
&& \left.\lambda^{m-Mr-1} W_R(x,t_{-M,0},\bar t,\epsilon \d_x,\lambda) W_R^{-1}
(x-m\epsilon,t_{-M,0},\bar t',\epsilon \d_x,\lambda) \right\}=0.
\end{eqnarray*}
 This is just
eq.\eqref{HBE3}. So the proof is finished.\qed

\sectionnew{the HBEs of BTH}
Excluding the variables of $t_{-M,n},  n\geq 1$, we obtain a Hirota bilinear
equations for the  bigraded Toda  hierarchy (BTH).
Similar to  EBTH, we introduce the following vertex operators
\begin{eqnarray*} \Gamma^{\pm c} && = \exp \left\{ \pm \left[\sum_
{\alpha=1}^{N}\frac{\Gamma ( 2- \frac{\alpha-1}{N})}{\Gamma (n+2 -
\frac{\alpha-1}{N} )}
\frac{\lambda^{N({n+1-\frac{\alpha-1}{N}})}}{\epsilon}t_{\alpha,
n}
\right]  \right\}\\
&& \times \exp \left\{\mp \frac{\epsilon}{2}\d_{-M,0}  \mp
\sum_{n\geq 0}\left[\sum_ {\alpha=1}^{N}\frac{\Gamma (n+1 -
\frac{\alpha-1}{N} )}{N\Gamma ( 2- \frac{\alpha-1}{N})}
\lambda^{-N(n+1-\frac{{\alpha-1}}{N})}
 \frac{\partial}{\partial{
t_{\alpha ,n}}}\right]  \right\},
\end{eqnarray*}

\begin{eqnarray*} \Gamma^{\pm d} && = \exp \left\{
\pm \left[\sum_ {\beta=-M+1}^{0}\frac{\Gamma ( 2+
\frac{\beta}{M})}{\Gamma (n+2 +\frac{\beta}{M} )}
\frac{\lambda^{-M({n+1+\frac{\beta}{M}})}}{\epsilon}t_{\beta, n}
\right]  \right\}\\
&& \times \exp \left\{\mp \frac{\epsilon}{2}\d_{-M,0}  \mp
\sum_{n\geq 0}\left[\sum_ {\beta=-M+1}^{0}\frac{\Gamma (n+1 +
\frac{\beta}{M} )}{M\Gamma ( 2+ \frac{\beta}{M})}
\lambda^{M(n+1+\frac{{\beta}}{M})}
 \frac{\partial}{\partial{
t_{\beta ,n}}}\right]  \right\} . \end{eqnarray*} In this case,
because there is no logarithmic term in the vertex operators, so we
need not generalize the vertex operators. Just as a result of that,
the vertex operator will take values in scalared function of
$\epsilon, t, \lambda$.
\begin{corollary}\label{c1} A non-vanishing function
$\tau(t_{-M,0};t_{\alpha,n},\dots;\epsilon)$ is a tau-function of
the  BTH ($t_{-M,n},  n\geq 1$ excluded) if and only if for
each  $m\in \Z$, $r\in \N$,
\begin{eqnarray} \res\limits_{{\rm{\lambda}}} \left\{
\lambda^{Nr+m-1} \Gamma^{c}\otimes \Gamma^{-c} - \lambda^{-Mr+m-1}
\Gamma^{-d}\otimes \Gamma^{d} \right\} (\tau \otimes \tau)=0,
\end{eqnarray} \indent when $t_{-M,0}-t'_{-M,0}=m\epsilon$.
\end{corollary}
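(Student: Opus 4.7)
The plan is to deduce the BTH identity by specializing the EBTH HBEs of Theorem~\ref{t11} to tau functions that do not depend on the extended times $t_{-M,n}$ with $n\geq 1$, which is exactly the restriction that turns EBTH into BTH. Concretely, I would set $t_{-M,n}=t'_{-M,n}=0$ for all $n\geq 1$ in equation \eqref{HBE} and track how each vertex operator simplifies under this restriction.

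Every term in the exponents of $\Gamma^{\pm a}$ and $\Gamma^{\pm b}$ proportional to $t_{-M,n}$ with $n\geq 1$ drops out; in particular, the logarithmic contributions $\log\lambda$ disappear, so the monodromy factors $M^{a}$, $M^{b}$ become trivial and the surviving operators are scalar-valued Laurent series in $\lambda$. Matching the remaining exponents against the definitions in the corollary gives $\Gamma^{\pm a}|_{t_{-M,n}=0,\,n\geq 1} = \Gamma^{\pm c}$ and $\Gamma^{\pm b}|_{t_{-M,n}=0,\,n\geq 1} = \Gamma^{\pm d}$. At the same time, the auxiliary factors $\Gamma^{\delta}_{a}$ and $\Gamma^{\delta}_{b}$, whose sole role in Section~5 was to cancel the monodromy produced by $\log\lambda$, lose their $\epsilon\partial_{x}$ contributions and reduce to the pure shift $\exp(x\partial_{-M,0})$. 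Since $\tau$ is evaluated at a fixed spatial point as stipulated in Theorem~\ref{t11}, this shift can be absorbed into the argument of $\tau$, so the operators $\Gamma^{\delta\#}_{a}\otimes\Gamma^{\delta}_{a}$ and $\Gamma^{\delta\#}_{b}\otimes\Gamma^{\delta}_{b}$ act trivially on the reduced identity.

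Substituting these reductions into \eqref{HBE} immediately yields the stated scalar residue identity, with the condition $t_{-M,0}-t'_{-M,0}=m\epsilon$ inherited unchanged from the EBTH statement. The main obstacle, though essentially bookkeeping, is to confirm that once the $\log\lambda$ contributions vanish, the residue $\res_{\lambda}$ in \eqref{HBE} (a priori taken in an algebra of differential operators, as emphasized just before \eqref{HBE}) genuinely coincides with the ordinary formal residue of a scalar Laurent series in $\lambda$; once this single reduction step is checked, matching the exponents term by term closes the argument and gives both the forward and backward implications of the ``if and only if''.
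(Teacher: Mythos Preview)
Your overall strategy—specialize Theorem~\ref{t11} by setting $t_{-M,n}=t'_{-M,n}=0$ for $n\geq 1$—is exactly the route the paper intends, and the paper itself gives no more detail than stating the corollary. However, there is a concrete error in your bookkeeping that, as written, would prevent you from landing on the correct identity.

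You assert that after killing the $n\geq 1$ terms ``the logarithmic contributions $\log\lambda$ disappear'' and hence $\Gamma^{\pm a}\big|_{t_{-M,n}=0,\,n\geq 1}=\Gamma^{\pm c}$. This is not true: in the first exponential of $\Gamma^{\pm a}$ the sum over $n$ starts at $n=0$, and the $n=0$ summand is $\frac{\lambda^{0}}{0!}\bigl(\log\lambda-\tfrac12(\tfrac1M+\tfrac1N)\C_0\bigr)\frac{t_{-M,0}}{\epsilon}=(\log\lambda)\,t_{-M,0}/\epsilon$, which survives the restriction. Thus the correct reduction is $\Gamma^{\pm a}\big|_{\,\cdot\,}=\lambda^{\pm t_{-M,0}/\epsilon}\,\Gamma^{\pm c}$ and likewise $\Gamma^{\pm b}\big|_{\,\cdot\,}=\lambda^{\mp t_{-M,0}/\epsilon}\,\Gamma^{\pm d}$. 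In the tensor products this produces the extra factor $\lambda^{(t_{-M,0}-t'_{-M,0})/\epsilon}=\lambda^{m}$ in both pieces of \eqref{HBE}, which is precisely what accounts for the powers $\lambda^{Nr+m-1}$ and $\lambda^{-Mr+m-1}$ appearing in the corollary versus the $\lambda^{Nr-1}$, $\lambda^{-Mr-1}$ in \eqref{HBE}. Without this term your ``immediate'' substitution yields the wrong exponents. (Your observation that the monodromy becomes trivial is still fine: with $t_{-M,n}=t'_{-M,n}$ for $n\geq 1$ and $t_{-M,0}-t'_{-M,0}=m\epsilon$, one has $M^{a}=M^{b}=e^{\pm 2\pi i m}=1$.) Once you keep this surviving $n=0$ logarithmic piece and note that $\Gamma^{\delta}_{a},\Gamma^{\delta}_{b}$ collapse to $\exp(x\partial_{-M,0})$ as you say, the rest of your argument goes through.
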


\sectionnew{Conclusions and Discussions}
In previous sections, we have succeeded in extending  Sato theory to
the EBTH. Starting from the revised definition of the  Lax
equations,  we have given Sato equations, wave operators, Hirota
bilinear identities related to the wave operators, the existence of
the tau function  and its  important properties including Fay-like
identities and Hirota bilinear equations.  In particular, this
hierarchy deserves further studying and exploring because of its
potential applications in topological quantum fields  and
Gromov-Witten theory. Our main support of this statement currently
is that ETH describes the Gromov-Witten invariants of $CP^1$. We
would like to point out that our Lax equations are revised from
Carlet's result \cite{C}, but our proof on the existence of the tau functions
is more transparent than it.

 Our future work will contain the applications of this kind of HBEs in the
topological fields theory and string theory, the virasoro constraint
of EBTH from the point of string equation and ASvM formula.

{\bf {Acknowledgements:}}
  {\small    This work is supported by the NSF of China under Grant No. 10671187. It is also supported by
Program for NCET under Grant No. NECT-08-0515. We are  thankful to Professor Youjin Zhang, Dr. Siqi Liu (Tsinghua University,
    China), Prof. Weizhong Zhao (CNU, China)  for helpful discussions and Dr. Todor E. Milanov
    (North Carolina State University, USA) for  his valuable suggestions by email. The author Chuanzhong Li
also thanks Professor Yuji Kodama and Professor Hsian-Hua Tseng in Ohio State University for their  useful discussion  in their colloquium.
 We also thank Professor Li Yishen
(USTC, China) for long-term encouragements and supports.  We thank anonymous referee for their
valuable suggestions and pertinent criticisms.}
\newpage{}
\vskip20pt

\sectionnew{Appendix}

 {\bf Proof of the identities eq.\eqref{vertex
computation2}-eq.\eqref{vertex computation4}}:

By a similar calculation of eq.\eqref{vertex computation1}, we have
 {\allowdisplaybreaks
 \begin{eqnarray*}
&&\Gamma^{\delta }_a \Gamma^{-a}\tau\\
 &=&
\exp\(-\sum_{n>0}\frac{\lambda^{nN}}{\epsilon
n!}(\epsilon\d_x)t_{-M,n}\)\times  \exp\(x\d_{-M,0}\)\\
&& \exp \left\{-\sum_{n\geq 0} \left[\sum_
{\alpha=1}^{N}\frac{\Gamma ( 2- \frac{\alpha-1}{N})}{\Gamma (n+2 -
\frac{\alpha-1}{N} )}
\frac{\lambda^{N({n+1-\frac{\alpha-1}{N}})}}{\epsilon}t_{\alpha, n}
+\frac{\lambda^{nN}}{n!}\( \log {\lambda} - \frac{1}{2}(\frac{1}{M}+\frac{1}{N})\C_n\)
\frac{t_{-M,n}}{\epsilon}\right]  \right\}\\
 &&
 \times \exp \left\{\frac{\epsilon}{2}\d_{-M,0}+[\lambda^{-1}]^N_\d \right\}
 \tau(t)\\
&=& \exp \left\{\left[-\sum_{n\geq 0}
 \sum_{\alpha=1}^{N}\frac{\Gamma ( 2- \frac{\alpha-1}{N})}{\Gamma
(n+2 - \frac{\alpha-1}{N} )}
\frac{\lambda^{N({n+1-\frac{\alpha-1}{N}})}}{\epsilon}t_{\alpha, n}
-\sum_{n>0} \frac{\lambda^{nN}}{n!}\( \epsilon\d_x -
 \frac{1}{2}(\frac{1}{M}+\frac{1}{N})\C_n\)
\frac{t_{-M,n}}{\epsilon}\right] \right\}\\
&& \exp\(-\sum_{n>0}\frac{\lambda^{nN}}{\epsilon n!}
\frac{t_{-M,n}}{\epsilon}\log {\lambda}\)\exp\left\{-(\log \lambda
)(t_{-M,0}+x)/\epsilon
 \right\}\tau(x+t_{-M,0}+\epsilon/2,\bar{t}+[\lambda^{-1}]^N)\\
& =&\lambda ^{-\frac{t_{-M,0}+x}{\epsilon}} \exp
\left\{\left[-\sum_{n\geq 0}
 \sum_{\alpha=1}^{N}\frac{\Gamma ( 2- \frac{\alpha-1}{N})}{\Gamma
(n+2 - \frac{\alpha-1}{N} )}
\frac{\lambda^{N({n+1-\frac{\alpha-1}{N}})}}{\epsilon}t_{\alpha, n}
-\sum_{n>0} \frac{\lambda^{nN}}{n!}\( \epsilon\d_x - \frac{1}{2}(\frac{1}{M}+\frac{1}{N})\C_n\)
\right.\right.\\&&
\left.\left.
\frac{t_{-M,n}}{\epsilon}\right] \right\}\tau(x+t_{-M,0}+\epsilon/2,\bar{t}+[\lambda^{-1}]^N)\\
&=&\lambda ^{-\frac{t_{-M,0}+x}{\epsilon}} \exp
\left\{\left[-\sum_{n\geq 0}
 \sum_{\alpha=1}^{N}\frac{\Gamma ( 2- \frac{\alpha-1}{N})}{\Gamma
(n+2 - \frac{\alpha-1}{N} )}
\frac{\lambda^{N({n+1-\frac{\alpha-1}{N}})}}{\epsilon}t_{\alpha, n}
-\sum_{n>0} \frac{\lambda^{nN}}{n!}\( \epsilon\d_x - \frac{1}{2}(\frac{1}{M}+\frac{1}{N})\C_n\)\right.\right.\\&&
\left.\left.
\frac{t_{-M,n}}{\epsilon}\right] \right\}
P_L^{-1}(x,t,\lambda)\tau(x+t_{-M,0}+\epsilon/2,\bar{t})\\
&=&
 \lambda ^{-\frac{t_{-M,0}+x}{\epsilon}}
W_L^{-1}(x,t,\epsilon\d_x,\lambda )\tau(x+t_{-M,0}+\epsilon/2,\bar{t}).
\end{eqnarray*}
So eq.\eqref{vertex computation2} is proved.

For the convenience of the proof of eq.\eqref{vertex computation3},
we introduce an identity

\begin{eqnarray}\label{translation4}
&&\exp\{x\partial_{-M,0} \}
\exp\{\sum\limits_{n>0}\dfrac{\lambda^{-nM}}{\epsilon\ n!} (\epsilon \partial_x)t_{-M,n} \}
\exp\{  (\log \lambda ) \dfrac{t_{-M,0}}{\epsilon} \} \nonumber\\
&&=\exp\{\sum\limits_{n>0}\dfrac{\lambda^{-nM}}{\epsilon\ n!} (\epsilon \partial_x)t_{-M,n} \}
\exp\{  (\log \lambda ) \dfrac{t_{-M,0}+x-\sum\limits_{n>0}\dfrac{\lambda^{-nM}}{\epsilon \ n!} t_{-M,n}}{\epsilon} \},
\end{eqnarray}

and define two operators
\begin{eqnarray}
&&F=\sum_{n\geq 0}
\left[\sum_ {\beta=-M+1}^{0}\frac{\Gamma ( 2+
\frac{\beta}{M})}{\Gamma (n+2 +\frac{\beta}{M} )}
\frac{\lambda^{-M({n+1+\frac{\beta}{M}})}}{\epsilon}t_{\beta, n}
\right],  \label{F}\\
&&G=\sum_{n> 0}
\frac{\lambda^{-nM}}{n!}\(- \log {\lambda} - \frac{1}{2}(\frac{1}{M}+\frac{1}{N})\C_n\)
\frac{t_{-M,n}}{\epsilon}   \label{G},
 \end{eqnarray}
then
\begin{eqnarray*}&&\Gamma^{\delta \#}_b \Gamma^{-b}\tau\\
& =&
 \exp\(x\d_{-M,0}\)\times\exp\(\sum_{n>0}\frac{\lambda^{-nM}}{\epsilon
n!}(\epsilon\d_x)t_{-M,n}\) \\
&&
\exp \left\{-\sum_{n\geq 0}
\left[\sum_ {\beta=-M+1}^{0}\frac{\Gamma ( 2+
\frac{\beta}{M})}{\Gamma (n+2 +\frac{\beta}{M} )}
\frac{\lambda^{-M({n+1+\frac{\beta}{M}})}}{\epsilon}t_{\beta, n}
+\frac{\lambda^{-nM}}{n!}\(- \log {\lambda} - \frac{1}{2}(\frac{1}{M}+\frac{1}{N})\C_n\)
\frac{t_{-M,n}}{\epsilon}\right]  \right\} \\
&&
 \times \exp \left\{\frac{\epsilon}{2}\d_{-M,0}+[\lambda]^M_\d \right\} \tau(t;\epsilon) \\
&&=\tau(t_{M,0}+x+\frac{\epsilon}{2},\bar{t}+[\lambda]^M)\exp\{-F\}\exp\{-G\}\\
 && \exp\{x\partial_{-M,0} \}
\exp\{\sum\limits_{n>0}\dfrac{\lambda^{-nM}}{\epsilon\ n!} (\epsilon \partial_x)t_{-M,n} \}
\exp\{  (\log \lambda ) \dfrac{t_{-M,0}}{\epsilon} \}.
\end{eqnarray*}
Using identity eq.\eqref{translation4}, then substituting $F$ and
$G$ given by  eq.\eqref{F} and eq.\eqref{G}, we have
\begin{eqnarray*}
&&\Gamma^{\delta \#}_b \Gamma^{-b}\tau= \tau(x+t_{-M,0}+\epsilon/2,\bar{t}+[\lambda]^M) \\
&&\exp \left\{\left[-\sum_{n\geq 0}
 \sum_{\beta=-M+1}^{0}\frac{\Gamma ( 2+ \frac{\beta}{M})}{\Gamma
(n+2 + \frac{\beta}{M} )}
\frac{\lambda^{-M({n+1+\frac{\beta}{M}})}}{\epsilon}t_{\beta, n}
-\sum_{n>0} \frac{\lambda^{-nM}}{n!}\( -\epsilon\d_x -
 \frac{1}{2}(\frac{1}{M}+\frac{1}{N})\C_n\)
\frac{t_{-M,n}}{\epsilon}\right] \right\} \\
&& \exp\(\sum_{n>0}\frac{\lambda^{-nM}}{\epsilon n!}
\frac{t_{-M,n}}{\epsilon}\log {\lambda}\)\exp\left\{(\log \lambda
)(t_{-M,0}+x-\sum_{n>0}\frac{\lambda^{-nM}}{\epsilon n!} \epsilon
t_{-M,n})/\epsilon
 \right\} \\
&
=&\tau(x+t_{M,0}-\epsilon/2,\bar{t})P_R(x,t,\lambda) \\
&&\exp \left\{-\sum_{n\geq 0}
 \sum_{\beta=-M+1}^{0}\frac{\Gamma ( 2+\frac{\beta}{M})}{\Gamma
(n+2 +\frac{\beta}{M} )}
\frac{\lambda^{-M({n+1+\frac{\beta}{M}})}}{\epsilon}t_{\beta, n}
+\sum_{n>0} \frac{\lambda^{-nM}}{n!}\( \epsilon\d_x +\frac{1}{2}(\frac{1}{M}+\frac{1}{N})\C_n\)
\frac{t_{-M,n}}{\epsilon} \right\}\\
&& \lambda
^{\frac{t_{-M,0}+x}{\epsilon}} \\
& =& \tau(x+t_{M,0}-\epsilon/2,\bar{t})W_R(x,t, \epsilon \d_x
,\lambda )\lambda ^{\frac{t_{-M,0}+x}{\epsilon}},
\end{eqnarray*}
which is eq.\eqref{vertex computation3}.

Similarly, to prove eq.\eqref{vertex computation4}, we have
\begin{eqnarray*}
&&
\Gamma^{\delta }_b \Gamma^{b}\tau \\
&=& \exp\(-\sum_{n>0}\frac{\lambda^{-nM}}{\epsilon
n!}(\epsilon\d_x)t_{-M,n}\)\times  \exp\(x\d_{-M,0}\)\\
&& \exp \left\{\sum_{n\geq 0} \left[\sum_
{\beta=-M+1}^{0}\frac{\Gamma ( 2+ \frac{\beta}{M})}{\Gamma (n+2
+\frac{\beta}{M} )}
\frac{\lambda^{-M({n+1+\frac{\beta}{M}})}}{\epsilon}t_{\beta, n}
+\frac{\lambda^{-nM}}{n!}\( -\log {\lambda} - \frac{1}{2}(\frac{1}{M}+\frac{1}{N})\C_n\)
\frac{t_{-M,n}}{\epsilon}\right]  \right\}\\
&&
 \times \exp \left\{-\frac{\epsilon}{2}\d_{-M,0}-[\lambda]^M_\d \right\} \tau(t)\\
&=& \exp \left\{\left[\sum_{n\geq 0}
 \sum_{\beta=-M+1}^{0}\frac{\Gamma ( 2+ \frac{\beta}{M})}{\Gamma
(n+2 + \frac{\beta}{M} )}
\frac{\lambda^{-M({n+1+\frac{\beta}{M}})}}{\epsilon}t_{\beta, n}
+\sum_{n>0} \frac{\lambda^{-nM}}{n!}\( -\epsilon\d_x -
 \frac{1}{2}(\frac{1}{M}+\frac{1}{N})\C_n\)
\frac{t_{-M,n}}{\epsilon}\right] \right\}\\
&& \exp\(-\sum_{n>0}\frac{\lambda^{-nM}}{\epsilon n!}
\frac{t_{-M,n}}{\epsilon}\log {\lambda}\)\exp\left\{(-\log \lambda
)(t_{-M,0}+x)/\epsilon
 \right\}\tau(x+t_{-M,0}-\epsilon/2,\bar{t}-[\lambda]^M)\\
&=&\lambda ^{-\frac{t_{-M,0}+x}{\epsilon}} \exp \left\{\sum_{n\geq
0}
 \sum_{\beta=-M+1}^{0}\frac{\Gamma ( 2+\frac{\beta}{M})}{\Gamma
(n+2 +\frac{\beta}{M} )}
\frac{\lambda^{-M({n+1+\frac{\beta}{M}})}}{\epsilon}t_{\beta, n}
+\sum_{n>0} \frac{\lambda^{-nM}}{n!}\(- \epsilon\d_x - \frac{1}{2}(\frac{1}{M}+\frac{1}{N})\C_n\)\right.\\&&
\left.
\frac{t_{-M,n}}{\epsilon} \right\} \tau(x+t_{-M,0}-\epsilon/2,\bar{t}-[\lambda^{-1}]^M)\\
&=&\lambda ^{-\frac{t_{-M,0}+x}{\epsilon}} \exp \left\{\sum_{n\geq
0} \sum_{\beta=-M+1}^{0}\frac{\Gamma ( 2+ \frac{\beta}{M})}{\Gamma
(n+2 + \frac{\beta}{N} )}
\frac{\lambda^{-M({n+1+\frac{\beta}{M}})}}{\epsilon}t_{\beta, n} +
\sum_{n> 0}\frac{\lambda^{-nM}}{n!}\( -\epsilon\d_x - \frac{1}{2}(\frac{1}{M}+\frac{1}{N})\C_n\)\right.\\&&
\left.
\frac{t_{-M,n}}{\epsilon} \right\} P_R^{-1}(x,\t,\lambda)\tau(x+t_{-M,0}+\epsilon/2,\bar{t})\\
&=&
 \lambda ^{-\frac{t_{-M,0}+x}{\epsilon}}
W_R^{-1}(x,\t,\epsilon \d_x ,\lambda
)\tau(x+t_{-M,0}+\epsilon/2,\bar{t}).
\end{eqnarray*} \qed



\end{document}